\begin{document}

\begin{center}{\Large \textbf{
Feynman diagrams and $\Omega-$deformed M-theory
}}\end{center}

\begin{center}
J. Oh\textsuperscript{1*},
Y. Zhou\textsuperscript{2},
\end{center}

\begin{center}
{\bf 1} Department of Physics,
  University of California, Berkeley, CA 94720, U.S.A.
\\
{\bf 2} Perimeter Institute for Theoretical Physics, 31 Caroline St. N., Waterloo, ON N2L 2Y5, Canada
\\
* jihwan.oh@maths.ox.ac.uk
\end{center}

\begin{center}
\today
\end{center}


\section*{Abstract}
{\bf
We derive the simplest commutation relations of operator algebras associated to M2 branes and an M5 brane in the $\Omega$-deformed M-theory, which is a natural set-up for Twisted holography. Feynman diagram 1-loop computations in the twisted-holographic dual side reproduce the same algebraic relations.
}

\vspace{10pt}
\noindent\rule{\textwidth}{1pt}
\tableofcontents\thispagestyle{fancy}
\noindent\rule{\textwidth}{1pt}
\vspace{10pt}

\newcommand{\secref}[1]{\S\ref{#1}}
\newcommand{\figref}[1]{Figure~\ref{#1}}
\newcommand{\appref}[1]{Appendix~\ref{#1}}
\newcommand{\tabref}[1]{Table~\ref{#1}}
\def\ie{\begin{equation}\begin{aligned}}
\def\fe{\end{aligned}\end{equation}}
\def\Id{{\mathbf{I}}} 
\def\lra{\leftrightarrow}
\def\lea{\leftarrow}
\def\ria{\rightarrow}
\newcommand{\vev}[1]{{\left\langle {#1} \right\rangle}}
\newcommand{\bR}{\mathbb{R}}
\newcommand{\bZ}{\mathbb{Z}}
\newcommand{\bC}{\mathbb{C}}
\newcommand{\cG}{\mathcal{G}}
\newcommand{\cM}{\mathcal{M}}
\newcommand{\cN}{\mathcal{N}}
\newcommand{\cL}{\mathcal{L}}
\newcommand{\A}{{\alpha}}
\newcommand{\B}{{\beta}}
\newcommand{\D}{{\delta}}
\newcommand{\E}{{\epsilon}}

  \newtheorem{theorem}{Theorem}
  \newtheorem{proposition}{Proposition}
  \newtheorem{lemma}{Lemma}
  \newtheorem{corollary}{Corollary}
  \newtheorem{conjecture}{Conjecture}

\newcommand{\pa}{\partial}
\newcommand{\la}{\langle}
\newcommand{\ld}{\lambda}
\newcommand{\ra}{\rangle}
\newcommand{\cA}{\mathcal{A}}
\newcommand{\cB}{\mathcal{B}}
\newcommand{\cP}{\mathcal{P}}
\newcommand{\C}{\mathbb{C}}
\newcommand{\R}{\mathbb{R}}
\newcommand{\Z}{\mathbb{Z}}
\newcommand{\CP}{\mathbb{CP}}
\newcommand{\cC}{{\mathcal C}}
\newcommand{\cD}{{\mathcal D}}
\newcommand{\cF}{{\mathcal F}}
\newcommand{\cI}{{\mathcal I}}
\newcommand{\cJ}{{\mathcal J}}
\newcommand{\cK}{{\mathcal K}}
\newcommand{\cO}{{\mathcal O}}
\newcommand{\cR}{{\mathcal R}}
\newcommand{\cS}{{\mathcal S}}
\newcommand{\cZ}{{\mathcal Z}}
\newcommand{\cW}{{\mathcal W}}
\newcommand{\cV}{{\mathcal V}}

\def\xTB{$\times$} 
\section{Introduction and Conclusions}\label{sec:Intro}
In \cite{Costello:2016mgj}, Costello and Li developed a beautiful formalism, which prescribes a way to topologically twist supergravity. Combining with the classical notion of topological twist of supersymmetric quantum field theory \cite{Witten:1988ze,Witten:1988xj}, we are now able to explore a topological sector for both sides of AdS/CFT correspondence. It was further suggested in \cite{Costello:2016nkh} a systematic method of turning an $\Omega$-background, which plays an important roles \cite{Nekrasov:2002qd,Alday:2009aq,Nekrasov:2009rc,Nekrasov:2010ka,Yagi:2014toa,Nekrasov:2015wsu} in studying supersymmetric field theories, in the twisted supergravity.

A topological twist along with $\Omega$-deformation enables us to study a particular protected sub-sector of a given supersymmetric field theory \cite{Oh:2019bgz,Jeong:2019pzg,Beem:2018fng,Oh:2019mcg}, which is localized not only in the field configuration space but also in the spacetime. Interesting dynamics usually disappear {along} the way, but as a payoff, we can make a more rigorous statement on the operator algebra. 

{The topological holography is an exact isomorphism between the operator algebras of gravity and field theory. In this paper, we will focus on a particular example of topological holography: the correspondence of the operator algebra of M-theory on a certain background parametrized by $\E_1$, $\E_2$, which localizes to 5d non-commutative $U(K)$ Chern-Simons theory with non-commutativity parameter $\E_2$ \footnote{{The 5d CS theory that appears in this paper is always meant to be a certain variant of the usual 5d CS theory with a topological-holomorphic twist and with non-commutativity turned on in the holomorphic directions.}}, and the operator algebra of the worldvolume theory of M2-brane, which localizes to 1d topological quantum mechanics(TQM). In particular, \cite{Costello:2017fbo} proved that two operator algebras are Koszul dual \cite{Costello:2017fbo} to each other.}

The important first step of the proof was to impose the BRST-invariance of the 5d $U(K)$ CS theory coupled with the 1d TQM. The 5d CS theory is a renormalizable, and self-consistent theory \cite{Costello:2015xsa}. However, in the presence of the topological defect that couples the 1d TQM and the 5d CS theory, certain Feynman diagrams turn out to have non-zero BRST variations. For the combined, interacting theory to be quantum mechanically consistent, the BRST variations of the Feynman diagrams should combine to give zero. This procedure magically reproduces the algebra commutation relations that define 1d TQM operator algebra, $\cA_{\E_1,\E_2}$. Intriguingly, one can extract non-perturbative information in the protected operator algebra from a perturbative calculation.

In fact, both the algebra of local operators in 5d CS theory and the 1d TQM operator algebra $\cA_{\E_1,\E_2}$ are deformations of the universal enveloping algebra of the Lie algebra $\mathrm{Diff}_{\E_2}(\bC)\otimes \mathfrak{gl}_K$ over the ring $\bC[\![\E_1]\!]$. Deformation theory tells us that the space of deformations of $U(\mathrm{Diff}_{\E_2}(\bC)\otimes \mathfrak{gl}_K)$ is the second Hochschild cohomology $\mathrm{HH}^2(U(\mathrm{Diff}_{\E_2}(\bC)\otimes \mathfrak{gl}_K))$. Although this Hochschild cohomology is known to be hard to compute, there is still a clever way of comparing these two deformations \cite{Costello:2017fbo}: notice that both of the algebras are defined compatibly for super groups $\mathrm{GL}_{K+R|R}$, {and their deformations are compatible with transition maps $\mathrm{GL}_{K+R|R}\hookrightarrow\mathrm{GL}_{K+R+1|R+1}$}, so there are induced transition maps between Hochschild cohomologies $\mathrm{HH}^2(U(\mathrm{Diff}_{\E_2}(\bC)\otimes \mathfrak{gl}_{K+R+1|R+1}))\to \mathrm{HH}^2(U(\mathrm{Diff}_{\E_2}(\bC)\otimes \mathfrak{gl}_{K+R|R}))$, hence the equivalence class of deformations are actually elements in the limit
\ie
\underset{R\ria\infty}{\lim}\mathrm{HH}^2(U(\mathrm{Diff}_{\E_2}(\bC)\otimes \mathfrak{gl}_{K+R|R}))
\fe
and the limit is well-understood \footnote{The actual computation in \cite{Costello:2017fbo} is more subtle, and will not be used in this work.}, it turns out that the space of all deformations is essentially one-dimensional: a free module over $\bC[\kappa]$ where $\kappa$ is the central element $1\otimes \mathrm{Id}_K$. Hence the algebra of local operators in 5d CS theory and the 1d TQM operator algebra are isomorphic up to a $\kappa$-dependent reparametrization
\ie
\hbar\mapsto \sum_{i=1}^{\infty}f_i(\kappa)\hbar^i
\fe
where $f_i(\kappa)$ are polynomials in $\kappa$.

Later, in \cite{Gaiotto:2019wcc} the same algebra with $K=1$ was defined using the gauge theory approach, and a combined system of M2-branes and M5-branes were studied. Especially, \cite{Gaiotto:2019wcc} interpreted the degrees of freedom living on M5-branes as forming a bi-module $\cM_{\E_1,\E_2}$ of the M2-brane operator algebra, and suggested the evidence by going to the mirror Coulomb branch algebra \cite{Bullimore:2015lsa,Braverman:2016wma} and using the known Verma module structure of massive supersymmetric vacua \cite{Bullimore:2016nji,Bullimore:2016hdc}. Appealing to the brane configuration in type IIB frame, they argued a triality in the M2-brane algebra, which can also be deduced from its embedding in the larger algebra, affine $gl(1)$ Yangian \cite{Tsymbaliuk:2014fvq,Prochazka:2015deb,Kodera:2016faj,Gaberdiel:2017dbk}.

Crucially, \cite{Gaiotto:2019wcc} noticed $U(1)$ CS should be treated separately from $U(K)$ CS theory with $K>1$, since the algebras differ drastically and the ingredients of the Feynman diagram are different in $U(1)$ CS, due to the non-commutativity. As a result, operator algebra isomorphism should be re-assessed.

Our work was motivated by the observation, and we will solve the following problem in a part of this paper.
\begin{itemize}
\item{}The simplest algebra $\cA_{\E_1,\E_2}$ commutator, which has $\E_1$ correction.
\end{itemize}
The problem will be solved by two complementary methods:
\begin{itemize}
    \item[(1)] Direct calculation by definition.
    \item[(2)] Using Feynman diagrams whose non-trivial BRST variation lead to the commutator.
\end{itemize}

Next, we will make the first attempt to derive the bi-module structure from the 5d $U(1)$ CS theory, where the combined system of the M2-branes and the M5-brane is realized as the 1d TQM and the $\beta-\gamma$ system\footnote{{One way to understand the appearance of $\beta-\gamma$ system is to go to type IIA frame, where the M5-brane maps to a D4 brane and the 11d supergravity background maps to a D6-brane. D4-D6 strings form 4d $\cN=2$ hypermultiplet. Under the $\Omega$-background, the 4d $\cN=2$ hypermultiplet localizes to $\B\gamma$ system \cite{Oh:2019bgz}.}}. Especially, we will answer the following problems.
\begin{itemize}
\item{}The commutator of the simplest bi-module $\cM_{\E_1,\E_2}$ of $\cA_{\E_1,\E_2}$, which has $\E_1$ correction.
\end{itemize}
Again, this will be solved by two complementary ways:
\begin{itemize}
    \item[(1)] Direct calculation by definition.
    \item[(2)] Using Feynman diagrams whose non-trivial BRST variation leads to the commutator of bi-module $\cM_{\E_1,\E_2}$.
\end{itemize}

Our work is only a part of a bigger picture. The algebra $\cA_{\E_1,\E_2}$ is a sub-algebra of affine $gl(1)$ Yangian \cite{Gaiotto:2019wcc}, and there exists a closed-form formula for the most general commutators, which can be derived from affine $gl(1)$ Yangian. One can try to derive the commutators from 5d $U(1)$ CS theory Feynman diagram computation.

Going to type IIB frame, the brane configurations map to Y-algebra configuration \cite{Gaiotto:2017euk}. Here, the general M2-brane algebra is formed by the co-product of three different M2-brane algebras related by the triality. {The local operators supported on M5-branes form a generalized $\cW_{1+\infty}$  algebra\cite{Gaiotto:2017euk}. The $\beta-\gamma$ Vertex Algebra that our M5-brane supports is the simplest example of this generalized $\cW_{1+\infty}$.} Hence, we are curious if our story can be further generalized to the coupled system of the 5d $U(1)$ CS theory and the generalized $\cW_{1+\infty}$ algebra.
\subsection{Structure of the paper}
After reviewing the general concepts in section \secref{sec:review}, we show the following algebra commutator in \secref{subsec:M2}.
\ie
\left[t[2,1],t[1,2]\right]_{\E_1}=\E_1\E_2t[0,0]+\E_1\E_2^2t[0,0]t[0,0]
\fe
where $[\bullet]_{\E_1}$ is the $\cO(\E_1)$ part of $[\bullet]$, $t[m,n]\in\cA_{\E_1,\E_2}$. The detail of the proof is shown in \appref{app:alg}. The commutation relation was successfully checked by 1-loop Feynman diagram associated to 5d CS theory and 1d TQM. This is the content of section \secref{sec:pert}. We collected some intermediate integral computations used in the Feynman diagram in \appref{app:2-1}.

Next, we show the following algebra-bi-module commutator in \secref{subsec:M5}.
\ie
\left[t[2,1],b[z^1]c[z^0]\right]_{\E_1}={\E_1\E_2t[0,0]b[z^0]c[z^0]}+\E_1\E_2b[z^0]c[z^0]
\fe
where $b[z^m]$, $c[z^m]\in\cM_{\E_1,\E_2}$. The detail of the proof can be found in \appref{app:bimod}. We reproduced the commutation relation using the 1-loop Feynman diagram computation in the 5d CS theory, 1d TQM, and 2d $\B\gamma$ coupled system. This is the content of section \secref{sec:5d2dpert}. We collected some intermediate integral computations used in the Feynman diagram in \appref{app:2-2} and \appref{app:2-3}.

{\it{Note added: recently, complete commutation relations for the algebra $\cA_{\E_1,\E_2}$ was proposed in \cite{Gaiotto:2020vqj}.}}
\section{Twisted holography via Koszul duality}\label{sec:review}
Twisted holography is the duality between the protected sub-sectors of full supersymmetric AdS/CFT \cite{Maldacena:1997re,Gubser:1998bc,Witten:1998qj}, obtained by a topological twist and $\Omega$-background both turned on in the field theory side and supergravity side. The most glaring aspect of twisted holography\footnote{A similar line of development was made in \cite{Bonetti:2016nma,Mezei:2017kmw}, using twisted $\mathbb{Q}$-cohomology, where  $\mathbb{Q}$ is a particular combination of a supercharge Q and a conformal supercharge S \cite{Beem:2013sza}. In the sense of \cite{Oh:2019bgz}, $\mathbb{Q}$-cohomology is equivalent to $Q_V$-cohomology, where $Q_V$ is the modified scalar supercharge in $\Omega-$deformed theories.} is an correspondence between operator algebra in both sides, which is manifested by a rigorous Koszul duality.  Moreover, the information of physical observables such as Witten diagrams in the bulk side that match with correlation functions in the boundary side is fully captured by OPE algebra in the twisted sector \cite{Gaiotto:2019mmf}. 

This section is prepared for a quick review of twisted holography for non-experts. The idea was introduced in \cite{Costello:2016mgj} and studied in various examples \cite{Costello:2016nkh,Costello:2017fbo,Ishtiaque:2018str,Costello:2018zrm,Gaiotto:2019wcc,Costello:2020jbh} with or without $\Omega$-deformation. The reader who is familiar with \cite{Costello:2016nkh} can skip most of this section, except for \secref{subsec:omegad}, \secref{subsec:opalgcomp}, and \secref{subsec:M5O}, where we set up the necessary conventions for the rest of this paper. These subsections can be skipped as well, if the reader is familiar with \cite{Gaiotto:2019wcc}. Also, see a complementary review of the formalism in the section 2 of \cite{Gaiotto:2019wcc}.

After defining the notion of twisted supergravity in \secref{subsec:twisted}, we will focus on a particular (twisted and $\Omega-$deformed) M-theory background on $\bR_t\times\bC_{NC}^2\times\bC_{\E_1}\times\bC_{\E_2}\times\bC_{\E_3}$, where $NC$ means non-commutative, and $\E_i$ stands for $\Omega-$background related to $U(1)$ isometry with a deformation parameter $\E_i$ in \secref{subsec:omegad}. N $M2$ branes extending $\bR_t\times\bC_{\E_1}$ leads to the field theory side. As we will explain in \secref{subsec:opalgcomp}, a bare operator algebra isomorphism between twisted supergravity and twisted M2-brane worldvolume theory is given by an interaction Lagrangian between two systems. Due to this interaction, a perturbative gauge anomaly appears in various Feynman diagrams, and a careful cancellation of the anomaly will give a consistent quantum mechanical coupling between two systems. Strikingly, the anomaly cancellation condition itself leads to a complete operator algebra isomorphism, by fixing algebra commutators. This will be described in \secref{subsec:anomaly}. To discuss holography, it is necessary to include the effect of taking a large N limit and the subsequent deformation in the spacetime geometry. We will illustrate the concepts in \secref{subsec:backreac}. In \secref{subsec:M5O}, we will explain how to introduce M5-brane in the system and describe the role of M5-brane in the gravity and field theory side. In short, the degree of freedom on M5-brane will form a module of the operator algebra of M2-brane. Similar to the M2-brane case, the anomaly cancellation condition for M5-brane uniquely fixes the structure of the module. 
\subsection{Twisted supergravity}\label{subsec:twisted}
Before discussing the topological twist of supergravity, it would be instructive to recall the same idea in the context of supersymmetric field theory and make an analog from the field theory example.

Given a supersymmetric field theory, we can make it topological by redefining {the generator of the rotation symmetry $M$  using the generator of the R-symmetry $R$.}
\ie
M\quad\ria\quad M'=M+R
\fe
As a part of Lorentz symmetry is redefined, supercharges, which were previously spinor(s), split into a scalar $Q$, which is nilpotent 
\ie\label{nilpotencycond}
Q^2=0,
\fe
and a 1-form $Q_\mu$. Because of the nilpotency of $Q$, one can define the notion of Q-cohomology. 

Following anti-commutator explains the topological nature of the operators in Q-cohomology-- a translation is Q-exact.
\ie\label{trivialtr}
\{Q,Q_\mu\}=P_\mu
\fe 
To go to the particular Q-cohomology, one needs to turn off all the infinitesimal super-translation $\E_Q$ except for the one that parametrizes the particular transformation $\D_Q$ generated by $Q$.

More precisely, if we were to start with a gauge theory, which is quantized with BRST formalism, the physical observables are defined as BRST cohomology, with respect to some $Q_{BRST}$. The topological twist modifies $Q_{BRST}$, and the physical observables in the resulting theory are given by $Q'_{BRST}$-cohomology.
\ie
Q_{BRST}\quad\ria\quad Q'_{BRST}=Q_{BRST}+Q
\fe

As an example, consider $3d$ $\cN=4$ supersymmetric field theory. The Lorentz symmetry is $SU(2)_{Lor}$ and R-symmetry is $SU(2)_H\times SU(2)_C$, where H stands for Higgs and C stands for Coulomb. There are two ways to re-define the Lorentz symmetry algebra, and we choose to twist with $SU(2)_C$, as this will be used in the later discussion. In other words, one redefines
\ie
M\quad\ria\quad M'=M+R_C
\fe
The resulting scalar supercharge is obtained by identifying two spinor indices, one of Lorentz symmetry $\A$ and one of $SU(2)_C$ R-symmetry $a$
\ie
Q^\A_{a\dot{a}}\quad\ria\quad Q^a_{a\dot{a}}\fe
and taking a linear combination.
\ie
Q=Q^+_{1\bar{1}}+Q^-_{1\bar{2}}
\fe
This twist is called Rozansky-Witten twist \cite{Rozansky:1996bq} and will be used in twisting our M2-brane theory.

One way to start thinking about the topological twist of supergravity is to consider a brane in the background of the ``twisted'' supergravity. If one places a brane in a twisted supergravity background, it is natural to guess that the worldvolume theory of the brane should also be topologically twisted coherently with the prescribed twisted supergravity background.

Given the intuition, let us define twisted supergravity, following \cite{Costello:2016mgj}. In supergravity, the supersymmetry is a local(gauge) symmetry, a fermionic part of super-diffeomorphism. {As usual in gauge theories, one needs to take a quotient by the gauge symmetry, and this is done by introducing a ghost field. As supersymmetry is a fermionic symmetry, the corresponding ghost field is a bosonic spinor, $q$. Twisted supergravity is defined as supergravity in a background where the bosonic ghost $q$ takes a non-zero value.}

{It is helpful to recall how we twist a field theory to have a better picture for presumably unfamiliar non-zero bosonic ghost.} One can think the infinitesimal super-translation parameter $\E$ that appears in the global supersymmetry transformation as a rigid limit of the bosonic ghost $q$. For instance, in 4d $\cN=1$ holomorphically twisted field theory \cite{Nekrasov:1996,Johansen:1995,Costello:2011np,Saberi:2019ghy}, with Q paired with $\E_+$, the supersymmetry transformation of the bottom component $\phi$ of anti-chiral superfield $\bar{\Psi}=(\bar{\phi},\bar{\psi},\bar{F})$ transforms as
\ie
\D\phi=\bar{\E}\bar{\psi},\quad \D\bar{\psi}=i\E_+\bar{\pa}\bar\phi+i\E_-\pa\bar{\phi}+\bar\E\bar F
\fe
As we focus on Q-cohomology, we set $\E_+=1$, $\E_-=\bar{\E}=0$, then the equations reduce into
\ie
\D\bar\phi=0,\quad\D\bar\psi=i\bar\pa\bar\phi
\fe
In the similar spirit, in the twisted supergravity, we control the twist by giving non-zero VEV to components of the bosonic ghost $q$.

 Indeed, \cite{Costello:2016mgj} proved that by turning on non-zero bosonic spinor vacuum expectation value $\la q\ra\neq0$ with $q_\A\Gamma^{\A\B}_\mu q_\B=0$ for a vector gamma matrix, one can obtain the effect of topological twisting. We can now compare with the field theory case above \eqref{nilpotencycond}: $Q^2=0$ with $Q\neq0$. One can think of $\E_Q$ as a rigid limit of q. 
 
The operator algebra of twisted type IIB supergravity is isomorphic to that of Kodaira-Spencer theory \cite{Bershadsky:1993cx}. The following diagram gives a pictorial definition of the two theories, which turned out to be isomorphic to each other.
\begin{figure}[H]
\centering
\includegraphics[width=11cm]{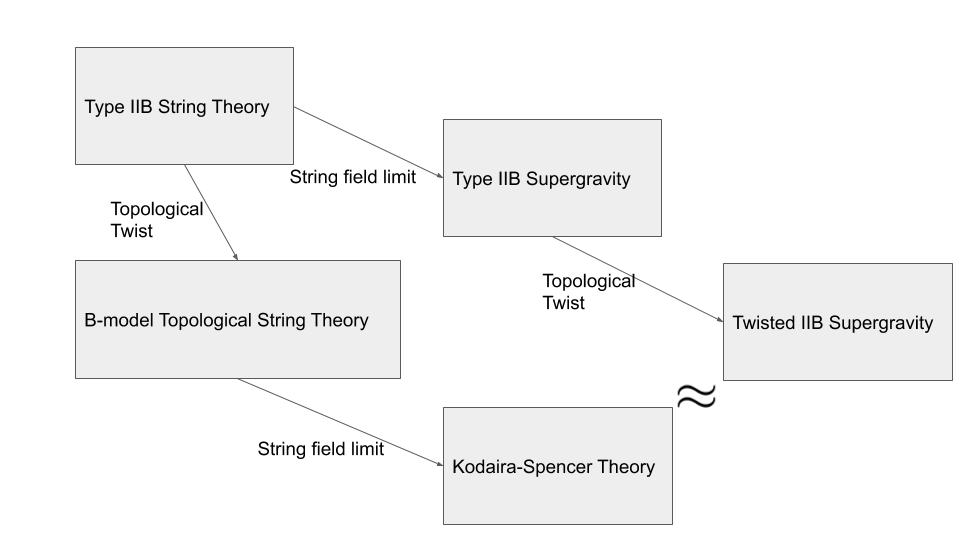}
\caption{Starting from type IIB string theory, one can obtain the same theory by taking two operations-- 1. String field limit, 2. Topological twist-- in any order.}
\label{Relations}
 \centering
\end{figure}
\noindent
Notice that the topological twist in the first column of the picture is the twist applied on the worldsheet string theory\footnote{We thank Kevin Costello, who pointed out that the arrow from Type IIb string theory to B-model topological string theory is still mysterious in the following sense. In {Ramond-Ramond} formalism, as the super-ghost is in the Ramond sector and it is hard to give it a VEV.  In the Green-Schwarz picture surely it should work better, but there are still problems there, as the world-sheet is necessarily embedded in space-time whereas in the B model that is not allowed.}, whereas in the second column is the twist on the target space theory.

Lastly,  there are two types of twists available: a topological twist and a holomorphic twist, and it is possible to turn on the two different types of twists in the two different directions of the spacetime. The mixed type of twists is called a topological-holomorphic twist, for example, \cite{Kapustin:2006hi}. Different from a topological twist, a holomorphic twist makes only the (anti)holomorphic translation to be Q-exact; after the twist we have $Q$ and $Q_z$ such that
\ie
{\{Q,Q_z\}=P_{\bar z}}
\fe
Hence, {the holomorphic translation} is physical(not Q-exact), and there exist non-trivial dynamics arising from this. \cite{Costello:2016mgj,Costello:2016nkh} showed that it is possible to discuss a holomorphic twist in the supergravity. It is important to have a holomorphic direction to keep the non-trivial dynamics, as we will later see.

\subsection{$\Omega$-deformed M-theory}\label{subsec:omegad}
Similar to the previous subsection, we will start reviewing the notion of $\Omega$-deformation of topologically twisted field theory. To define $\Omega$-background, one first needs an isometry, typically $U(1)$, generated by some vector field $V$ on a plane where one wants to turn on the $\Omega$-background. $\Omega$-deformation is a deformation of topologically twisted field theory. Physical observables are in the modified $Q_V$ cohomology, which satisfies
\ie\label{qvcoh}
Q_V^2=L_V,\quad\text{where }Q_V=Q+i_{V^\mu}Q_\mu
\fe
where $L_V$ is a conserved charge associated with $V$, and $i_{V^\mu}$ is a contraction with the vector field $V^\mu$, reducing the form degree by 1.

As the RHS of \eqref{qvcoh} is non-trivial, $Q_V$ cohomology only consists of operators, which are fixed by the action of $L_V$ such that $L_V\cO=0$. Hence, effectively, the theory is defined in two fewer dimensions, if the isometry group is $U(1)$. More generally, one can turn on $\Omega$-background in the $n$ planes, and the dynamics of the original theory defined on $D$-dimensions localizes on $D-2n$ dimensions.

\cite{Costello:2016nkh} proposed a prescription for turning $\Omega$-background in twisted 11d supergravity; we need a 3-form field $\E C$, along with $U(1)$ isometry generated by a vector field $\E V$, where $\E$ is a constant,  measuring the deformation. Similar to the field theory description, in this background$(\la q\ra,C\neq0)$, the bosonic ghost $q$ squares into the vector field, $\E V$ to satisfy the 11d supergravity equation of motion.
\ie
q^2=q_\A(\Gamma^{\A\B})_\mu q_\B=\E V_\mu
\fe
The $\Omega$-background localizes the supergravity field configuration into the fixed point of the $U(1)$ isometry. {From now on, we will call $\Omega$-background with parametrized by $\E_i$ as $\Omega_{\E_i}$ background. }More generally, one can turn on multiple $\Omega_{\E_i}$-backgrounds in the separate 2-planes, which we will denote as $\bC_{\E_i}$.

{The topologically twisted and $\Omega-$deformed 11d background} that we will focus in this paper is
\ie
\text{11d SUGRA: }\bR_t\times\bC^2_{NC}\times\bC_{\E_1}\times TN_{1;\E_2,\E_3}
\fe
where $TN_{1;\E_2,\E_3}$ is Taub-NUT space, which can be thought of as $S^1_{\E_2}\times\bR\times\bC_{\E_3}$. The twist is implemented with the bosonic ghost chosen such that {holomorphic} twist in $\bC^2_{NC}$ directions \footnote{NC stands for Non-Commutative. This will become clear in the type IIa frame.} and {topological} twist in $\bR_t\times\bC_{\E_1}\times TN_{1;\E_2,\E_3}$ directions\footnote{As remarked, if one introduces branes, the worldvolume theory inherits the particular twist that is turned on in the particular direction that the branes extend.}. The 3-form is
\ie
C=V^d\wedge d\bar z_1\wedge d\bar z_2
\fe
where $V^d$ is 1-form, which is a Poincare dual of the vector field $V$ on $\bC_{\E_2}$ plane, and $z_1,z_2$ are holomorphic coordinates on $\bC^2_{NC}$. 

The twisted holography is the duality between the protected subsector of M2-brane and the localized supergravity, due to the $\Omega$-background. We first want to introduce $M2$ branes and establish the explicit isomorphism at the level of operator algebras. Place $N$ M2-branes on 
\ie
\text{M2-brane: }\bR_t\times\{\cdot\}\times\bC_{\E_1}\times\{\cdot\}
\fe
For the concrete computation, it is convenient to go to type IIa frame by reducing along an M-theory circle. We pick the M-theory circle as $S^1_{\E_2}$, which is in the direction of the vector field $V$.\footnote{For a different purpose, to make contact with Y-algebra system, type IIb frame is better, but we will not pursue this direction in this paper.}

After reducing on $S^1_{\E_2}$, the Taub-NUT geometry maps into one D6-brane and N M2-branes map to N D2-branes.
\ie
\text{type IIa SUGRA : }&\bR_t\times\bC^2_{NC}\times\bC_{\E_1}\times\bR\times\bC_{\E_3}\\
\text{D6-brane : }&\bR_t\times\bC^2_{NC}\times\bC_{\E_1}\\
\text{D2-branes : }&\bR_t\times\quad~\quad\times\bC_{\E_1}
\fe
and 3-form C-field reduces into a B-field, which induces a non-commutativity $[z_1,z_2]=\E_2$ on $\bC^2_{NC}$.
\ie
B=\E_2 d\bar z_1\wedge d\bar z_2
\fe
There are two types of contributions to gravity side: 1. closed strings in type IIa string theory and 2. open strings on the D6-brane. It was shown in \cite{Costello:2016nkh} that we can completely forget about the closed strings. The reason is in the presence of the non-commutativity,the holomorphically twisted supergravity background(B-model) is the same as the topologically twisted background(A-model) equipped with a B-field. As we are working in the supergravity limit, where there is no instanton effect, we can also ignore the effect from a B-field. Hence, for closed string, the background becomes topological A-model, which is trivial. Therefore, the open strings from the D6-brane entirely capture gravity side. 

D6-brane worldvolume theory is 7d SYM, and it localizes on 5d non-commutative $U(1)$ Chern-Simons on $\bR_t\times\bC^2_{NC}$ due to $\Omega_{\E_1}${-background} on $\bC_{\E_1}$ \cite{Costello:2018txb}. The 5d Chern-Simons theory is not the typical Chern-Simons theory, as it inherits a topological twist in $\bR_t$ direction and a holomorphic twist in $\bC_{NC}^2$ direction, in addition to the non-commutativity. As a result, a gauge field only has 3 components 
\ie
A=A_tdt+A_{\bar z_1}d\bar z_1+A_{\bar z_2}d\bar z_2
\fe
and the action takes the following form.
\ie
S=\frac{1}{\E_1}\int_{\bR_t\times\bC^2_{NC}}dz_1dz_2\bigg(A\star dA+\frac{2}{3}A\star A\star A\bigg)
\fe
The star product $\star_{\E_2}$ is the standard Moyal product induced from the non-commutativity of $\bC^2_{NC}$: $[z_1,z_2]=\E_2$. The Moyal product between two holomorphic functions\footnote{{The Moyal product is extended to a product on the Dolbeault complex $\Omega^{0,*}(\bC^2)$ by the same formula, except that the product between two functions becomes a wedge product between two forms.}} $f$ and $g$ is defined as
\ie
f\star_{\E} g=fg+\E\frac{1}{2}\E_{ij}\frac{\pa}{\pa z_i}f\frac{\pa}{\pa z_j}g+\E^2\frac{1}{2^22!}\E_{i_1j_1}\E_{i_2j_2}\left(\frac{\pa}{\pa z_{i_1}}\frac{\pa}{\pa z_{i_2}}f\right)\left(\frac{\pa}{\pa z_{j_1}}\frac{\pa}{\pa z_{j_2}}g\right)+\cO(\epsilon^3)
\fe

The gauge transformation $\Lambda\in\Omega^0(\bR\times\bC^2_{NC})\otimes gl_1$\footnote{{$gl_1$ Lie algebra factor comes from the simple fact that the theory is $U(1)$ gauge theory. For now, there is no essential difference between $\Omega^0(\bR\times\bC)^2_{NC}$ and $\Omega^0(\bR\times\bC)^2_{NC}\otimes gl_1$; however, having $gl_1$ rather than $gl_K$ makes a huge difference in the Feynman diagram computation, which will be discussed in \secref{sec:pert}.}} acting on the gauge field $A$ is
\ie
A\mapsto A+d\Lambda+[\Lambda,A],~\text{where}~[\Lambda,A]=\Lambda\star_{\E_2}A-A\star_{\E_2}\Lambda
\fe 

The field theory side is defined on $N$ D2-branes, which extend on $\bR_t\times\bC_{\E_1}$. This is the 3d $\cN=4$ gauge theory with 1 fundamental hypermultiplet and 1 adjoint hypermultiplet. {Since the D2-branes are placed on a topologically twisted supergravity background,} the theory inherits the topological twist, which is the Rozansky-Witten twist. We will work on $\cN=2$ notation, then each of $\cN=4$ hypermultiplet splits into a chiral and an anti-chiral $\cN=2$ multiplet. We denote the scalar bottom component of the fundamental chiral and anti-chiral multiplet as $I_a$ and $J^a$, and that of adjoint multiplets as $X^a_b$ and $Y^a_b$, where $a$ and $b$ are $U(N)$ gauge indices. {Those scalars parametrize the hyper-Kahler target manifold $\cM$, which has a non-degenerate holomorphic symplectic structure. This structure turns the ring of holomorphic functions on $\cM$ into a Poisson algebra with the} following basic Poisson brackets:
\ie
\{I_a,J^b\}=\D^b_a,\quad\{X^a_b,Y^c_d\}=\D^a_d\D^c_b
\fe
It is known that {the gauge-invariant combinations of }Q-cohomology of Rozansky-Witten twisted $\cN=4$ theory {is equivalent to the Higgs branch chiral ring.} The elements of Higgs branch chiral ring are gauge invariant polynomials of $I$, $J$, $X$, and $Y${:}
\ie\label{words}
IS(X^mY^n)J,\quad \text{Tr}S(X^mY^n)
\fe
where ${S(\bullet)}$ means fully symmetrized polynomial of the monomial $\bullet$.

Upon imposing the F-term relation\footnote{{Physically, one needs to impose the F-term relation, as it is a part of defining condition for the supersymmetric vacua, as a critical locus of our specific 3d $\cN=4$ superpotential. Algebraically, the F-term relation forms an ideal of the ring of holomorphic functions on $\cM$.}}
\ie
\left[X,Y\right]+IJ=\E_2\D,
\fe
one can show two words in \eqref{words} are equivalent\footnote{They are related by following relation:
\ie
IS{(X^mY^n)}J=\E_2\text{TrS}{(X^mY^n)}
\fe
} up to a factor of $\E_2$\footnote{{Note that the $\E_2$ factor, which was previously introduced as a measure for the non-commutativity in the 5d CS theory, acts as an FI parameter in the 3d $\cN=4$ gauge theory.}}, and the physical operators purely consist of one of them. Let us call them as
\ie\label{algebraelem}
t[m,n]=\frac{1}{\E_1}\text{Tr}SX^mY^n
\fe
{In the $\Omega_{\E_1}$-background, the Higgs branch chiral ring is quantized to an algebra and the support of the operator algebra in 3d $\cN=4$ theory also localizes to the fixed point of the $\Omega_{\E_1}$-background.} Therefore, the theory effectively becomes 1d TQM(Topological Quantum Mechanics) \cite{Dedushenko:2016jxl,Beem:2016cbd,Bullimore:2016hdc}.

In summary, two sides of twisted holography are 5d non-commutative Chern-Simons theory and 1d TQM. Until now, we have not quite taken a large $N$ limit and resulting back-reaction that will deform the geometry. The large $N$ limit will be crucial for the operator algebra isomorphism to work and we will illustrate this point in section \secref{subsec:backreac}.
\subsection{Comparing elements of operator algebra}\label{subsec:opalgcomp}
As 5d CS theory has a trivial equation of motion: $F=0$, all the operators have positive ghost numbers. Also, since $\bR_t$ direction is topological, the fields do not depend on $t$. As a result, operator algebra consist of ghosts $c(z_1,z_2)$ with holomorphic dependence on coordinates of $\bC_{NC}^2$, $z_1$, $z_2$. The elements are then Fourier modes of the ghosts.
\ie\label{elements}
{c[m,n]=\pa_{z_1}^m\pa_{z_2}^nc(0,0)}
\fe
The non-commutativity in $\bC_{NC}^2$ planes induces an algebraic structure in the holomorphic functions on $\bC^2_{NC}$ defined by the Moyal product.
\ie\label{commut}
\left[z_1^az_2^b,z_1^cz_2^d\right]=(z_1^az_2^b)\star_{\E_2}(z_1^cz_2^d)-(z_1^cz_2^d)\star_{\E_2}(z_1^az_2^b)=\sum_{m,n}f^{m,n}_{a,b;c,d}z_1^mz_2^n
\fe
{At the classical level ($\E_1=0$), the operator algebra $A_{\E_1=0,\E_2}$ of 5d CS theory is generated by ghost fields $c[m,n]$ with anti-commutativity relations, together with BRST differential $\delta$. As a graded associative algebra, $A_{0,\E_2}$ is isomorphic to $\wedge^* \left(\mathbb C[z_1,z_2]_{\E_2}\right)\cong\wedge^* \left(\mathrm{Diff}_{\E_2}\mathbb C\right)$, note that here we identify $z_1$ as $\partial_{z_2}$ using the Moyal product. The BRST differential $\delta$ is the dual of the Lie bracket, thus $A_{\E_1=0,\E_2}$ is the Chevalley-Eilenberg algebra of cochains on the Lie algebra $\mathfrak{g}=\mathrm{Diff}_{\E_2}\mathbb C\otimes \mathfrak{gl}_1$, denote by $C^*(\mathfrak{g})$. Note that here we treat the algebra $A_{\E_1=0,\E_2}$ as an algebra over the base ring $\mathbb{C}[\E_1,\E_2]$, so $\E_1,\E_2$ are algebraic parameters. At the quantum level, the operator algebra $A_{\E_1=0,\E_2}$ receives deformations, we will denoted it by $A_{\E_1,\E_2}$.}

On the other hand, the elements of the algebra of operators in 1d TQM {in the large $N$ limit} consist of $t[m,n]$. The defining commutation relations come from the quantization of the Poisson brackets deformed by $\Omega_{\E_1}${-background}:
\ie\label{commadhm}
\left[I_a,J^b\right]=\E_1\D^b_a,\quad\left[X^a_b,Y^c_d\right]=\E_1\D^a_d\D^c_b
\fe
We will write the F-term relation with {explicit gauge indices} as follows.
\ie\label{Ftermadhm}
X^i_kY^k_j-X^k_jY^i_k+I_jJ^i=\E_2\D^i_j
\fe
We will call the algebra {generated} by $t[m,n]$ {with relations} \eqref{commadhm}, \eqref{Ftermadhm} as ADHM algebra or $\cA_{\E_1,\E_2}$. {Note that here we treat the algebra $\cA_{\E_1,\E_2}$ as an algebra over the base ring $\mathbb{C}[\E_1,\E_2]$. This may seems to be strange at the first glance since $t[m,n]$ is \textit{a priori} defined over $\mathbb{C}[\E_1^{\pm},\E_2]$, nevertheless the commutators between those $t[m,n]$'s only involve polynomials in $\E_1$, so the algebra $\cA_{\E_1,\E_2}$ is well-defined over $\mathbb{C}[\E_1,\E_2]$.}

There is a one-to-one correspondence between $c[m,n]$ and $t[m,n]$, and \cite{Costello:2017fbo} proved an isomorphism between ${}^!A_{\E_1,\E_2}=U_{\E_1}(g)$ and $\cA_{\E_1,\E_2}$ {as $\mathbb{C}[\E_1,\E_2]$-algebras }for 5d $U(K)$ Chern-Simons theory coupled with 1d TQM with {$K>1$}, where ${}^!A_{\E_1,\E_2}$ is a Koszul dual of an algebra $A_{\E_1,\E_2}$\footnote{It is known that for $A_{\E_1{=0},\E_2}=C^*(g)$, the Koszul dual ${}^!A_{\E_1{=0},\E_2}$ is $U(g)$ {\cite{Costello:2013zra}}.}. The proof consists of two parts. First, one checks two algebras' commutation relations match in the $\cO(\E_1)$ order. Next, one proves the uniqueness of the deformation of the universal enveloping algebra $U(g)$ by $\E_1$ that ensures all order matching. {It worth mentioning that in the classical limit $\E_1\to0$ the algebra $\cA_{\E_1,\E_2}$ does not agree with the classical operator algebra of the ADHM mechanics, since the definition of the algebra $\cA_{\E_1,\E_2}$ involves $1/\E_1$, in other word, the isomorphism holds only at the quantum level.}

One of our goals is to extend the $\cO(\E_1)$ order matching to $K=1$. It may seem trivial compared to higher K, but it turns out that it is more complicated. We will give the proof in \secref{sec:pert}, \secref{sec:5d2dpert}. The uniqueness of the deformation applies for all $K$ including $K=1$, so we will not try to spell out the details in this work.
\subsection{Koszul duality}\label{subsec:koszul}
Let us explain why in the first place we can expect the Koszul duality between 5d and {the large $N$ limit} of 1d operator algebra. Further details on Koszul duality can be found in \cite{Gaiotto:2015zna,Gaiotto:2015aoa,Gaiotto:2019wcc,Costello:2020jbh}

The 5d theory is defined on $\bR_t\times\bC_{NC}^2$, where $\bR_t$ is topological and $\bC_{NC}^2$, and 1d TQM couples to the 5d theory along $\bR_t$. As explained in \eqref{trivialtr}, there is a scalar supercharge $Q$ and 1-form supercharge $\D$ that anti-commute to give a translation operator $P_t$. We can build a topological line defect action using topological descent.
\ie\label{lined}
Pexp\int^\infty_{-\infty}\left[\D,x(t)\right]
\fe
where 
\ie
x(t)=\sum_{m,n}c[m,n]t[m,n]
\fe
The BRST variation of \eqref{lined} vanishes if $x(t)$ {satisfies} a Maurer-Cartan equation:
\ie
\left[Q,x\right]+x^2=0
\fe
and if ${x\in A\otimes {}^!A}$ for some $A$, the Maurer-Cartan equation is always satisfied.  Hence, it is natural to expect the Koszul duality between $A_{\E_1,\E_2}$ and $\cA_{\E_1,\E_2}$. So, the coupling between the 5d ghosts and gauge invariant polynomials of 1d TQM is given by 
\ie\label{1dtqmline}
S_{int}=\int_{\bR_t}t[m,n]c[m,n]dt.
\fe

Now that we have three types of Lagrangians:
\ie
S_{5d~{CS}}+S_{1d~TQM}+S_{int}
\fe
We need to make sure if the quantum gauge invariance of 5d Chern-Simons theory remains to be true in the presence of the interaction with 1d TQM. Namely, we need to investigate if there is a non-vanishing gauge anomaly in Feynman diagrams. Along the way, we will derive the isomorphism between the operator algebras, as a consistency condition for the gauge anomaly cancellation.
\subsection{Anomaly cancellation}\label{subsec:anomaly}
To give an idea that the cancellation of the gauge anomaly of the 5d CS Feynman diagrams fixes the algebra of operators in 1d TQM that couples to the 5d CS, let us review 5d $U(K)$ Chern-Simons example shown in \cite{Costello:2017fbo}. Consider following Feynman diagram.
\begin{figure}[H]
\centering
\includegraphics[width=11cm]{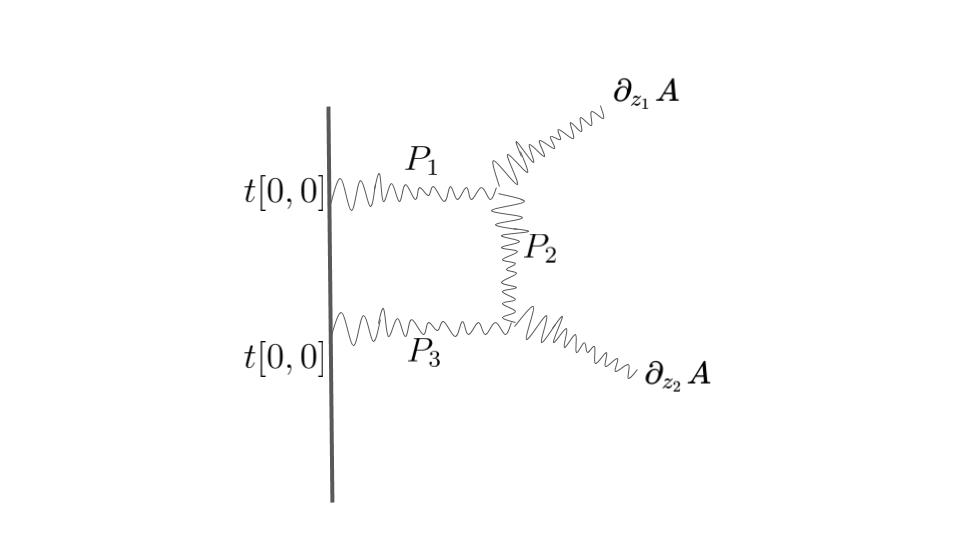}
\caption{The vertical solid line represents the time axis. Internal wiggly lines stand for 5d gauge field propagators $P_i$, and the external wiggly lines stand for Fourier components 5d gauge field.}
\label{fig1}
 \centering
\end{figure}
\noindent
The BRST variation$(\D A=\pa c)$ of the amplitude of the above Feynman diagram is non-zero.
\ie\label{sampevari1}
\E_1 \E_{ij}(\pa_{z_i}A^a)(\pa_{z_j}c^b)K^{fe}f^c_{ae}f^d_{bf}t[0,0]t[0,0]
\fe
where $K^{ab},f^a_{bc}$ are a Killing form and a structure constant of $u(K)$, and $t[m,n]$ is an element of {ADHM algebra with gauge group $G=U(N)$, and flavor group $\hat{G}=U(K)$}.

To have a gauge invariance, we need to cancel the anomaly, and the gauge variation of the following diagram has exactly factors like $\E_{ij}(\pa_{z_i}A^a)(\pa_{z_j}c^b)$:
\begin{figure}[H]
\centering
\includegraphics[width=10cm]{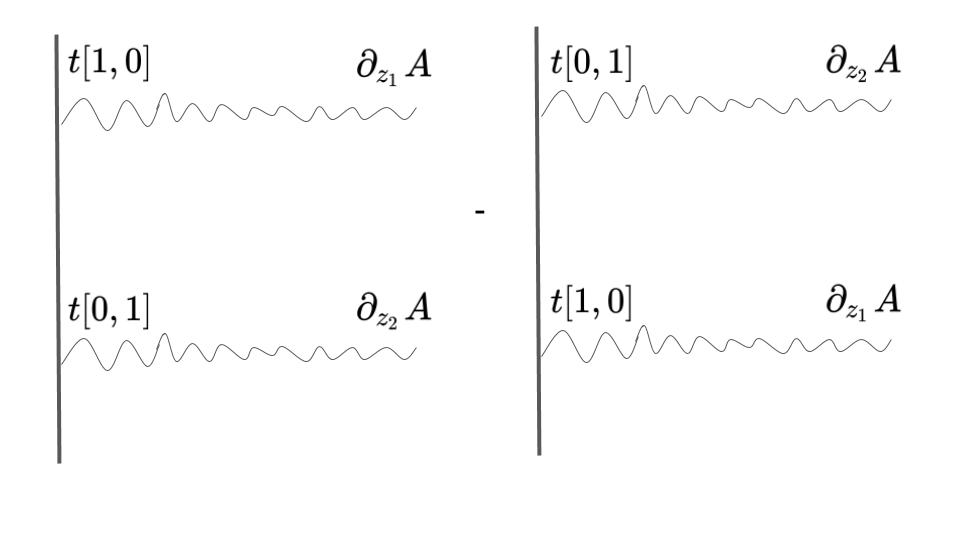}
  \vspace{-30pt}
\caption{}
\label{fig3}
 \centering
\end{figure}
\noindent
The BRST variation of the amplitude is
\ie\label{sampevari2}
\E_1 \E_{ij}(\pa_{z_i}A^a)(\pa_{z_j}c^b)K^{fe}f^c_{ae}f^d_{bf}[t[1,0],t[0,1]]
\fe
Imposing the cancellation of the BRST variation between \eqref{sampevari1} and \eqref{sampevari2}, we obtain
\ie
\left[t[1,0],t[0,1]\right]=\E_1t[0,0]t[0,0]
\fe
This is very impressive, since we obtain the ADHM algebra from 5d Chern-Simons theory Feynman diagrams!

We will see that if $K=1$, some ingredients of Feynman diagram change, but we can still reproduce ADHM algebra with $G=U(N)$, $\hat{G}=U(1)$.
\subsection{Large $N$ limit and a back-reaction of $N$ M2-branes}\label{subsec:backreac}
Although we have not discussed explicitly about taking large $N$ limit, but it was being used implicitly in the construction of the algebra $\cA_{\E_1,\E_2}$ which makes it a crucial step towards the holography. However, it is important to notice that large N is not necessary for Koszul duality, but it is important for holography. \footnote{We thank an anonymous referee who made this point.}

Here we explain some detail of taking large $N$ limit. First notice that there are homomorphisms $\iota_{N}^{N'}:\mathcal O(T^*V_{K,N'})\to \mathcal O(T^*V_{K,N})$ for all $N'>N$ induced by natural embedding $\mathbb C^N\hookrightarrow \mathbb C^{N'}$, where 
\ie
V_{K,N}=\mathfrak{gl}_N\oplus \mathrm{Hom}(\mathbb C^K,\mathbb C^N),\fe
so that $T^*V_{K,N}$ is the linear span of single operators $I,J,X,Y$, and the algebra $ \mathcal O(T^*V_{K,N})$ is the commutative (classical) algebra generated by these operators (with no relations imposed). Then we define the \textit{admissible} sequence of weight $0$ as 
\ie
\{f_N\in \mathcal O(T^*V_{K,N})^{\mathrm{GL}_N}|\iota_{N}^{N'}(f_{N'})=f_N\},
\fe
and for integer $r\ge 0$, a sequence $\{f_N\}$ is called admissible of weight $r$ if $\{N^{-r}f_N\}$ is admissible sequence of weight $0$ (e.g. the sequence $\{N\}$ is admissible of weight 1), and define $\mathcal O(T^*V_{K,\bullet})^{\mathrm{GL}_{\bullet}}$ to be the linear span of admissible sequences of all possible weights. It's easy to see that $\mathcal O(T^*V_{K,\bullet})^{\mathrm{GL}_{\bullet}}$ is an algebra. Next we turn on the quantum deformation which turn the ordinary commutative product to the Moyal product $\star _{\E_1}$, and it's easy to see that for admissible sequences $\{f_N\}$ and $\{g_N\}$, $\{f_N\star _{\E_1} g_N\}$ is also admissible. In this way we obtained the quantized algebra $\mathcal O_{\E_1}(T^*V_{K,\bullet})^{\mathrm{GL}_{\bullet}}$. 

{The action of $\mathfrak{gl}_N$ on $V_{K,N}$ induces a moment map \ie
\mu:\mathfrak{gl}_N\to \mathcal O_{\E_1}(T^*V_{K,N}),\quad E_i^j\mapsto X_i^kY_k^j-X_k^jY_i^k+I_iJ^j,
\fe
We want to set the moment map to $\E_2$ times the identity, so we consider the shifted moment map:}
\ie
\mu_{\E_2}:\mathfrak{gl}_N\to \mathcal O_{\E_1}(T^*V_{K,N}),\quad E_i^j\mapsto X_i^kY_k^j-X_k^jY_i^k+I_iJ^j-\E_2\delta _i^j,
\fe
which is $\mathrm{GL}_N$-equivaraint. Together with the Moyal product on $\mathcal O_{\E_1}(T^*V_{K,N})$, $\mu_{\E_2}$ gives rise to a $\mathrm{GL}_N$-equivaraint map of left $\mathcal O_{\E_1}(T^*V_{K,N})$-modules
\ie
\mu_{\E_2}:\mathcal O_{\E_1}(T^*V_{K,N})\otimes\mathfrak{gl}_N\to \mathcal O_{\E_1}(T^*V_{K,N}).
\fe
Taking $\mathrm{GL}_N$-invariance, we obtain the quantum moment map
\ie
\mu_{\E_2}:(\mathcal O_{\E_1}(T^*V_{K,N})\otimes\mathfrak{gl}_N)^{\mathrm{GL}_N}\to \mathcal O_{\E_1}(T^*V_{K,N})^{\mathrm{GL}_N}.
\fe
It is easy to verify that the image of $\mu_{\E_2}$ is a two-sided ideal. Similar to $\mathcal O_{\E_1}(T^*V_{K,\bullet})^{\mathrm{GL}_{\bullet}}$, we can define admissible sequences in $(\mathcal O_{\E_1}(T^*V_{K,N})\otimes\mathfrak{gl}_N)^{\mathrm{GL}_N}$ and call this space $(\mathcal O_{\E_1}(T^*V_{K,\bullet})\otimes\mathfrak{gl}_{\bullet})^{\mathrm{GL}_{\bullet}}$. Quantum moment maps for all $N$ give rise to 
\ie
\mu_{\E_2}:(\mathcal O_{\E_1}(T^*V_{K,\bullet})\otimes\mathfrak{gl}_{\bullet})^{\mathrm{GL}_{\bullet}}\to \mathcal O_{\E_1}(T^*V_{K,\bullet})^{\mathrm{GL}_{\bullet}},
\fe
and the image is a two-sided ideal, so we can take the quotient of $\mathcal O_{\E_1}(T^*V_{K,\bullet})^{\mathrm{GL}_{\bullet}}$ by this ideal, this is by definition the large-$N$ limit denoted by $\mathcal O_{\E_1}(\mathcal M^{\E_2}_{K,\bullet})$.

From the definition above, we can write down a set of generators of $\mathcal O_{\E_1}(\mathcal M^{\E_2}_{K,\bullet})$:
\ie
\{N\}\text{ and }\{I_{\alpha}S(X^nY^m)J^{\beta}\}\text{ for all integers }n,m\ge 0.
\fe
Note that Costello also defined a combinatorial algebra $\mathcal A^{\text{comb}}_{\E_1,\E_2}$ in section 10 of \cite{Costello:2017fbo}, which depends on $K$ but not on $N$. This is related to $\mathcal O_{\E_1}(\mathcal M^{\E_2}_{K,\bullet})$ in the sense that generators of $\mathcal A^{\text{comb}}_{\E_1,\E_2}$ are 
\ie
\{N\}\text{ and }\{\frac{1}{\E _1}I_{\alpha}S(X^nY^m)J^{\beta}\}\text{ for all integers }n,m\ge 0,
\fe
when $\E _1\neq 0$ \footnote{In the notation of \cite{Costello:2017fbo} they correspond to
$
D(\emptyset)\text{ and }\mathrm{Sym}(D(\alpha\Downarrow,\uparrow ^n,\downarrow _m,\beta \Uparrow))\text{ for all integers }n,m\ge 0,
$
respectively.}. {Under the aforementioned correspondence between generators, $\mathcal A^{\text{comb}}_{\E_1,\E_2}$ is isomorphic to $\mathcal O_{\E_1}(\mathcal M^{\E_2}_{K,\bullet})$ (Proposition 13.4.3 of \cite{Costello:2017fbo}) when $\E_1$ is invertible.}

The general philosophy of AdS/CFT \cite{Maldacena:1997re} teaches us that the back-reaction of $N$ M2-branes will deform the spacetime geometry. In our case, since the closed strings completely decouple from the analysis, the back-reaction is only encoded in the interaction related to the open strings. More precisely, the back-reaction is already encoded in the 5d-1d interaction Lagrangian \eqref{1dtqmline}, a part of which we reproduce below.
\ie
S_{back}=\int_{\bR_t}t[0,0]c[0,0]dt.
\fe
Here, we can explicitly see $N$ in $t[0,0]$, as 
\ie
t[0,0]=IJ/\E_1=\E_2Tr\D^i_j/\E_1=N\frac{\E_2}{\E_1}
\fe
where in the second equality, we used the F-term relation. 

After taking large $N$ limit, $N$ becomes an element of the algebra $\cA_{\E_1,\E_2}$, which is coupled to the zeroth Fourier mode of the 5d ghost, $c[0,0]$.
\subsection{M5-brane in $\Omega-$deformed M-theory}\label{subsec:M5O}
We want to include one M5(D4)-brane in the story, and review the role played by the new element {coming from the bi-module on M5(D4)-brane} in the boundary and the bulk. 
\begin{table}[H]\centering
\begin{tabular}{|c|c|cc|cccc|cccc|} 
\hline
   &  0 &  1 &  2 &  3 &  4 &  5 &  6 &  7 &  8 &  9 & 10 \\
\hline\hline
 Geometry& $\bR_t$& $\bC_{\E_1}$ &   &  $\bC^2_{NC}$ &  &  &   & $\bC_{\E_3}$ & &$\bR$ & $S^1_{\E_2}$ \\
\hline
$M2(D2)$ &$\times$ &$\times$&$\times$&&&&&& &&\\
\hline
$M5$ & &$\times$&$\times$&$\times$&$\times$&&&& &$\times$&$\times$\\
\hline
$D4$ & &$\times$&$\times$&$\times$&$\times$&&&& &$\times$&\\
\hline
\end{tabular}
\caption{M2, M5-brane}
\label{table:M2, M5-brane}
\end{table}

In the boundary perspective, it intersects with the M2(D2)-brane with two directions and supports 2d $\cN=(2,2)$ supersymmetric field theory with two chiral superfields, whose bottom components are $\varphi$, $\tilde\varphi$, arising from $D2-D4$ strings. This 2d theory interacts with the 3d $\cN=4$ ADHM theory with a superpotential 
\ie
\cW=\tilde\varphi X\varphi
\fe
where $X$ is a scalar component of the adjoint hypermultiplet of the 3d theory. 
\begin{figure}[H]
\centering
\includegraphics[width=10cm]{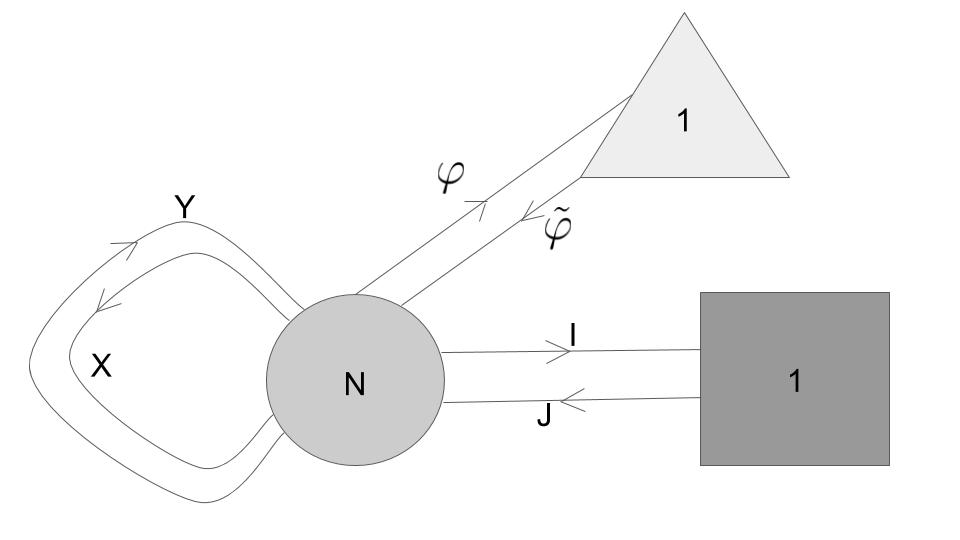}
  \vspace{-20pt}
\caption{3d $\cN=4$ ADHM quiver gauge theory with $G=U(N)$, $F=U(1)$, decorated with $2d$ $\cN=(2,2)$ field theory. $X$, $Y$ are scalars of adjoint hypermultipet, and $I$, $J$ are scalars of (anti)fundamental hypermultiplet. The triangle node encodes the 2d theory. $\varphi$ and $\tilde{\varphi}$ are 2d scalars. In type IIA language, the circle, square, and triangle node correspond to D2, D6, D4 branes, respectively.}
\label{fig5}
 \centering
\end{figure}
A naive set of gauge invariant operators living on the 2d intersection are
\ie
IX^mY^n\tilde\varphi,\quad \varphi X^mY^nJ,\quad  \varphi X^mY^n\tilde\varphi
\fe
The superpotential reduces \cite{Bullimore:2016nji,Gaiotto:2019wcc} the above set into
\ie\label{moduleelem}
\cM_{\E_1,\E_2}=\{b[z^n]=IY^n\tilde\varphi,\quad c[z^n]=\varphi Y^nJ\}
\fe
The set of 2d observables $\cM_{\E_1,\E_2}$ forms a bi-module of the ADHM algebra $\cA_{\E_1,\E_2}$. 

The difference between left and right actions of the algebra $\cA$ on $\cM_{\E_1,\E_2}$ is encoded in the form of a commutator:
\ie\label{modulerels}
\left[a,m\right]=m',\quad\text{where }a\in\cA,\quad m,m'\in\cM_{\E_1,\E_2}
\fe
To verify \eqref{modulerels}, we need to establish the commutation relations between the set of letters $\{\varphi,\tilde\varphi\}$ and $\{X,Y,I,J\}$. Those are given by
\ie\label{superreps}
IP(\varphi,\tilde\varphi)&=P(\varphi,\tilde\varphi)I\\
JP(\varphi,\tilde\varphi)&=P(\varphi,\tilde\varphi)J\\
X^i_jP(\varphi,\tilde\varphi)&=P(\varphi,\tilde\varphi)X^i_j\\
Y^i_jP(\varphi,\tilde\varphi)&=P(\varphi,\tilde\varphi)(Y^i_j+\tilde\varphi^i\varphi_j)\\
X^i_j\varphi_iP(\varphi,\tilde\varphi)&=-\E_1\pa_{\tilde\varphi_j}P(\varphi,\tilde\varphi)\\
X^i_j\tilde\varphi^jP(\varphi,\tilde\varphi)&=-\E_1\pa_{\varphi_i}P(\varphi,\tilde\varphi)
\fe
Again, the non-trivial commutation relations in the last three lines originates from the effect of the particular superpotential $\cW$. {For the derivation, we refer the reader to \cite{Bullimore:2016nji,Gaiotto:2019wcc}.}

{In the $\Omega_{\E_1}$-background, 2d $\cN=(2,2)$ theory localizes to a point,} which is the origin of $\bR_t$. 
\begin{figure}[H]
\centering
\includegraphics[width=10cm]{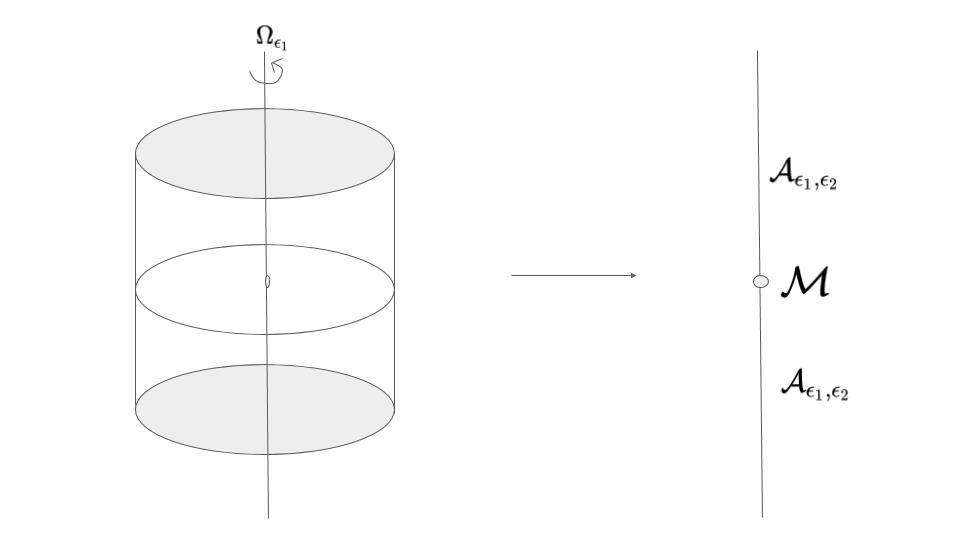}
  \vspace{-20pt}
\caption{Left figure represents a coupled system of 3d $\cN=4$ ADHM theory(the cylinder) and 2d $\cN=(2,2)$ theory(the middle disk in the cylinder) from D2 branes and a D4 brane. {In the $\Omega_{\E_1}$-background, the system localizes to $1d+0d$ system.}}
\label{Omega.jpg}
 \centering
\end{figure}
\noindent 
Hence, the resulting system is {ADHM algebra $\cA_{\E_1,\E_2}$ and bi-module $\cM_{\E_1,\E_2}$ of the algebra.}

To study the bulk perspective, we need to study what degree of freedoms that M5-brane support in the 5d spacetime $\bR_t\times\bC_{NC}^2$ and how the M5-brane interacts with 5d Chern-Simons theory. 5d CS theory is defined in the context of type IIa, and M5-brane is mapped to a D4-brane. The local degree of freedom comes from D4-D6 strings, which are placed on $\{\cdot\}\times\bC\in\bR_t\times\bC^2_{NC}$. These 2d degrees of freedom are actually coming from 4d $\cN=2$ hypermultiplet, as the true intersection between D4 and D6 is $\bC\times\bC_{\E_1}$. {In the $\Omega_{\E_1}$-background, the 4d $\cN=2$ hypermultiplet localizes to a $\beta-\gamma$ system \cite{Oh:2019bgz}.} Hence, we arrive at $\beta-\gamma$ Vertex Algebra on $\bC\subset\bC_{NC}^2$.
\begin{table}[H]\centering
\begin{tabular}{|c|c|cc|cccc|ccc|} 
\hline
   &  0 &  1 &  2 &  3 &  4 &  5 &  6 &  7 &  8 &  9 \\
\hline\hline
 Geometry& $\bR_t$& $\bC_{\E_1}$ &   &  $\bC^2_{NC}$ &  &  &   & $\bC_{\E_3}$ & &$\bR_{\E_2}$ \\
\hline
1d TQM &$\times$ &&&&&&&& &\\
\hline
2d $\beta\gamma$ & &&&$\times$&$\times$&&&& &\\
\hline
5d~CS&$\times$&   &   & $\times$ & $\times$ & $\times$&  $\times$&  &  &\\
\hline
\end{tabular}
\caption{Bulk perspective}
\label{table:M2Directions}
\end{table}
The $\beta-\gamma$ system minimally couples to 5d Chern-Simons theory via
\ie
\int_{\bC}\beta(\bar\pa+A\star)\gamma
\fe
The observables to be compared with those of field theory side: $b[z^n]$ and $c[z^n]$ can be naturally compared with the modes of $\beta$ and $\gamma$: $\pa_z^n\beta$, $\pa_z^n\gamma$, and the Koszul duality manifests itself by the coupling between two types of observables:
\ie\label{couplingGM}
\int_{\{0\}}\pa^{k_1}_{z_2}\B\cdot b[z^{k_1}]+\int_{\{0\}}\pa^{k_2}_{z_2}\gamma\cdot c[z^{k_2}]
\fe
where $z=z_2$, and the integral on a point is merely for a formal presentation.  

The following figure depicts the entire bulk and boundary system including the line and the surface defect, and describes how all the ingredients are coupled.
\begin{figure}[H]
\centering
\includegraphics[width=10cm]{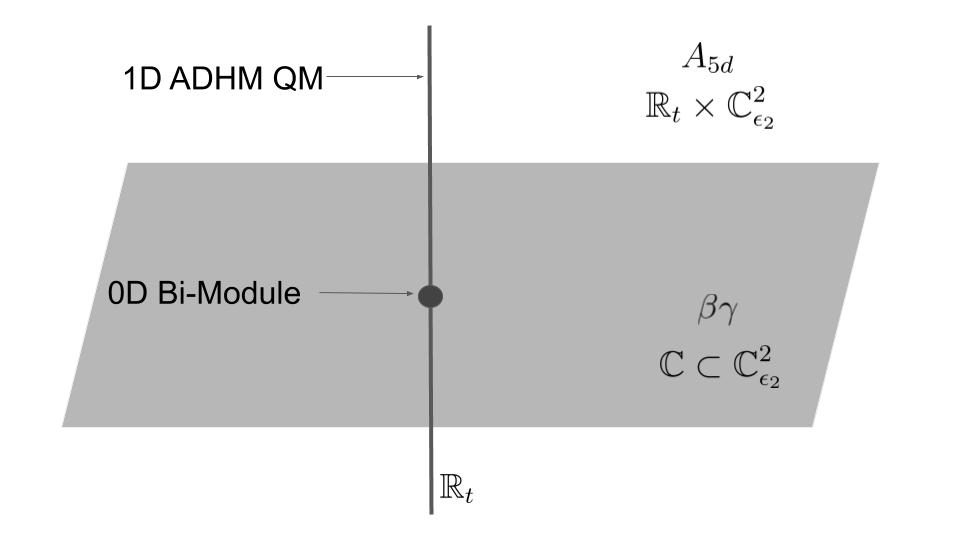}
  \vspace{-20pt}
\caption{5d Chern-Simons$(\bR_t\times\bC^2_{NC})$, 1d generalized Wilson line defect$(\bR_t)$, and 2d surface defect$(\bC\subset\bC^2_{NC})$.}
\label{fig6}
 \centering
\end{figure}

As explained in section \secref{subsec:anomaly}, we need to make sure if the introduction of the 2d system is quantum mechanically consistent, or anomaly free. Imposing the anomaly cancellation condition of the 5d, 2d, 1d coupled system, we should be able to derive the bi-module commutation relations defined in the field theory side. This is the content of \secref{sec:5d2dpert}.
\section{M2-brane algebra and M5-brane module}\label{sec:M2M5}
In this section, we will provide a representative commutation relation for the algebra $\cA_{\E_1,\E_2}$
\ie
\left[a,a'\right]=a_0+\E_1a_1+\E_1^2a_2+\ldots,\quad\text{where }a,a',a_i\in\cA_{\E_1,\E_2}
\fe
and a representative commutation relation for the algebra $\cA_{\E_1,\E_2}$ and the bi-module $\cM_{\E_1,\E_2}$ for $\cA_{\E_1,\E_2}$.
\ie
\left[a,m\right]=m_0+\E_1m_1+\E_1^2m_2+\ldots,\quad\text{where }a\in\cA_{\E_1,\E_2},~m,m_i\in\cM_{\E_1,\E_2}
\fe
We first recall the notation for a typical element of $\cA_{\E_1,\E_2}$ and $\cM_{\E_1,\E_2}$:
\ie
t[m,n]&=\frac{1}{\E_1}TrS(X^mY^n)=\frac{1}{\E_1\E_2}IS(X^mY^n)J\in\cA_{\E_1,\E_2}\\
b[z^m]&=\frac{1}{\E_1}IY^m\tilde\varphi\in\cM_{\E_1,\E_2}\\
c[z^n]&=\frac{1}{\E_1}\varphi Y^nJ\in\cM_{\E_1,\E_2}
\fe
For the convenience of later discussions, we also introduce the notation:
\ie
T[m,n]=\frac{\E_2}{\E_1}TrS(X^mY^n)=\frac{1}{\E_1}IS(X^mY^n)J\in\cA_{\E_1,\E_2}
\fe
Our final goal is to reproduce the $\cA_{\E_1,\E_2}$ algebra from the anomaly cancellation of 1-loop Feynman diagrams in 5d Chern-Simons theory. So, it is important to have commutation relations that yield $\cO(\E_1)$ term in the right hand side, where $\E_1$ is a loop counting parameter in 5d CS theory.
\subsection{M2-brane algebra}\label{subsec:M2}
Since we have not provided a concrete calculation until now, let us give a simple computation to give an idea of ADHM algebra and its bi-module. It is useful to recall $G=U(N)$, $\hat{G}=U(K)$ ADHM algebra, which serves as a practice example, and at the same time as an example that explains the non-triviality of $G=U(N)$, $\hat{G}=U(1)$ ADHM algebra, compared to $K>1$ cases.

It was shown in \cite{Costello:2017fbo} that following commutation holds for $G=U(N)$, $\hat{G}=U(K)$ ADHM algebra. 
\ie\label{unex}
\left[t[1,0],t[0,1]\right]=\E_1t[0,0]t[0,0]\quad\text{or}\quad\left[IXJ,IYJ\right]=\E_1(IJ)(IJ)
\fe
This does not work for $\hat{G}=U(1)$. It is instructive to see why. 
\ie
\left[TrX,TrY\right]&=[X^i_i,Y^j_j]=\D^i_j\D^j_i\E_1=\D^i_j\E_1\\
&=N\E_1
\fe
Multiplying both sides by $\E_2^2/\E_1^2$, we can convert it into $T[m,n]$ basis:
\ie\label{u1ex}
[T[1,0],T[0,1]]=\E_2 T[0,0]
\fe
The RHS of \eqref{u1ex} is different from \eqref{unex} crucially in its dependence on $\E_1$. The RHS of \eqref{u1ex} is $\cO(\E_1^0)$, but that of \eqref{unex} is $\cO(\E_1)$. While it was sufficient to consider this simple commutator to see the $\E_1$ deformation of the algebra for $\hat{G}=U(K)$ with $K>1$, we need to consider a more complicated commutator to see $\cO(\E_1)$ correction in the RHS.

In Appendix \secref{app:alg}, we will derive a set of relations that will determine all other relations, of which the simplest ones are:
\ie\label{ADHMgeneral}
\left[t[3,0],t[0,3]\right]&=9t[2,2]+\frac{3}{2}\big(\sigma_2t[0,0]-\sigma_3t[0,0]t[0,0]\big)\\
\left[t[2,1],t[1,2]\right]&=3t[2,2]-\frac{1}{2}\big(\sigma_2t[0,0]-\sigma_3t[0,0]t[0,0]\big)
\fe
where
\ie
\sigma_2=\E_1^2+\E_2^2+\E_1\E_2,\quad\sigma_3=-\E_1\E_2(\E_1+\E_2)
\fe

To compare the commutation relation to that from 5d Chern-Simons calculation, we need to make sure if the parameters of ADHM algebra $\cA_{\E_1,\E_2}$ are the same as those in 5d CS theory. From \cite{Costello:2017fbo}, the correct parameter dictionary\footnote{We thank Davide Gaiotto, who pointed out this subtlety.} is
\ie\label{shiftingscheme}
(\E_1)_{ADHM}=(\E_1)_{CS},\quad\bigg(\E_2+\frac{1}{2}\E_1\bigg)_{ADHM}=(\E_2)_{CS}
\fe 
Hence, the commutation relation that we are supposed to match from the 5d computation is
\ie\label{ADHMgeneral1}
\left[t[2,1],t[1,2]\right]&=3t[2,2]-\frac{1}{2}\bigg(\big(\E_2^2+\frac{3}{4}\E_1^2\big)t[0,0]+\big(\E_1\E_2^2-\frac{\E_1^3}{4}\big)t[0,0]t[0,0]\bigg)
\fe
There is one term in the RHS of \eqref{ADHMgeneral1} that is in $\cO(\E_1)$ order:
\ie\label{targetalge}
\left[t[2,1],t[1,2]\right]&=\cO(\E_1^0)-\frac{1}{2}\E_1\E_2^2t[0,0]t[0,0]+\cO(\E_1^2)
\fe
We will try to recover the $\cO(\E_1)$ term from 5d Feynman diagram calculation\footnote{The basis used in the Feynman diagram computation is $T[m,n]$, not $t[m,n]$. However, the change of basis does not affect any computation because the $\mathcal O(\E_1)$ term in \eqref{targetalge} is quadratic in $t$. } in section \secref{sec:pert}; the general argument that gauge anomaly cancellation leads to the Koszul dual algebra commutation relation is given in \secref{subsec:anomaly}.
\subsection{M5-brane module}\label{subsec:M5}
We will use the commutation relations \eqref{commadhm}, \eqref{Ftermadhm}, \eqref{superreps} to compute the commutators between $a\in\cA_{\E_1,\E_2}$ and $m\in\cM_{\E_1,\E_2}$, which are defined in \eqref{algebraelem}, \eqref{moduleelem}. When one tries to compute some commutators, one immediately notices some normal ordering ambiguity in a general module element $m$, which can be seen in following example.
\ie
\left[IXJ,(I\tilde\varphi)(\varphi J)\right]=\left[I_iX^i_jJ^j,I_a\tilde\varphi^a\varphi_b J^b\right]
\fe
Assuming that the order of letters is consistent with the order of fields in the real line $\bR_t$, it is obvious that we need to place $\tilde\varphi^a\varphi_b$ together, as they are defined at a point $\{0\}\in\bR_t$\footnote{Recall that $\varphi$, $\tilde\varphi$ are chiral multiplet scalars that are localized at the interface(between the line and the surface). {In the $\Omega_{\E_1}$-background}, the interface localizes to a point. Hence, $\varphi$, $\tilde\varphi$ are localized to be at a point on the line. }. However, it is ambiguous whether we put $I_a$, $J^b$ in the right or left of $\tilde\varphi^a\varphi_b$, as $I_a$, $J^b$ are living on $\bR_t$. We will try to fix this ambiguity to prepare a concrete calculation.

Considering following normal ordering when writing a module element $(IY\varphi)(\varphi J)$ will be enough to fix the ambiguity.
\ie
\rvert\tilde\varphi^j\varphi_k\rvert I_iJ^kY^i{}_j
\fe
We simply choose other letters like $X,Y,I,J$ to be placed on the right side of $\varphi$ and $\tilde\varphi$.

Still, there is an ordering ambiguity. For instance between two words:
\ie
\rvert\tilde\varphi\varphi\rvert IJY\quad vs\quad \rvert\tilde\varphi\varphi\rvert JIY
\fe
We simply choose an alphabetical order to arrange letters. In other words, we use the commutation relations until the letters in the word have an alphabetical order. When the word has an alphabetical order, we contract the gauge indices to form a single-trace word and omit the gauge indices. For instance,
\ie
(\tilde\varphi\varphi):=&\rvert\tilde\varphi^j\varphi_j\rvert \\
(IY\tilde\varphi)(\varphi J):=&\rvert\tilde\varphi^j\varphi_l\rvert I_kJ^lY^k{}_j \\
(I\tilde\varphi)(\varphi J)(IJ):=&\rvert\tilde\varphi^j\varphi_k\rvert I_jJ^kI_iJ^i
\fe
As a consequence, some more steps are needed for the following:
\ie
\rvert\tilde\varphi^j\varphi_k\rvert I_iI_jJ^kJ^i
\fe
That is, we need to commute $I_i$ through $J^k$ to contract with $J^i$. While doing this, we necessarily use $[I_i,J^k]=\E_1\D^k_i+J^kI_i$, which produces two terms.

Having fixed the ordering ambiguity, there is a few things to keep in mind additionally: 
\begin{itemize}
\item{}We use F-term relation and the basic commutation relation between $X$ and $Y$ in maximum times to get rid of X's in the word, since the module only consists of $\varphi$, $\tilde\varphi$, $I$, $J$, $Y$. \\
\item{}To use F-term relation, we first need to pull the target XY(or YX) pair to the right end, not to ruin the gauge invariance, and pull it back to the original position in the word.\\
\item{}To use the superpotential relations$(X\varphi=\E_1\pa_{\tilde\varphi}$ or $X\tilde\varphi=\E_1\pa_\varphi$), we need to bring $X$ right next to $\varphi$ or $\tilde\varphi$.\\
\end{itemize}

Given the prescription, we would like to find $a\in\cA_{\E_1,\E_2}$ and $m\in\cM_{\E_1,\E_2}$ such that the value of $[a,m]$ contains $\cO(\E_1)$ terms. To illustrate the prescription, let us consider following simple example, which will not produce $\cO(\E_1)$ term.\\
\newline
{\bf{Example: }{$\left[IXJ,(IY\tilde\varphi)(\varphi J)\right]$}}\\
\newline
It is much clear and convenient to use closed word version for the algebra element. We will recover the open word at the end by simply multiplying $\E_2$ on the closed words.
\ie
\left[TrX,(IY\tilde\varphi)(\varphi J)\right]=(X)\cdot(IY\tilde\varphi)(\varphi J)-(IY\tilde\varphi)(\varphi J)\cdot(X)
\fe
Compute the first term:
\ie
X^0_0\rvert\tilde\varphi^b\varphi_c\rvert I_aY^a_bJ^c&=\rvert\tilde\varphi^b\varphi\rvert I_a(\E_1\D^a_b+Y^a_bX^0_0)J^c\\
&=\E_1\rvert\tilde\varphi^b\varphi_c\rvert I_bJ^c+(IY\tilde\varphi)(\varphi J)\cdot(X)
\fe
So,
\ie
\left[TrX,(IY\tilde\varphi)(\varphi J)\right]&=\E_1\rvert\tilde\varphi^b\varphi_c\rvert I_bJ^c\\
&=\E_1(I\tilde\varphi)(\varphi J)
\fe
After normalization, by multiplying $\frac{\E_2}{\E_1^3}$ both sides, we get
\ie
\left[T[1,0],b[z]c[1]\right]=\E_2b[1]c[1]
\fe
There is no $\cO(\E_1)$ correction. So, we need to work harder.

The first bi-module commutator that has an $\E_1$ correction with some non-trivial dependence on $\E_2$ is $\left[TrS(X^2Y),(IY\tilde\varphi)(\varphi J)\right]$. After properly normalizing it, we have
\ie\label{targetmodule}
\left[T[2,1],b[z]c[1]\right]=&\bigg(-\frac{5}{3}\E_2T[0,1]+\E_2^2b[1]c[1]\bigg)\\
&+\boxed{\E_1\bigg({-\E_2b[1]c[1]T[0,0]}+{\frac{4}{3}\E_2b[1]c[1]}\bigg)}\\
&+\E_1^2\bigg(-\frac{4}{3}b[1]c[1]T[0,0]\bigg)\\
&+\E_1^3\bigg(-\frac{1}{3}b[1]c[1]b[1]c[1]\bigg)
\fe
Here, we used the re-scaled basis $T[m,n]$ for $\cA_{\E_1,\E_2}$.  This is a better choice to be coherent with the form of the bi-module elements, since $b[z^n]=IY^n\tilde\varphi$ and $c[z^n]=\varphi Y^nJ$ explicitly depend on $I$ and $J$. \footnote{Similar to the algebra case, there might be a shift in parameters $\E_1$ and $\E_2$ in 5d CS side; here, we simply assumed that there is no shift: $(\E_1)_{5d}=(\E_1)_{1d-2d}$, $(\E_2)_{5d}=(\E_2)_{1d-2d}$. If there were a shift in the $\E_2$ dictionary, the tree level term may be a potential problem.}We have shown the proof in Appendix \secref{app:bimod}.

\section{Perturbative calculations in 5d $U(1)$ CS theory coupled to 1d QM}\label{sec:pert}
In this section, we will provide a derivation of the $G=U(N)$, $\hat{G}=U(1)$ ADHM algebra $\cA_{\E_1,\E_2}$ using the perturbative calculation in 5d $U(1)$ CS. We will see the result from the perturbative calculation matches with the expectation \eqref{targetalge}. The strategy, which we will spell out in this section, is to compute the $\cO({\E_1}^1)$ order gauge anomaly of various Feynman diagrams in the presence of the line defect from $M2$ brane$(\bR^1\times\{0\}\subset\bR^1\times\bC^2_{NC})$. Imposing a cancellation of the anomaly for the 5d CS theory uniquely fixes the algebra commutation relations. 

{Purely working in the weakly coupled 5d CS theory}, we will derive the representative commutation relations of the ADHM algebra \eqref{targetalge}:
\begin{itemize}
\item{} {Algebra commutation relation}
\ie\label{sec7relsI}
&\left[t[2,1],t[1,2]\right]=\ldots+\E_1\E_2^2t[0,0]t[0,0]+\ldots\\
\fe
where $t[n,m]$ is a basis element of $\cA_{\E_1,\E_2}$. 
\end{itemize}
As we commented in \secref{subsec:M2}, the algebra basis used in the Feynman diagram computation is $T[m,n]$, which is related to $t[m,n]$ by rescaling with $\E_2$. The effect of the change of basis is trivial in \eqref{sec7relsI}, so we will interchangeably use $t[m,n]$ and $T[m,n]$ without loss of generality.
\subsection{Ingredients of Feynman diagrams}\label{sec:5dcs}
To set-up the Feynman diagram computations, we recall the $5d$ $U(1)$ Chern-Simons theory action on $\bR_t\times\bC^2_{NC}$.
\ie\label{sec7csaction}
S=\frac{{1}}{\E_1}\int_{\bR_t\times\bC^2_{NC}}dz_1dz_2\left(A\star_{\E_2} dA+\frac{2}{3}A\star_{\E_2} A\star_{\E_2} A\right)
\fe 
with $\rvert\E_1\rvert\ll\rvert\E_2\rvert\ll1$. In components, the 5d gauge field $A$ can be written as
\ie
A=A_tdt+A_{\bar{z}_1}d\bar{z}_1+A_{\bar{z}_2}d\bar{z}_2
\fe
with all the components are smooth holomorphic functions on $\bR^1\times\bC^2_{NC}$. 

Now, we want to collect all the ingredients of the Feynman diagram computation. It is convenient to rewrite \eqref{sec7csaction} as
\ie\label{sec7csaction2}
S=\frac{{1}}{\E_1}\int_{\bR^1\times\bC^2_{NC}}dz_1dz_2\left(AdA+\frac{2}{3}A(A\star_{\E_2} A)\right)
\fe
\eqref{sec7csaction2} is equivalent to \eqref{sec7csaction} up to {a} total derivative. From the kinetic term of the Lagrangian, we can read off the following information:
\begin{itemize}
\item{} 5d gauge field propagator $P$ is a solution of
\ie
dz_1\wedge dz_2\wedge dP=\D_{t=z_1=z_2=0}.
\fe
That is,
\ie\label{5dpropa}
P(v_1,v_2)=\la A(v_1)A(v_2)\ra=\frac{\bar{z}_{12}d\bar{w}_{12}dt_{12}-\bar{w}_{12}d\bar{z}_{12}dt_{12}+t_{12}d\bar{z}_{12}d\bar{w}_{12}}{d_{12}^5}
\fe
where
\ie
v_i=(t_i,z_i,w_i),\quad d_{ij}=\sqrt{t_{ij}^2+\rvert z_{ij}\rvert^2+\rvert w_{ij}\rvert^2},\quad t_{ij}=t_i-t_j
\fe
\end{itemize}

From the three-point coupling in the Lagrangian, we can extract 3-point vertex. This is not immediate, as the theory is defined on non-commutative background. Different from $U(N)$ CS, where the leading contribution of the 3-point vertex was $AAA$, the leading contribution of the 3-point coupling of the $U(1)$ gauge bosons starts from $\cO(\E_2)A\pa_{z_1}A\pa_{z_2}A$. The reason is following:
\ie
&\int  dz\wedge dw\wedge A\wedge(A\star_{\E_2}A)\\
=&\int A\wedge\left((A_tdt+A_{\bar z}d\bar z+A_{\bar w}d\bar w)\star(A_tdt+A_{\bar z}d\bar z+A_{\bar w}d\bar w)\right)\\
=&\int  dz\wedge dw\wedge A\wedge\left[dt\wedge d\bar z\left(A_t\star A_{\bar z}-A_{\bar z}\star A_t\right)+\ldots\right]\\
=&\int  dz\wedge dw\wedge A\wedge\left[dt\wedge d\bar z\left(0+2\E_2\left(\pa_{z}A_t\pa_{w}A_{\bar z}-\pa_{w}A_t\pa_{z}A_{\bar z}\right)\right)+\ldots\right]\\
=&~2\E_2\int  dz\wedge dw\wedge A\wedge\left[dt\wedge d\bar z(\pa_{z}A_t\pa_{w}A_{\bar z}-\pa_{w}A_t\pa_{z}A_{\bar z})\right]+\cO(\E_2^2)
\fe
Note that for $U(N)$ case, $SU(N)$ Lie algebra factors attached to each $A$ prevents the $\cO(\E_2^0)$ term to vanish. Still, $U(1)\subset U(N)$ part of $A$ contributes as $\cO(\E_2)$, but it can be ignored, since we take $\E_2\ll1$. 

Hence, in $U(1)$ CS, the 3-point $A\pa_z A\pa_w A$ coupling contributes as
\begin{itemize}
\item{} Three-point vertex $\cI_{3pt}$:
\ie\label{AAAint}
\cI_{3pt}=\E_2dz\wedge dw
\fe
\end{itemize}

Now, we are ready to introduce the line defect into the theory and study how it couples to 5d gauge fields. Classically, $t[n_1,n_2]$ couples to the mode of 5d gauge field by
\ie\label{sec71ptcoup}
\int_{\bR}t[n_1,n_2]\pa_{z_1}^{n_1}\pa_{z_2}^{n_2}Adt
\fe
The last ingredient of the bulk Feynman diagram computation comes from the interaction \eqref{sec71ptcoup}. 
\begin{itemize}
\item{} One-point vertex $\cI_{1pt}^A$:
\ie\label{1ptA}
\cI_{1pt}^A=\begin{cases}t[n_1,n_2]\D_{t,z_1,z_2}\quad&\text{if }\pa_{z_1}^{n_1}\pa_{z_2}^{n_2}A\text{ is a part of an internal propagator} \\ t[n_1,n_2]\pa_{z_1}^{n_1}\pa_{z_2}^{n_2}A\quad&\text{if }\pa_{z_1}^{n_1}\pa_{z_2}^{n_2}A\text{ is an external leg}\end{cases}
\fe
\end{itemize}

Lastly, the loop counting parameter is $\E_1$. Each of the propagator is proportional to ${\E_1}$ and the internal vertex is proportional to $\E_1^{-1}$. Hence, 0-loop order$(\cO({\E_1}^0))$ Feynman diagrams may contain the same number of internal propagators and internal vertices and 1-loop order$(\cO({\E_1}))$ diagrams may contain one more internal propagators than internal vertices. 

Until now, we have collected all the components of the 5d perturbative computation \eqref{5dpropa}, \eqref{AAAint}, \eqref{sec71ptcoup}, and \eqref{1ptA}. With these, let us see what Feynman diagrams have non-zero BRST variations and how the cancelation of BRST variations of different diagrams leads to the ADHM algebra $\cA_{\E_1,\E_2}$.

\subsection{Feynman diagram}\label{subsec:algefeyn1}
The goal of this section is derive the $\mathcal O(\E_1)$-term of $[t[2,1],t[1,2]]$ by Feynman diagrams. We interpret the commutator $[t[2,1],t[1,2]]$ as the following difference between two tree level diagrams
\begin{figure}[H]
\centering
\includegraphics[width=10cm]{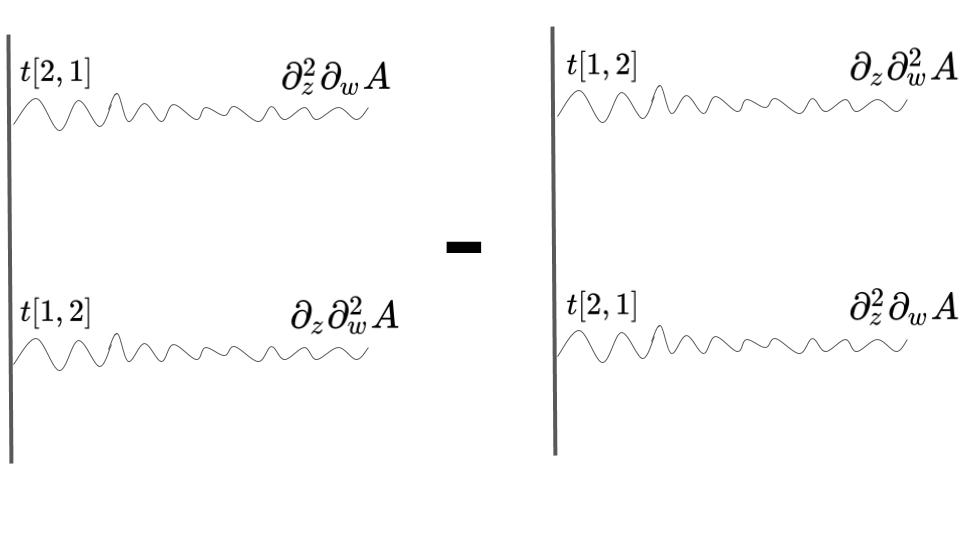}
  \vspace{-30pt}
\caption{There is no internal propagators, but just external ghosts for 5d gauge fields, which directly interact with 1d QM.  {The minus sign in the middle literally means that we take a difference between two amplitudes. In the left diagram $t[1,2]$ vertex is located at $t=0$ and $t[2,1]$ is at $t=\E$. In the right diagram, $t[1,2]$ is at $t=-\E$ and $t[2,1]$ at $t=0$.}}
\label{fig2}
 \centering
\end{figure}
\noindent
The amplitude of the diagram is
\ie
\left[t[2,1],t[1,2]\right]\pa_{z_1}^2\pa_{z_2}A_1\pa_{z_1}\pa_{z_2}^2A_2
\fe
so the BRST variation of the amplitude is proportional to
{
\ie\label{firstbrstv2}
\left[t[2,1],t[1,2]\right]\pa_{z_1}^2\pa_{z_2}A_1\pa_{z_1}\pa_{z_2}^2c_2+\left[t[2,1],t[1,2]\right]\pa_{z_1}^2\pa_{z_2}c_1\pa_{z_1}\pa_{z_2}^2A_2
\fe
Note that the BRST variation on $A$ fields is $Q_{BRST}A=\pa c$. At $\mathcal O(\E_1)$ level, this diagram will cancel all anomalies coming from one-loop diagrams with two external legs coupled to $\pa_{z_1}^2\pa_{z_2}A$ and $\pa_{z_1}\pa_{z_2}^2A$ respectively. Let's enumerate those diagrams, there are two types of diagrams:}
\begin{itemize}
    \item [(1)] {See figure \ref{extrafig}}.
    \begin{figure}[H]
\centering
\includegraphics[width=11cm]{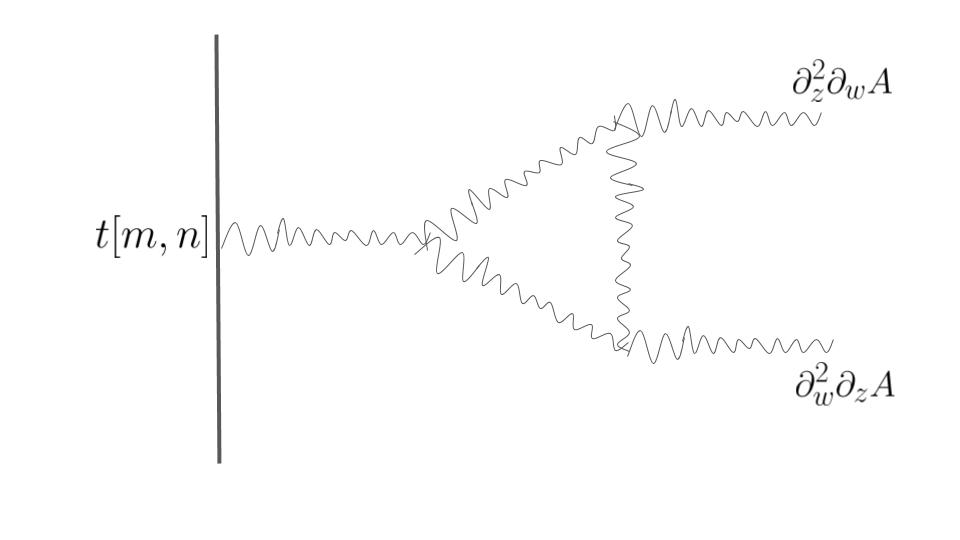}
\caption{A diagram, which has a vanishing amplitude.}
\label{extrafig}
 \centering
\end{figure}
    \item [(2)] See figure \ref{t[0,0]t[0,0]X}.
\begin{figure}[H]
\centering
\includegraphics[width=11cm]{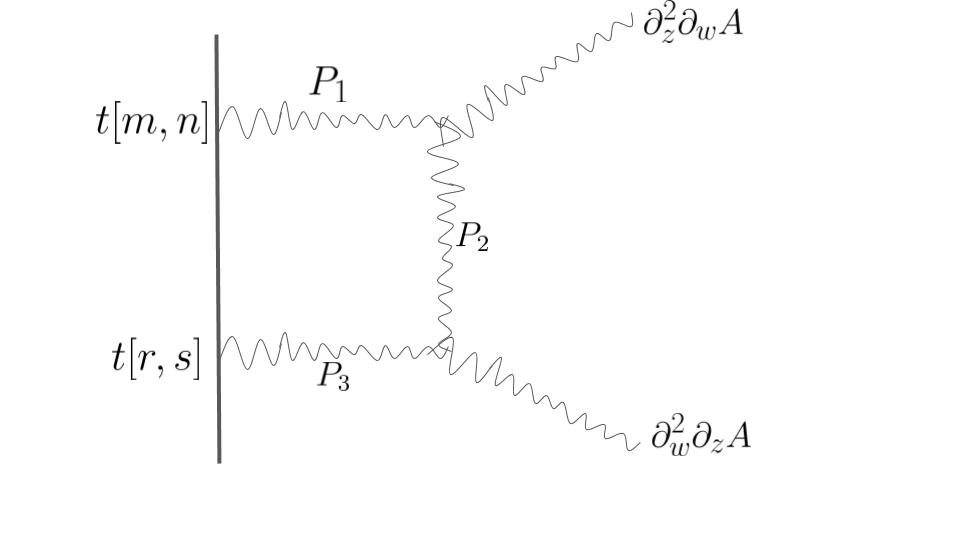}
\caption{The vertical solid line represents the time axis, where 1d topological defect is supported. Internal wiggly lines stand for 5d gauge field propagators $P_i$, and the external wiggly lines stand for 5d gauge field $A$.}
\label{t[0,0]t[0,0]X}
 \centering
\end{figure}
\end{itemize}
{
For the first diagram, we claim that the amplitude is always zero. This can be seen as follows. Let $U(1)$ act on $z$ and $w$ by rotation with weight $1$, then propagators has weight $-2$. For the interaction vertex, it contains the integration measure $dz\wedge dw$ together with $\pa_z$ and $\pa_w$ in the interaction term, so the total weight of the interaction vertex is zero. Each external leg is of weight $3$. Hence, the total weight of the amplitude is $-2-m-n<0$, i.e. it's not invariant under the $U(1)$-rotation symmetry, so the amplitude must be zero.

For the second diagram, we will follow the approach shown in \cite{Costello:2017dso} and show that the diagram has a nonvanishing amplitude if and only if $m=n=r=s=0$. And in the case that it is nonzero, it has a nonvanishing gauge anomaly consequently, under the BRST variation $Q_{BRST}A=\pa c$. 

Let's do the same analysis on the second diagram as the first one, i.e., let $U(1)$ act on $z$ and $w$ by rotation with weight $1$, then the total weight of the amplitude is $-n-m-r-s$. Hence, the diagram is nonzero only if $m=n=r=s=0$. In the following discussion, we will focus on he case $m=n=r=s=0$.
}

We first integrate over the first vertex $(P_1~\pa_{z}^2\pa_{w}A~P_2)$ and then integrate over the second vertex$(P_2~\pa_{z}\pa_{w}^2A~P_3)$.
\newline

\noindent{\bf{First vertex$(P_1~\pa_{z}^2\pa_{w}A~P_2)$}}\\
\newline
First, we focus on computing the integral over the first vertex:
\ie\label{v1integralalg}
\E_1\E_2^2\int_{v_1} dw_1\wedge dz_1\wedge\pa_{z_1}P_1(v_0,v_1)\wedge\pa_{z_2}\pa_{w_1} P_2(v_1,v_2)(z_1^2w_1\pa_{z_1}^2\pa_{w_1}A)
\fe
Note that $\pa_{z_1}$ and $\pa_{w_1}$ comes from the three-point coupling at $v_1$:
\ie
\E_2A\wedge \pa_{z_1}A\wedge\pa_{w_1}A
\fe
And $\pa_{z_2}$ comes from the 3-pt coupling at $v_2$:
\ie
\E_2A\wedge\pa_{z_2}A\wedge\pa_{w_2}A
\fe
We will consider $\pa_{w_2}$ later when we treat the second vertex.

The factor $z_1^2w_1\pa_{z_1}^2\pa_{w_1}A$ is for the external leg attached to $v_1$, which is $c[2,1]$. In short, this is an ansatz, and we can start without fixing $m,n$ in $c[m,n]$. However, we will see that the integral converges to a finite value only with this particular choice of $(m,n)$. For a simple presentation, we will drop $\pa_{z_1}^2\pa_{w_1}A$, and recover it later.

After some manipulation, which we refer to {\bf{Lemma 1}} in \appref{app:2-1}, \eqref{v1integralalg} becomes
\ie\label{crucialsttepo}
-\int_{v_1}dt_1dz_1d\bar z_1dw_1d\bar w_1\frac{\rvert z_1\rvert^2\rvert w_1\rvert^2\bar z_2(\bar w_{12}dt_2-t_{12}d\bar w_2)}{d_{01}^5d_{12}^9}
\fe
The integral \ref{crucialsttepo} can be further simplified by using the typical Feynman integral technique, which can be found in {\bf{Lemma 2}} in \appref{app:2-1}. We are left with
\ie
\bar z_2(\bar w_2dt_2-t_2d\bar w_2)\left(\frac{c_1}{d_{02}^5}+\frac{c_2w_2^2}{d_{02}^7}+\frac{c_3z_2^2}{d_{02}^7}+\frac{c_4z_2^2w_2^2}{d_{02}^9}\right)
\fe
with $c_i$ being a constant. Note that all terms in the parenthesis have a same order of divergence. Therefore, it suffices to focus on the first term to check the convergence of the full integral(we still need to do $v_2$ integral.) 

We will explicitly show the calculation for the first term, and just present the result for the second, third, and fourth term in \eqref{secthirfour}. They are all non-zero and finite. We will denote the first term as $\cP$, which is 1-form. 
\newline

\noindent{\bf{Second vertex$(\cP~\pa_{z_1}^2\pa_{z_2}A~P_3)$}}\\
\newline
Now, let us do the integral over the second vertex$(v_2)$. The remaining things are organized into
\ie\label{v2integralb2}
&\int_{v_2}\cP\wedge\pa_{w_2}P_{3}(v_2,v_3)\wedge dz_2\wedge dw_2 (z_2w_2^2\pa_{z_2}\pa_{w_2}^2A)\\
\fe
where we dropped forms related to $v_3$, as we do not integrate over it. $\pa_{w_2}$ comes from the 3-pt coupling at $v_2$:
\ie
\E_2A\wedge\pa_{z_2}A\wedge\pa_{w_2}A
\fe
The factor $z_2w_2^2\pa_{z_2}\pa_{w_2}^2A$ is for the external leg attached to $v_2$, which corresponds to $c[1,2]$. Again, this is an ansatz. We will see that only this integral converges and does not vanish. We will drop $\pa_{z_2}\pa_{w_2}^2A$ and recover it later.

The integral \eqref{v2integralb2} is simplified to 
\ie\label{crucialsttep2}
\int_{v_2}-\frac{\rvert z_2\rvert^2\rvert w_2\rvert^4}{d^5_{02}d^7_{23}}dt_2d\bar z_2d\bar w_2 dw_2 dz_2
\fe
The intermediate steps can be found in {\bf{Lemma 3}} in \appref{app:2-1}.


Now, it remains to evaluate the delta function at the third vertex and use Feynman technique to evaluate the integral. By {\bf{Lemma 4}} in \appref{app:2-1}, we are left with 
\ie\label{fir}
(const)\E_1\E_2^2t[0,0]t[0,0]\pa_{z_1}^2\pa_{z_2}A_1\pa_{z_1}^1\pa_{z_2}^2A_2
\fe
The BRST variation of the amplitude is
\ie\label{firstbrstv1}
(const)\E_1\E_2^2t[0,0]t[0,0]\pa_{z_1}^2\pa_{z_2}A_1\pa_{z_1}^1\pa_{z_2}^2c_2
\fe
This indicates that the theory is quantum mechanically inconsistent, as it has a Feynman diagram that has nonzero BRST variation. However, as long as there is another diagram whose BRST variation is proportional to the same factors
We can cancel the anomaly.

Hence, imposing BRST invariance of the sum of Feynman diagrams, we bootstrap the possible 1d TQM that can couple to 5d $U(1)$ CS.

An obvious choice is the tree-level diagram where $(\pa_{z_1}A)(\pa_{z_2}A)$ appears explicitly:

By equating \eqref{firstbrstv1} and \eqref{firstbrstv2}, we get
\ie
\left[t[2,1],t[1,2]\right]=\E_1\E_2^2t[0,0]t[0,0]+\ldots
\fe
Therefore, we have reproduced the $\cO(\E_1)$ part of the ADHM algebra $\cA_{\E_1,\E_2}$ commutation relation from the Feynman diagram computation:
\ie\label{notfull}
\left[t[2,1],t[1,2]\right]_{\E_1}=\E_1\E_2^2t[0,0]t[0,0]
\fe
{where $[-,-]_{\E_1}$ is the $\cO(\E_1)$-part of the commutator.}

\section{Perturbative calculations in 5d $U(1)$ CS theory coupled to 2d $\B\gamma$}
\label{sec:5d2dpert}
In this section, we will provide a bulk derivation of the ADHM algebra $\cA_{\E_1,\E_2}$ action on the bi-module $\cM_{\E_1,\E_2}$ of the ADHM algebra $\cA_{\E_1,\E_2}$ using 5d Chern-Simons theory. The strategy is similar to that of the previous section. We will compute the $\cO({\E_1}^1)$ order gauge anomaly of various Feynman diagrams in the presence of the line defect from $M2$ brane$(\bR^1\times\{0\}\subset\bR^1\times\bC^2_{NC})$, and at the same time the surface defect from $M5$ brane on $(\{0\}\times\bC\subset\bR^1\times\bC^2_{NC})$. Imposing a cancellation of the anomaly for the 5d gauge theory uniquely fixes the algebra action on the bi-module. 

We will confirm the representative commutation relation between ADHM algebra and its bimodule {\eqref{sec7relsII}} using the Feynman diagram calculation in 5d Chern-Simons, 1d topological line defect, and 2d $\B\gamma$ coupled system. 
\begin{itemize}
\item{} {The algebra and bimodule commutation relation}
\ie\label{sec7relsII}
\left[t[2,1],b[z^1]c[z^0]\right]_{\E_1}&=\E_1\E_2~t[0,0]c[z^0]b[z^0]+\E_1\E_2~c[z^0]b[z^0]
\fe
\end{itemize}
{where $c[z^n]$ and $b[z^m]$ are elements of the {bi-module.}

\subsection{Ingredients of Feynman diagram}\label{subsec:FeynMod}
Generators of the 0d bimodule $b[z^n]$, $c[z^m]$ couple to the mode of $\B$, $\gamma$ through 
\ie\label{couplinggm}
\int_{\{0\}}\pa^{k_1}_{z_2}\B\cdot b[z^{k_1}]+\int_{\{0\}}\pa^{k_2}_{z_2}\gamma\cdot c[z^{k_2}]
\fe
where $z=z_2$. The coupling is defined at a point, so the integral is only used for a formal presentation.

From the coupling, we learn another ingredient of the 5d-2d Feynman diagram computation:
\begin{itemize}
\item{} One-point vertices from \eqref{couplinggm}:
\ie\label{bc1ptcoup}
\cI^\B_{1pt}&=\begin{cases}b[z^k]\D_{z_2}\quad&\text{if }\pa_{z_2}^k\B\text{ is a part of an internal propagator} \\b[z^k]\pa_{z_2}^{k}\B\quad&\text{if }\pa_{z_2}^{k}\B\text{ is an external leg}\end{cases},\\
\cI^\gamma_{1pt}&=\begin{cases}c[z^k]\D_{z_2}\quad&\text{if }\pa_{z_2}^k\gamma\text{ is a part of an internal propagator} \\c[z^k]\pa_{z_2}^{k}\gamma\quad&\text{if }\pa_{z_2}^{k}\gamma\text{ is an external leg}\end{cases}
\fe
In the case of multiple $\B,\gamma$ internal propagators flowing out, we prescribe to keep only one $\D_{z_2}$ function.
\end{itemize}
The $\B\gamma-$system also couples to the 5d Chern-Simons theory in a canonical way:
{\ie\label{bcCS}
\frac{1}{\E_1}\int_{\mathbb{C}_{z_2}} \B(\pa_{\bar{z}_2}-A_{\bar{z}_2}\star_{\E_2})\gamma
\fe
}
from which we read off the last ingredients of the perturbative computation:
\begin{itemize}
\item{} The $\B\gamma$ propagator $P_{\B\gamma}=\la\B\gamma\ra$ is a solution of  
\ie\label{bcprop1}
\pa_{\bar{z}_2}P_{\B\gamma}=\D_{z_2=0}
\fe
That is,
\ie
P_{\B\gamma}=\la\B\gamma\ra\sim\frac{d z_2}{z_2}
\fe
\item{} The normalized three-point$(\B,A_{5d},\gamma)$ vertex :
\ie\label{abccoupling}
\cI^{\B A\gamma}_{3pt}=1
\fe
\end{itemize}
Note that we are taking the lowest order vertex in the Moyal product expansion of \eqref{bcCS}, and normalize the coefficient to 1 for simplicity in the following computation. Each $\B\gamma$ propagator contributes ${\E_1}$, and each $\B A\gamma$ vertex contributes ${\E_1}^{-1}$. 

Recall that there was a gauge anomaly in the 5d CS theory in the presence of the topological line defect. Similarly, the bi-module coupled to $\B\gamma$-system provides an additional source of the 5d gauge anomaly, since $\B\gamma$ system has the nontrivial coupling \eqref{bcCS} with the 5d CS theory and is charged under the 5d gauge symmetry. For the entire 5d-2d-1d coupled system to be anomaly-free, the combined gauge anomaly should be canceled. The bulk anomaly cancellation condition beautifully fixes the action of the algebra on the bi-module.

{The simplest example involving the bi-module is akin to the first example of \secref{sec:pert}; notice the similarity between Fig \ref{fig1} and Fig \ref{fig7}}. As a result, the calculation in this section resembles that of \secref{subsec:algefeyn1}.

From the ingredients provided above, we can interpret the commutator $\left[t[2,1],b[z^1]c[z^0]\right]$ as the difference between two tree-level diagrams: 
\begin{figure}[H]
\centering
\includegraphics[width=10cm]{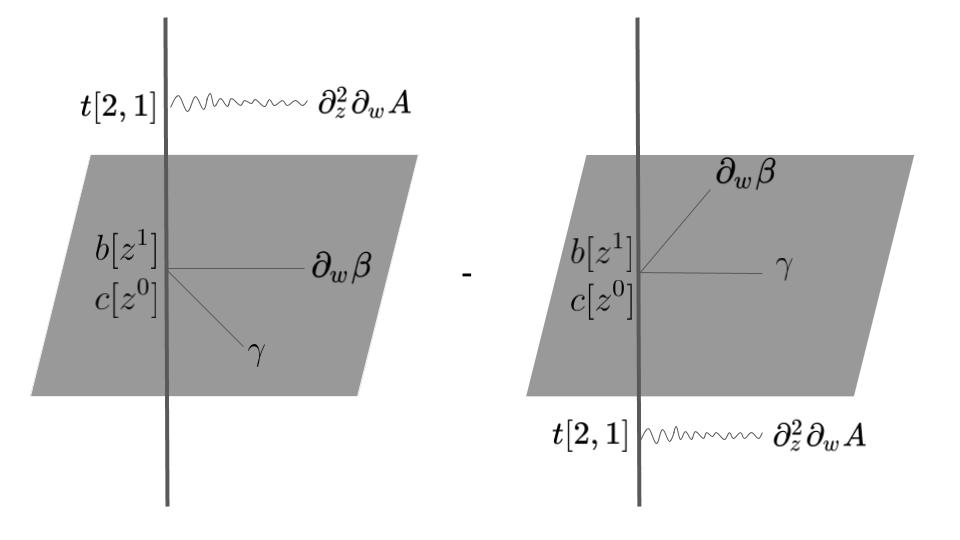}
  \vspace{-20pt}
\caption{Feynman diagrams representing the commutator $[t[2,1],b[z^1]c[z^0]]$. {The vertical straight lines are time axis, and $\B\gamma$ lives on the gray planes. $\B\gamma$ only flows out of the time axis, but not flowing along the time axis.} Note that there is no internal propagators of any sort. All types of lines are external legs, they are modes of $\B$, $\gamma$, $A$.}
\label{fig13}
 \centering
\end{figure}\noindent
As Fig \ref{fig13} does not involve any loops, the amplitude is simply
\ie\label{sec}
\left[t[2,1],b[z^1]c[z^0]\right](\pa_{z}^2\pa_{w}A)(\pa_{w}\B)\gamma
\fe
and its BRST variation is proportional to 
\ie\label{secd}
\left[t[2,1],b[z^1]c[z^0]\right](\pa_{z}^2\pa_{w}c)(\pa_{w}\B)\gamma
\fe
{At $\mathcal O(\E_1)$ level, it will cancel the anomalies coming from all possible one-loop Feynman diagrams with three external legs coupled to $\pa_z^2\pa_w A$, $\gamma$, and $\pa_w\beta$, respectively, so the only possibilities are Figure \ref{fig7} and Figure \ref{fig8}, which we will call the diagram I and the diagram II, respectively.}
\begin{figure}[H]
\centering
\includegraphics[width=11cm]{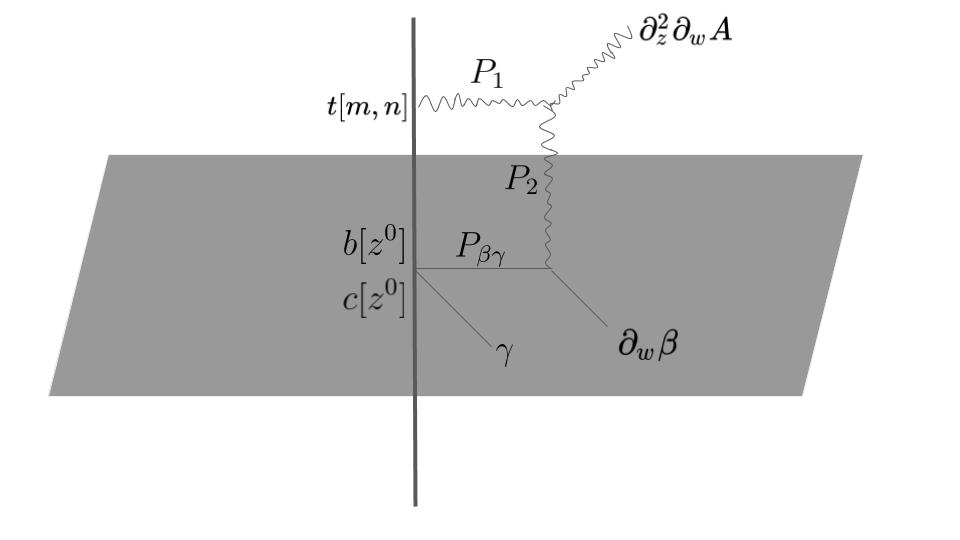}
  \vspace{-15pt}
\caption{The Feynman diagram I. The vertical straight lines are the time axis, and the gray plane is where $\B\gamma$-system is living. The internal horizontal straight lines are $\B\gamma$ propagators and the external slant straight lines are modes of $\B\gamma$. {Note that no $\B\gamma$ propagates along the time axis.} The $\B A\gamma$ three-point vertex is restricted to the $\B\gamma$-plane, but the $AAA$ three-point vertex can be anywhere in the bulk.}
\label{fig7}
 \centering
\end{figure}

\begin{figure}[H]
\centering
\includegraphics[width=10cm]{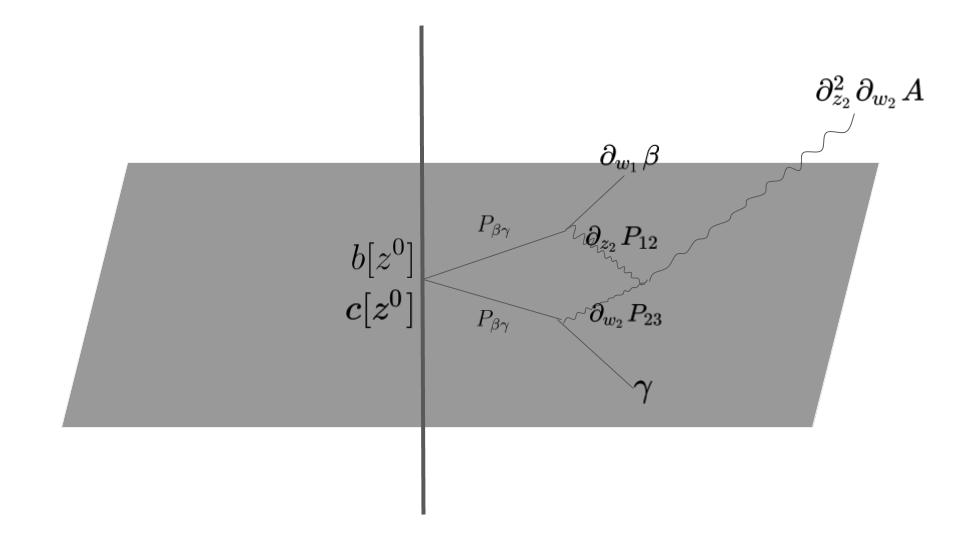}
\caption{The Feynman diagram II.}
\label{fig8}
\centering
\end{figure}
{Before we start doing concrete computations, we make a similar analysis to the ADHM algebra case, i.e., let $U(1)$ rotates the $z$ and $w$ coordinates with weight $1$, then $\beta-\gamma$ propagator has weight $0$, Chern-Simons propagator has weight $-2$ and all interaction vertices have weight zero. It follows that the Feynman diagram I has a total weight of $-m-n$, and the Feynman diagram II has a total weight of $0$. Hence, the amplitude for the Feynman diagram I is nonzero only if $m=n=0$, so in the later discussions we will impose the condition that $m=n=0$.}

\subsection{Feynman diagram I}\label{subsec:simplemod}
In this subsection, we will show that the amplitude of Fig \ref{fig7} is
\ie\label{fig6anomalyp}
(const)~{\E_1} ~\pa_z^2\pa_w A\pa_z\B~\gamma~c[z^0]b[z^0]t[0,0]
\fe
The factor $z_2^2w_2\pa_{z_2}^2\pa_{w_2}A$ is for the external leg attached to the top 3-point vertex, $v_2$. The factor corresponds to {$t[2,1]$}. For the convenience of presentation, We will drop $\pa_{z_2}^2\pa_{w_2}A$ and recover it in the final result.

Along the way, we also show the constant factor in front of \eqref{fig6anomalyp} finite only if the external legs are $\pa_{z}^2\pa_{w}A\pa_z\B\gamma$. For simplicity, we will {abbreviate} the leg factors during the computation. 
\newline

\noindent{\bf{First vertex}}\\
\newline
First, we focus on computing the integral over the first vertex:
\ie\label{v1integral2}
\int_{v_1}\pa_{z_1}P_1(v_0,v_1)\wedge(w_1dw_1)\wedge(z^2_1dz_1)\wedge\pa_{w_1} P_2(v_1,v_2)
\fe
Note that $\pa_{z_1}$ and $\pa_{w_1}$ comes from the three-point coupling at $v_1$:
\ie
\E_2A\wedge \pa_{z_1}A\wedge\pa_{w_1}A
\fe

In {\bf{Lemma 5}} in \appref{app:2-2}, we showed how to evaluate \eqref{v1integral2} and arrive at following expressions.
\ie\label{crucialsttep3b}
-\int^1_0dx\sqrt{x(1-x)}^7\int_{v_1}[dV_1]\frac{(\rvert z_1\rvert^2+x^2\rvert z_2\rvert^2)^2(\rvert w_1\rvert^2+x^2\rvert w_2\rvert^2)t_2d\bar w_2}{(\rvert z_1\rvert^2+\rvert w_1\rvert^2+t_1^2+x(1-x)(\rvert z_2\rvert^2+\rvert w_2\rvert^2+t^2_2))^7}
\fe
where $[dV_1]$ is an integral measure for $v_1$ integral. We see from \eqref{crucialsttep3b} that it was necessary to choose $c[m,n]$, $\beta_n$ to be $c[2,1]$, $\beta_1$. Otherwise, the numerator of \eqref{crucialsttep3b} would contain holomorphic or anti-holomorphic dependence on $z_1$ or $w_1$, and this makes the $z_1$ or $w_1$ integral to vanish.

 Moreover, we can drop the term proportional to $\rvert z_2\rvert^2$, since there is a delta function at the second vertex that evaluates $z_2=0$. So, \eqref{crucialsttep3b} simplifies to
\ie
-\int^1_0dx\sqrt{x(1-x)}^7\int_{v_1}[dV_1]\frac{\rvert z_1\rvert^4(\rvert w_1\rvert^2+x^2\rvert w_2\rvert^2)t_2d\bar w_2}{(\rvert z_1\rvert^2+\rvert w_1\rvert^2+t_1^2+x(1-x)(\rvert z_2\rvert^2+\rvert w_2\rvert^2+t^2_2))^7}
\fe
This is evaluated to 
\ie
\frac{c_1t_2}{d_{02}^3}+\frac{c_2t_2\rvert w_2\rvert^2}{d_{02}^5}
\fe
where $c_1$ and $c_2$ are 1-forms of $v_2$. Let us call them as $\cP^1_{02}$ and $\cP^2_{02}$ respectively.
\newline

\noindent{\bf{Second vertex}}\\
\newline
Now, compute the second vertex integral using the above computation:
\ie
&\int_{v_2}(\cP^1_{02}+\cP^2_{02})\wedge dw_2\frac{1}{w_2}(w_2)\D(z_2=0,t_2=\E)\\
=&~\E_1\int\left(\frac{c_1}{r^5}+\frac{c_2}{r^3}\right)rdrd\theta\\
=&~4\pi^4\E_1\left(\frac{1}{43200\rvert\E\rvert}+\frac{1}{57600\rvert\E\rvert^3}\right)
\fe
We can rescale $\E$ to be 1, so the integral converges. Reinstating Gamma function factors, we finally obtain
\ie
(const)=\frac{\Gamma(7)}{\Gamma(7/2)\Gamma(7/2)}4\pi^4\left(\frac{1}{43200}+\frac{1}{57600}\right)=\frac{112\pi}{3375}
\fe
Hence, the amplitude for the Feynman diagram is
\ie
(const)\E_1\E_2t[0,0]b[z^0]c[z^0](\pa_{z}^2\pa_{w}A)(\pa_{w}\B)\gamma
\fe
Its BRST variation is 
\ie\label{fird}
(const)\E_1\E_2t[0,0]b[z^0]c[z^0](\pa_{z}^2\pa_{w}c)(\pa_{w}\B)\gamma
\fe
By equating \eqref{fird} and \eqref{secd}, we reproduce from the 5d gauge theory (with $\B\gamma$-system) the calculation part of the algebra action on the bi-module, which is
\ie\label{dfdfasd}
\left[t[2,1],b[z^1]c[z^0]\right]=\E_1\E_2t[0,0]b[z^0]c[z^0]+\ldots
\fe

\subsection{Feynman diagram II}\label{diag2}
In this subsection we will reproduce the remaining $\mathcal O(\E_1)$-term in \eqref{sec7relsII}
\ie\label{boxedrelsII}
\left[t[2,1],b[z^1]c[z^0]\right]_{\E_1}=\ldots+\E_1\E_2b[z^0]c[z^0]+\ldots
\fe
by using the Feynman diagram II, see Figure \ref{fig8}.

The amplitude of the diagram is
\ie
(const)\E_2\E_1b[z^0]c[z^0]
\fe
since there are 4 internal propagators$(\E_1^4)$ and 3 internal vertices$(\E_1^{-3})$, one of which is $A\pa A\pa A$ type vertex$(\E_2)$. We will explicitly show that $(const)$ does not vanish and hence the diagram has nonzero BRST variation, which completes the RHS of \eqref{dfdfasd}.
\newline

\noindent{\bf{First vertex}$(P_{\beta\gamma}~\pa_{w_1}\beta~\pa_{z_2}P_{12})$}\\
\newline
First, we focus on computing the integral over the first vertex:
\ie\label{v1integral-1}
\int_{v_1}\frac{1}{w_1}(w_1dw_1)\D(t_1=0,z_1=0)\wedge\pa_{z_2} P_{12}(v_1,v_2)
\fe
Note that $\pa_{w_2}$ comes from the three-point coupling at $v_2$:
\ie
\E_2A\wedge \pa_{z_2}A\wedge\pa_{w_2}A
\fe
This integral evaluates to 
\ie\label{firstv5}
-\frac{2\pi (t_2d\bar z_2+\bar z_2dt_2)\bar z_2}{5\sqrt{t_2^2+\rvert z_2\rvert^2}^5}
\fe
We presented the details in {\bf{Lemma 6.}} in \appref{app:2-3}.
\newline

\noindent{\bf{Third vertex$(P_{\beta\gamma}~\gamma~\pa_{w_2}P_{23})$}}\\
\newline
Second, we focus on computing the integral over the third vertex:
\ie\label{v3integral-1}
\int_{v_3}\frac{1}{w_3}(dw_3)\D(t_3=0,z_3=0)\wedge\pa_{w_2} P(v_2,v_3)
\fe
Note that $\pa_{w_2}$ comes from the three-point coupling at $v_2$:
\ie
\E_2A\wedge \pa_{z_2}A\wedge\pa_{w_2}A
\fe
This integral evaluates to 
\ie\label{thirdv5}
-(t_2d\bar z_2-\bar z_2dt_2)\frac{2\pi}{15w^2_2}\left(\frac{2}{\sqrt{t_2^2+\rvert z_2\rvert^2}^3}-\frac{5\rvert w_2\rvert^2+2t_2^2+2\rvert z_2\rvert^2}{\sqrt{t_2^2+\rvert z_2\rvert^2+\rvert w_2\rvert^2}^5}\right)
\fe
We presented the details in {\bf{Lemma 7.}} in \appref{app:2-3}.
\newline

\noindent{\bf{Second vertex}$(\pa_{z_2}P_{12}~\pa_{z_2}^2\pa_{w_2}A~\pa_{w_2}P_{23})$}\\
\newline
Now, combine \eqref{firstv5} and \eqref{thirdv5}, and compute the second vertex integral; here $z_2^nw_2^m$ denotes the external gauge boson leg.
\ie\label{secondvertex}
&\int_{v_2}dw_2\wedge dz_2\wedge (t_2d\bar z_2-\bar z_2dt_2)\wedge(t_2d\bar z_2+\bar z_2dt_2)\bar z_2\\
&\times\frac{4\pi^2 z_2^nw_2^m}{75w^2_2\sqrt{t_2^2+\rvert z_2\rvert^2}^5}\left(\frac{2}{\sqrt{t_2^2+\rvert z_2\rvert^2}^3}-\frac{5\rvert w_2\rvert^2+2t_2^2+2\rvert z_2\rvert^2}{\sqrt{t_2^2+\rvert z_2\rvert^2+\rvert w_2\rvert^2}^5}\right)\\
=~&\int_{v_2}dw_2\wedge dz_2\wedge d\bar z_2\wedge dt_2\frac{4\pi^2t_2 \rvert z_2\rvert^2}{75w_2\sqrt{t_2^2+\rvert z_2\rvert^2}^5}\left(\frac{2}{\sqrt{t_2^2+\rvert z_2\rvert^2}^3}-\frac{5\rvert w_2\rvert^2+2t_2^2+2\rvert z_2\rvert^2}{\sqrt{t_2^2+\rvert z_2\rvert^2+\rvert w_2\rvert^2}^5}\right) 
\fe
We inserted $(n,m)=(2,1)$ for the external gauge boson leg. Then, $z_2^2$ pairs with $\bar{z}_2^2$, and $w_2$ combines with $1/w_2^2$ to yield $1/w_2$. Since we do not have $d\bar{w}_2$, the integral is a holomorphic integral. If $(n,m)$ were other values, the integral will simply vanish.

In {\bf{Lemma 8.}} in \appref{app:2-3}, we show \eqref{secondvertex} is convergent and bounded as
\ie
c_1< \eqref{secondvertex}<c_2
\fe
where $c_1$, $c_2$ are some finite constants.

Hence, the amplitude of the Feynman diagram is
\ie
(const)\E_1\E_2b[z^0]c[z^0](\pa_{z}^2\pa_{w}A)(\pa_{w}\B)\gamma
\fe
Its BRST variation is therefore nonvanishing:\footnote{We hope there is no confusion between the ghost of the 5d gauge field $\pa_{z}^2\pa_{w}c$ and the module element $c[z^0]$.}
\ie
(const)\E_1\E_2b[z^0]c[z^0](\pa_{z}^2\pa_{w}c)(\pa_{w}\B)\gamma
\fe
This completes the remaining part of the algebra-bi-module commutation relation \eqref{resultmod1}:
\ie
\left[t[2,1],b[z^1]c[z^0]\right]_{\E_1}=\E_1\E_2t[0,0]b[z^0]c[z^0]+\E_1\E_2b[z^0]c[z^0]
\fe

\section{Conclusion}
In this paper, we studied the simplest possible configurations of M2 and M5 branes in the $\Omega-$deformed and topologically twisted M-theory. In particular, we showed the operator algebra living on the M2 branes acts on the operator algebra on M5 brane, and computed the simplest commutators. As the M2 and M5 branes are embedded in $\Omega-$deformed and topologically twisted M-theory, the field theories on the branes have twisted holographic duals in the twisted supergravity. The dual side is interestingly captured by the 5d non-commutative Chern-Simons theory coupled to a topological line defect and a vertex operator algebra. By computing several Feynman diagrams and imposing BRST invariance of the coupled system, we demonstrated that the gravity dual computation can reproduce the operator algebra commutator in the field theory. Lastly, we would like to end the paper with some open questions for future research.

First of all, the derivation of the 5d Chern-Simons theory as a localization of $\Omega-$deformed and topologically twisted 11d supergravity is via the type IIA/M-theory relation. We wonder how one can derive the 5d Chern-Simons theory by a direct localization of 11d supergravity. We hope to study this point in the future.

Second, the system we are considering in this work is the simplest configuration that belongs to the more general framework \cite{Gaiotto:2019wcc}. We can introduce $M2_i$-branes on $\bR_t\times\bC_{\E_i}$ and $M5_I$-branes on $\bC\times\bC_j\times\bC_k$, where $i\in\{1,2,3\}$, $(j,k)\in\{(1,2),(2,3),(3,1)\}$, and $I=\{1,2,3\}\backslash\{j,k\}$. Using the M-theory / type IIB duality, we can map the most general configuration to ``GL-twisted type IIB'' theory \cite{Kapustin:2006pk}, where each M2-brane maps to $(1,0),(0,1),(1,1)$ 1-brane, respectively, and each M5-brane maps to D3-brane whose boundary is provided by $(1,0),(0,1),(1,1)$ 5-branes.

At the corner of the tri-valent vertex, so-called Y-algebra \cite{Gaiotto:2017euk}, which comes from D3-brane boundary degree of freedom \cite{Gaiotto:2008sa,Mikhaylov:2014aoa}, lives. This Vertex Algebra is the most general version of our toy model $\beta\gamma$ system and is labeled by three integers $N_1$, $N_2$, $N_3$, each of which is the number of D3-branes on the three corners of the trivalent graph. Therefore, in principle, one can extend our analysis related to the M5-brane into Y-algebra Vertex Algebra. The Koszul dual object of the the Vertex Algebra was called as universal bimodule $\cB^{N_1,N_2,N_3}_{\E_1,\E_2}$ in \cite{Gaiotto:2019wcc}.

Moreover, our ADHM algebra from $M2_1$-brane has its triality image at $M2_2$-brane and $M2_3$-brane. It was proposed in \cite{Gaiotto:2019wcc} that there is a coproduct structure in $M2_i$-brane algebras in the Coulomb branch algebra language\footnote{It is equally possible to describe the M2-brane algebra in terms of Coulomb branch algebra, as the ADHM theory is a self-mirror in the sense of 3d mirror symmetry \cite{Intriligator:1996ex,deBoer:1996mp}.}. Hence, one can generalize our analysis related to the M2-brane into the most general algebra, obtained by the fusion of three $M2_i$-brane algebras. This was called as universal algebra $\cA^{n_1,n_2,n_3}_{\E_1,\E_2}$ in \cite{Gaiotto:2019wcc}.

\section*{Acknowledgements}
We thank Davide Gaiotto and Kevin Costello for crucial advice and encouragement in various stages of this project, and their comments on our paper. We also thank Miroslav Rapcak for his comments on the draft. JO is grateful to Dongmin Gang, Hee-Cheol Kim, Jaewon Song for their questions during JO's seminar at APCTP. {These questions helped us to shape the outline of the paper.} 

\paragraph{Funding information}
Research of JO was supported in part by Kwanjeong Educational Foundation and by the Visiting Graduate Fellowship Program at the Perimeter Institute for Theoretical Physics and in part by the Berkeley Center of Theoretical Physics. Research at the Perimeter Institute is supported by the Government of Canada through Industry Canada and by the Province of Ontario through the Ministry of Economic Development $\&$ Innovation.

\begin{appendix}
\section{Algebra and bimodule computation}\label{app:bdry}
In this appendix we derive some of the commutation relations for the algebra $\cA_{\E_1,\E_2}$ and the bi-module $\cM_{\E_1,\E_2}$.
\subsection{Algebra}\label{app:alg}
In this subsection we will take a closer look at the algebra $\cA_{\E_1,\E_2}$. We begin with a formal definition of the truncated version of the algebra: the $\mathbb C[\epsilon_1^{\pm},\epsilon_2]$-algebra $\mathcal A^{(N)}$ is generated by $\{X^i_j,Y^i_j,I_i,J^j|1\le i,j\le N\}$ with relations
\ie
\begin{aligned}
\overleftarrow{[X,Y]^i_j}+I_jJ^i=\epsilon_2\delta^i_j\text{ , }[X^i_j,Y^k_l]=\epsilon_1\delta^i_l\delta^k_j&\text{ , }[J^j,I_i]=\epsilon_1\delta^j_i\text{ , }[X^i_j,X^k_l]=[Y^i_j,Y^k_l]=0\\
I_iJ^jS(X^nY^m)^i_j&=(I S(X^nY^m)J)
\end{aligned}
\fe,
where $\overleftarrow{f(X,Y,I,J)}$ means rearranging the expression $f(X,Y,I,J)$ in the order $I<J<X<Y$, $(\cdots)$ means fully contracting all indices, the symbol $S$ means symmetrization. Similarly we define $\overrightarrow{f(X,Y,I,J)}$ as rearranging the expression $f(X,Y,I,J)$ in the order $Y<X<I<J$. The ADHM algebra $\cA_{\E_1,\E_2}$ is the large $N$ limit of $\mathcal A^{(N)}$ where the limit is taken in the sense of the procedure in section \ref{subsec:backreac}. The first relation is the F-term relation, and the following lemma is an obvious consequence of the F-term relation:
\begin{lemma}
\ie
(IS(X^nY^m)J)=\epsilon_2(S(X^nY^m))
\fe
\end{lemma}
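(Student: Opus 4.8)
The plan is to derive the identity $(IS(X^nY^m)J)=\epsilon_2(S(X^nY^m))$ purely from the defining relations of $\mathcal A^{(N)}$, most importantly the F-term relation $\overleftarrow{[X,Y]^i_j}+I_jJ^i=\epsilon_2\delta^i_j$, and then pass to the large-$N$ limit (where the identity is preserved since all manipulations are polynomial in the generators and compatible with the maps $\iota_N^{N'}$ of \secref{subsec:backreac}). The key observation is that the left-hand side $(IS(X^nY^m)J)=I_iJ^j\,S(X^nY^m)^i_j$ contains the ``wrong-order'' combination $J^jS(X^nY^m)^i_j$; using the defining relation $I_iJ^jS(X^nY^m)^i_j=(IS(X^nY^m)J)$ — i.e.\ the last relation in the presentation — we are free to read the index contraction as $I_i\bigl(S(X^nY^m)J\bigr)^i$ with the $J$ on the right, and then the combination $I_iJ^j$ reconstructs exactly the operator $I_jJ^i$ that appears in the F-term relation, up to relabelling of the contracted indices.

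Concretely, I would first rewrite $I_iJ^j\,S(X^nY^m)^i_j$ by moving the symmetrized monomial out of the $I$–$J$ sandwich so that the free indices $i,j$ sit directly on $I$ and $J$; this is legitimate because $S(X^nY^m)$ commutes with nothing special but the contraction pattern is the one singled out by the presentation's last relation. Then I substitute $I_jJ^i=\epsilon_2\delta^i_j-\overleftarrow{[X,Y]^i_j}$ from the F-term relation. The $\epsilon_2\delta^i_j$ piece immediately gives $\epsilon_2\,\mathrm{Tr}\,S(X^nY^m)=\epsilon_2(S(X^nY^m))$, which is the claimed right-hand side. It therefore remains to show that the contribution of the commutator term $\overleftarrow{[X,Y]^i_j}\,S(X^nY^m)^i_j$ vanishes. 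This is the term $\bigl(X^i_kY^k_j-Y^i_kX^k_j\bigr)S(X^nY^m)^j_i$ (with the prescribed ordering), i.e.\ a fully contracted trace of the form $\mathrm{Tr}\bigl([X,Y]\,S(X^nY^m)\bigr)$ plus the $\epsilon_1$-corrections generated when the letters are put into the prescribed order by $[X^i_j,Y^k_l]=\epsilon_1\delta^i_l\delta^k_j$.

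The main obstacle — and the only real content of the argument — is showing that $\mathrm{Tr}\bigl([X,Y]\,S(X^nY^m)\bigr)=0$ inside the algebra, including all $\epsilon_1$ normal-ordering corrections. At the level of the commutative (classical) algebra this is the cyclicity of the trace together with the fact that $S(X^nY^m)$ is symmetric, so $\mathrm{Tr}(XYS)=\mathrm{Tr}(YXS)$ and the difference is zero; the subtlety is that in $\mathcal A^{(N)}$ the trace of a product of noncommuting matrices is only cyclic up to $\epsilon_1$-terms coming from $[X,Y]$, so I must track those terms and argue they cancel against each other, or reorganize the computation so that the symmetrization $S$ absorbs them. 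The cleanest route is probably an induction on $n+m$: establish the base cases $\mathrm{Tr}[X,Y]=0$ and $\mathrm{Tr}([X,Y]X)=\mathrm{Tr}([X,Y]Y)=0$ directly, then peel off one letter from $S(X^nY^m)$, commute it around the trace, and use that the commutator $[[X,Y],X]$ or $[[X,Y],Y]$ is a pure $\epsilon_1$-multiple of a shorter symmetrized word, reducing to the inductive hypothesis. Once this vanishing is in hand, the Lemma follows immediately, and the passage $N\to\infty$ is automatic because both sides are admissible sequences in the sense of \secref{subsec:backreac} and the identity holds for every finite $N$.
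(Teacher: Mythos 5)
Your overall route is certainly the intended one: the paper itself offers no proof, simply calling the Lemma ``an obvious consequence of the F-term relation,'' and the substitution you describe --- read $(IS(X^nY^m)J)$ as $I_iJ^j\,S(X^nY^m)^i_j$ via the last defining relation, then replace $I_iJ^j=\epsilon_2\delta^j_i-\overleftarrow{[X,Y]^j_i}$ --- immediately produces the claimed right-hand side plus the leftover term $-\overleftarrow{[X,Y]^j_i}\,S(X^nY^m)^i_j$. (Note that $I,J$ commute with $X,Y$, so there is no reordering subtlety in this first step.) The Lemma is therefore exactly equivalent to the vanishing of that leftover term, and you correctly isolate this as the only real content of the argument.

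The gap is that you never establish this vanishing. Two things are missing. First, even classically ($\epsilon_1=0$) the statement $\mathrm{Tr}\bigl([X,Y]\,S(X^nY^m)\bigr)=0$ does not follow from ``cyclicity of the trace together with the fact that $S(X^nY^m)$ is symmetric'': cyclicity gives $\mathrm{Tr}(XYS)=\mathrm{Tr}(YSX)$, not $\mathrm{Tr}(YXS)$. What actually makes it work is that $\binom{n+m}{n}S(X^nY^m)$ is the sum over all words $W$ in $n$ letters $X$ and $m$ letters $Y$, so $\sum_W\mathrm{Tr}(XYW)$ and $\sum_W\mathrm{Tr}(YXW)$ count, respectively, the cuts of cyclic words of length $n+m+2$ at an $XY$-adjacency and at a $YX$-adjacency, and every cyclic binary word has equally many of each; this combinatorial step needs to be stated and used. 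Second, the $\epsilon_1$ normal-ordering corrections --- which already produce competing $N\epsilon_1$-terms in the smallest nontrivial case $S(X^1Y^0)=X$ that cancel only after careful bookkeeping --- are dispatched with ``the cleanest route is probably an induction on $n+m$,'' without the induction being set up or the cancellation exhibited. As written, the proposal reduces the Lemma to a nontrivial identity in the Weyl algebra generated by the $X,Y$ variables and then asserts it; until $\overleftarrow{[X,Y]^j_i}\,S(X^nY^m)^i_j=0$ is actually proven, classically and to all orders in $\epsilon_1$, the proof is incomplete.
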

\noindent From now on we will use $t_{n,m}$ to denote $(S (X^nY^m))/\epsilon_1$, note that these generators are denoted by $t[n,m]$ in the rest of this paper, but here we use the subscript to make the presentation more compact. The following is clear
\begin{lemma}\label{Lemma_Basic Commutation Relation}
\ie
\begin{aligned}\relax
 [t_{0,0},t_{n,m}]=0\text{ , }&[t_{1,0},t_{n,m}]=mt_{n,m-1}\text{ , }[t_{0,1},t_{n,m}]=nt_{n-1,m}\\
[t_{2,0},t_{n,m}]=2mt_{n+1,m-1}\text{ , }&[t_{1,1},t_{n,m}]=(m-n)t_{n,m}\text{ , }[t_{0,2},t_{n,m}]=-2nt_{n-1,m+1}
\end{aligned}
\fe
\end{lemma}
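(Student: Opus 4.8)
The plan is to use the fact that, for each of the low-degree elements $t_{k,l}$ with $k+l\le 2$, the inner derivation $\mathrm{ad}(t_{k,l})=[t_{k,l},-]$ is a derivation of $\mathcal A^{(N)}$ and is therefore completely determined by its values on the generators $X,Y,I,J$; the lemma then follows by pushing this through the symmetrized trace together with a short combinatorial count. First I would dispose of $t_{0,0}$: since $S(X^0Y^0)=\mathbf 1$ we have $t_{0,0}=N/\epsilon_1$, a scalar (the central element $\{N\}$ in the large-$N$ limit), so $[t_{0,0},t_{n,m}]=0$ automatically.

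For the remaining generators I would compute $\mathrm{ad}(t_{k,l})$ on $X,Y,I,J$ directly from the defining relations $[X^i_j,Y^k_l]=\epsilon_1\delta^i_l\delta^k_j$, $[X^i_j,X^k_l]=[Y^i_j,Y^k_l]=0$, together with the vanishing of all brackets between $\{X,Y\}$ and $\{I,J\}$. By the Leibniz rule a one-line substitution in each case shows: $\mathrm{ad}(t_{1,0})$ sends $Y\mapsto\mathbf 1$ and annihilates $X,I,J$; $\mathrm{ad}(t_{2,0})$ sends $Y\mapsto 2X$ and annihilates $X,I,J$; $\mathrm{ad}(t_{1,1})$ sends $X\mapsto -X$, $Y\mapsto Y$ and annihilates $I,J$ — it is the infinitesimal generator of the rescaling $X\mapsto\lambda^{-1}X$, $Y\mapsto\lambda Y$; and the mirror computation ($X\leftrightarrow Y$) gives the corresponding statements for $\mathrm{ad}(t_{0,1})$ and $\mathrm{ad}(t_{0,2})$.

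Then I would apply these derivations to $t_{n,m}=\tfrac{1}{\epsilon_1}\mathrm{Tr}\,S(X^nY^m)$. Because a derivation commutes with the index contraction defining $\mathrm{Tr}$, it suffices to evaluate it on the word $S(X^nY^m)$, and the only nontrivial ingredient is the combinatorial identity: replacing, one at a time, each of the $m$ copies of $Y$ in $S(X^nY^m)$ by a fixed symbol $Z$ and summing produces $m$ times the symmetrized word with a single $Y$ turned into $Z$ — i.e.\ $m\,S(X^nY^{m-1})$ for $Z=\mathbf 1$ and $m\,S(X^{n+1}Y^{m-1})$ for $Z=X$ — with the analogous statement for the $X$'s carrying coefficient $n$. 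This is a count of arrangements (for instance a fixed word with $n+1$ $X$'s and $m-1$ $Y$'s is produced $(n+1)\binom{n+m}{n+1}\big/\binom{n+m}{n}=m$ times). Feeding the derivation actions through this identity and the trace yields at once $[t_{1,0},t_{n,m}]=m\,t_{n,m-1}$, $[t_{2,0},t_{n,m}]=2m\,t_{n+1,m-1}$, $[t_{1,1},t_{n,m}]=(m-n)\,t_{n,m}$, and the two remaining relations follow in the same way.

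The only step demanding genuine care — the ``main obstacle'' — is the symmetrization bookkeeping: one must verify that $\mathrm{ad}(t_{k,l})$ applied to the \emph{symmetrized} trace lands back on a symmetrized trace with exactly the stated multiplicity, and that nowhere is the F-term relation $\overleftarrow{[X,Y]}+IJ=\epsilon_2\delta$ or the single-trace identity invoked — indeed every intermediate expression is a pure polynomial in the entries of $X$ and $Y$, so the whole computation takes place inside the subalgebra they generate, and it descends unchanged to $\mathcal A^{(N)}$ and, since no commutator involves $N$, to the large-$N$ limit $\mathcal A_{\epsilon_1,\epsilon_2}$.
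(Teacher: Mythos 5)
Your approach is the right one and is essentially what the paper has in mind: the paper offers no argument at all (it asserts the lemma is ``clear''), and the natural proof is exactly yours --- observe that $\mathrm{ad}(t_{k,l})$ for $k+l\le 2$ is a derivation whose action on the generators is computed in one line from $[X^i_j,Y^k_l]=\epsilon_1\delta^i_l\delta^k_j$, then push it through the symmetrized trace with the multiplicity count you give. Your combinatorial bookkeeping checks out (e.g.\ $\mathrm{ad}(t_{2,0})$ sends $Y\mapsto 2X$ and the replacement count gives $2m\,t_{n+1,m-1}$), and you are right that neither the F-term relation nor the single-trace identity is needed.

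One point deserves care, however: if you actually carry out the ``mirror computation'' you advertise, you get $\mathrm{ad}(t_{0,1})(X^i_j)=\tfrac{1}{\epsilon_1}[Y^k_k,X^i_j]=-\delta^i_j$, hence $[t_{0,1},t_{n,m}]=-n\,t_{n-1,m}$, \emph{not} $+n\,t_{n-1,m}$ as the lemma states. The minus sign is forced by antisymmetry together with the first column: $[t_{1,0},t_{0,1}]=\tfrac{1}{\epsilon_1^2}[X^i_i,Y^k_k]=N/\epsilon_1=t_{0,0}$ implies $[t_{0,1},t_{1,0}]=-t_{0,0}$, whereas the stated relation would give $+t_{0,0}$. (The $[t_{0,2},t_{n,m}]=-2n\,t_{n-1,m+1}$ entry already carries the correct sign, so this is an isolated typo in the lemma.) You should not say the remaining relations ``follow in the same way'' without recording this sign; as written, your proof claims to establish a relation that your own method correctly refutes. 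A second, cosmetic quibble: the identity $[t_{1,0},t_{0,1}]=t_{0,0}=N/\epsilon_1$ does involve $N$, so your parenthetical ``no commutator involves $N$'' is not quite right --- the correct statement is that $\{N\}$ is itself an admissible sequence (the central element), so all the identities are compatible with the transition maps and descend to the large-$N$ algebra.
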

This means that $t_{0,0}$ is central, the linear span of $t_{2,0},t_{1,1},t_{0,2}$ is isomorphic to $\mathfrak{sl}_2$, and the linear span of $t_{m,n}$ with $m+n=L$ is a representation of $\mathfrak{sl}_2$ of spin $L/2$.

\begin{lemma}\label{Lemma_[X,Y^n]}
\begin{align}
\overleftarrow{[X,Y^n]^i_j}&=n\epsilon_2(Y^{n-1})^i_j-\sum_{a+b=n-1}\overleftarrow{(IY^a)_j(Y^bJ)^i}\\
\overrightarrow{[X,Y^n]^i_j}&=n\epsilon_2(Y^{n-1})^i_j-\sum_{a+b=n-1}\overrightarrow{(IY^a)_j(Y^bJ)^i}
\end{align}
In particular, $\overrightarrow{[X,Y]^i_j}=\overleftarrow{[X,Y]^i_j}$.
\end{lemma}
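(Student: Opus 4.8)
\emph{Proof proposal.} The plan is to establish both displayed identities at once by induction on $n$. For $n=1$ the right-hand side of the first formula is $\epsilon_2(Y^0)^i_j-\sum_{a+b=0}\overleftarrow{(IY^a)_j(Y^bJ)^i}=\epsilon_2\delta^i_j-\overleftarrow{I_jJ^i}=\epsilon_2\delta^i_j-I_jJ^i$, which is exactly the F-term relation defining $\mathcal A^{(N)}$; the same reading works for the $\overrightarrow{\phantom{x}}$-version, because in the word $I_jJ^i$ there is no $X$ and both conventions place $I$ before $J$, so nothing has to be moved. This settles the base case and the ``in particular'' claim $\overrightarrow{[X,Y]^i_j}=\overleftarrow{[X,Y]^i_j}$ simultaneously.

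For the inductive step I would start from the derivation property of the commutator in the matrix algebra over $\mathcal A^{(N)}$,
\ie
{}[X,Y^{n+1}]^i_j=[X,Y]^i_k(Y^n)^k_j+Y^i_k[X,Y^n]^k_j,
\fe
and bring each summand to $\overleftarrow{\phantom{x}}$-normal form. In the first summand I would substitute the F-term relation for $[X,Y]^i_k$, obtaining $\epsilon_2(Y^n)^i_j-I_kJ^i(Y^n)^k_j$; since $I$ and $J$ commute with every $Y$, the word $I_kJ^i(Y^n)^k_j$ is already in $\overleftarrow{\phantom{x}}$-order and equals $\overleftarrow{(IY^n)_j(Y^0J)^i}$. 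In the second summand I would insert the induction hypothesis for $[X,Y^n]^k_j$ and push the outer $Y^i_k$ to the right past the $I$ and the $J$ (again a free move), which turns each $\overleftarrow{(IY^a)_j(Y^bJ)^k}$ into $\overleftarrow{(IY^a)_j(Y^{b+1}J)^i}$ and yields $n\epsilon_2(Y^n)^i_j-\sum_{a+b=n-1}\overleftarrow{(IY^a)_j(Y^{b+1}J)^i}$. Adding the two contributions, the $\epsilon_2$-terms combine to $(n+1)\epsilon_2(Y^n)^i_j$, and after relabelling the summation index in the shifted sum it merges with the leftover $(a,b)=(n,0)$ term from the first summand into $\sum_{a+b=n}\overleftarrow{(IY^a)_j(Y^bJ)^i}$ — exactly the claim for $n+1$. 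The $\overrightarrow{\phantom{x}}$-identity follows by the same computation with $\overrightarrow{\phantom{x}}$ in place of $\overleftarrow{\phantom{x}}$ everywhere; alternatively one notes once and for all that $\overleftarrow{(IY^a)_j(Y^bJ)^i}$ and $\overrightarrow{(IY^a)_j(Y^bJ)^i}$ are the same element of $\mathcal A^{(N)}$ (because $I,J$ commute with all $Y$'s), so the two displayed formulas are just two normal forms of the single element $[X,Y^n]^i_j$.

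I do not anticipate a conceptual obstacle; the one thing to watch is that all the reorderings used in the normal-ordering steps — commuting $J$ past strings of $Y$'s, commuting $Y$'s among themselves, and commuting the extra $Y^i_k$ past $I$ and $J$ — are strictly trivial, with no $\epsilon_1$-corrections, so the induction stays clean and the two orderings genuinely coincide. A direct (non-inductive) variant also works and may be cleaner to write up: expand $[X,Y^n]^i_j=\sum_{a+b=n-1}(Y^a)^i_k[X,Y]^k_l(Y^b)^l_j$, substitute the F-term once, and collect using $[Y,Y]=[I,Y]=[J,Y]=0$, with the prefactor $n$ emerging simply as the number of summands.
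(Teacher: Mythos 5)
The paper states this lemma without proof, so there is nothing to compare against; judged on its own, your argument has a genuine gap, and it sits exactly at the point you flag as ``the one thing to watch'': the reorderings are \emph{not} free of $\epsilon_1$-corrections. First, the move ``push the outer $Y^i_k$ past $I$ and $J$'' and your closing remark that $\overleftarrow{(IY^a)_j(Y^bJ)^i}$ and $\overrightarrow{(IY^a)_j(Y^bJ)^i}$ coincide both rest on $[I,Y]=[J,Y]=0$. That is not available here: the F-term relation ties $I_lJ^m$ to $\epsilon_2\delta^m_l-\overleftarrow{[X,Y]^m_l}$, which does not commute with strings of $Y$'s, so $[I_lJ^m,(Y^a)^l_j(Y^b)^i_m]\neq 0$. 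Indeed, Proposition \ref{Proposition_(IY^a)(Y^bJ)} of this appendix computes $\overleftarrow{(IY^a)_j(Y^bJ)^i}-\overrightarrow{(IY^a)_j(Y^bJ)^i}=\epsilon_1(Y^a)^i_j(Y^b)-\epsilon_1(Y^a)(Y^b)^i_j$, which your ``once and for all'' observation would force to vanish. Second, you conflate the algebra element $[X,Y^n]^i_j$ with the normal-ordered expression $\overleftarrow{[X,Y^n]^i_j}$ that the lemma is actually about: by Lemma \ref{Lemma_Y^nXY^m} (with $m=0$) these differ by $\epsilon_1\sum_{a+b=n-1}(Y^a)^i_j(Y^b)$, and likewise $[X,Y]^i_k=\overleftarrow{[X,Y]^i_k}+N\epsilon_1\delta^i_k$, so ``substituting the F-term for $[X,Y]^i_k$'' already drops an $N\epsilon_1(Y^n)^i_j$. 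The stated formulas are nevertheless correct — at $n=2$ one can check that these $\epsilon_1$ discrepancies cancel precisely against the nonzero $[Y^i_k,I_jJ^k]$ that you set to zero — but your proof does not track any of them, so it does not establish the identity.

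The repair is to run the induction (or telescoping) directly on the normal-ordered object, never moving an $I$ or $J$ past a $Y$. Write $\overleftarrow{[X,Y^n]^i_j}=X^i_k(Y^n)^k_j-X^k_j(Y^n)^i_k$ with the single $X$ standing to the left of all $Y$'s, and telescope it into $n$ copies of $\overleftarrow{[X,Y]^p_q}$, each left-multiplying a string of $Y$'s; for $n=2$,
\ie
X^i_k(Y^2)^k_j-X^k_j(Y^2)^i_k=\big(X^i_kY^k_l-X^k_lY^i_k\big)Y^l_j+\big(X^k_lY^l_j-X^l_jY^k_l\big)Y^i_k,
\fe
where the only rearrangements used are $[Y,Y]=0$ and dummy-index relabelling. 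Substituting $\overleftarrow{[X,Y]^p_q}=\epsilon_2\delta^p_q-I_qJ^p$ then lands each $I_qJ^p$ already to the left of the surviving $Y$'s, giving $n\epsilon_2(Y^{n-1})^i_j-\sum_{a+b=n-1}\overleftarrow{(IY^a)_j(Y^bJ)^i}$ with no letter crossing required. The $\overrightarrow{\phantom{x}}$ identity follows from the mirror telescoping with $X$ kept on the right, and the ``in particular'' statement for $n=1$ follows because the two right-hand sides coincide — not because the two orderings of $(IY^a)_j(Y^bJ)^i$ agree in general.
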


\begin{lemma}\label{Lemma_Y^nXY^m}
\begin{align}
(Y^nXY^m)^i_j-\overleftarrow{(Y^nXY^m)^i_j}&=-\epsilon_1\sum_{a+b=n-1}(Y^{a+m})^i_j(Y^b)\\
(Y^nXY^m)^i_j-\overrightarrow{(Y^nXY^m)^i_j}&=\epsilon_1\sum_{a+b=m-1}(Y^a)(Y^{b+n})^i_j
\end{align}
\end{lemma}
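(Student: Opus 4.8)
The key observation is that both identities concern nothing more than moving the single $X$ past a string of $Y$'s, so the only inputs needed are the basic commutation relation $[X^i_j,Y^k_l]=\epsilon_1\delta^i_l\delta^k_j$ and the fact that the $Y^i_j$ commute among themselves (so that any word built purely from $Y$'s is an honest matrix power); neither the F-term relation nor the preceding lemmas play a role. The plan is a direct induction: on $n$ for the first identity, on $m$ for the second. Throughout, one must keep in mind that the reordering $\overleftarrow{\cdot}$ merely permutes letters within a word to the order $I<J<X<Y$ without altering the index contractions, so that $\overleftarrow{(Y^nXY^m)^i_j}$ is $\sum_{p,q}X^p_q(Y^n)^i_p(Y^m)^q_j$ (the word with $X$ pulled to the front but the chain $i\to Y^n\to X\to Y^m\to j$ intact), not $(XY^{n+m})^i_j$.

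For the first identity I would induct on $n$. The base case $n=0$ is immediate, since $XY^m$ is already in the order $I<J<X<Y$ and the right-hand side is an empty sum. For the inductive step, peel one $Y$ off the left, $(Y^nXY^m)^i_j=Y^i_k\,(Y^{n-1}XY^m)^k_j$, and substitute the inductive hypothesis for $(Y^{n-1}XY^m)^k_j$. Its trace-correction part, after multiplication by $Y^i_k$ and re-indexing, becomes $-\epsilon_1\sum_{a+b=n-2}(Y^{a+m+1})^i_j(Y^b)$, i.e. exactly the $a\ge 1$ terms of the desired sum. For the already-ordered part one has $Y^i_k\,\overleftarrow{(Y^{n-1}XY^m)^k_j}=\sum_{p,q}Y^i_k X^p_q(Y^{n-1})^k_p(Y^m)^q_j$; commuting the single $Y^i_k$ to the right of $X^p_q$ using $Y^i_k X^p_q=X^p_q Y^i_k-\epsilon_1\delta^p_k\delta^i_q$ splits this into $\sum_{p,q}X^p_q(Y^n)^i_p(Y^m)^q_j=\overleftarrow{(Y^nXY^m)^i_j}$ plus the leftover $-\epsilon_1\sum_{k,p,q}\delta^p_k\delta^i_q(Y^{n-1})^k_p(Y^m)^q_j=-\epsilon_1\,\mathrm{Tr}(Y^{n-1})\,(Y^m)^i_j$, which supplies the missing $a=0$ term. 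The trace here is the one structural feature worth highlighting: it arises precisely because the Kronecker delta $\delta^p_k$ closes the index chain of $Y^{n-1}$ into a loop. Collecting the three pieces gives the stated formula.

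The second identity follows by the mirror argument: induct on $m$, peel a $Y$ off the right end, $(Y^nXY^m)^i_j=(Y^nXY^{m-1})^i_k\,Y^k_j$, and commute that $Y^k_j$ to the left of $X$; the correction now carries a $+\epsilon_1$ because one uses $X^p_q Y^k_j=Y^k_j X^p_q+\epsilon_1\delta^k_q\delta^p_j$ instead of its negative. Alternatively one could deduce it from the first identity via the anti-automorphism of $\mathcal A^{(N)}$ that reverses the order of all letters and transposes matrix indices, which swaps the two orderings and interchanges $n$ and $m$; but checking that this map respects all the defining relations is a fiddly side computation, so the parallel induction is probably the cleaner route. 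Since the computations are elementary, the only real obstacle is bookkeeping — tracking which indices remain contracted through $X$ after it has been displaced, and using the correct reading of $\overleftarrow{\cdot}$ — so in a careful write-up I would isolate the single-$Y$ commutation as a short sublemma in explicit index notation and then simply quote it at each step of the induction.
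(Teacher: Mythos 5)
Your proof is correct: the induction on $n$ (resp.\ $m$), peeling one $Y$ off the appropriate end and using only $[X^i_j,Y^k_l]=\epsilon_1\delta^i_l\delta^k_j$ together with the mutual commutativity of the $Y$'s, reproduces both identities, with the $a=0$ (resp.\ $b=0$) trace term arising exactly as you describe from the Kronecker delta closing the $Y^{n-1}$ (resp.\ $Y^{m-1}$) index loop. The paper states this lemma without proof as an elementary consequence of the basic commutation relations, and your argument is precisely the direct verification it takes for granted, including the correct reading of $\overleftarrow{\;\cdot\;}$ and $\overrightarrow{\;\cdot\;}$ as letter permutations that preserve the index contractions.
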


\noindent Combine Lemma \ref{Lemma_[X,Y^n]} and \ref{Lemma_Y^nXY^m}, we immediately see that
\ie\label{Eqn_[X,Y^n]}
\begin{aligned}\relax
 [X,Y^n]^i_j&=\overleftarrow{[X,Y^n]^i_j}-(Y^nX)^i_j+\overleftarrow{(Y^nX)^i_j}\\
&=n\epsilon_2(Y^{n-1})^i_j-\sum_{a+b=n-1}\overleftarrow{(IY^a)_j(Y^bJ)^i}+\epsilon_1\sum_{a+b=n-1}(Y^{a})^i_j(Y^b)\\
&=n\epsilon_2(Y^{n-1})^i_j-\sum_{a+b=n-1}\overrightarrow{(IY^a)_j(Y^bJ)^i}+\epsilon_1\sum_{a+b=n-1}(Y^{a})^i_j(Y^b)
\end{aligned}
\fe

\begin{proposition}\label{Proposition_(IY^a)(Y^bJ)}
\begin{align}
\overleftarrow{(IY^a)_j(Y^bJ)^i}=\overrightarrow{(IY^a)_j(Y^bJ)^i}+\epsilon_1(Y^a)^i_j(Y^b)-\epsilon_1(Y^a)(Y^b)^i_j
\end{align}
\end{proposition}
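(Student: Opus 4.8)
The plan is to compute $\overleftarrow{(IY^a)_j(Y^bJ)^i}$ and $\overrightarrow{(IY^a)_j(Y^bJ)^i}$ separately by recording exactly what rearrangements are needed, and then to subtract. The key observation is that the left-ordered word $\overleftarrow{(IY^a)_j(Y^bJ)^i}$ arranges the letters in the order $I<J<X<Y$, while the right-ordered word $\overrightarrow{(IY^a)_j(Y^bJ)^i}$ uses the order $Y<X<I<J$; since the word $(IY^a)_j(Y^bJ)^i$ contains only the letters $I$, $J$, and $Y$ (no $X$), the only non-commuting pair that can produce a correction is $(I,J)$, whose commutator is $[J^j,I_i]=\epsilon_1\delta^j_i$ from the defining relations of $\mathcal A^{(N)}$. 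So the difference between the two orderings is a sum of $\epsilon_1$-terms coming from moving the single $I$ past the single $J$ (or vice versa), with the $Y$'s acting as spectators that simply carry indices along.

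Concretely, I would first write $(IY^a)_j(Y^bJ)^i = I_k (Y^a)^k{}_j (Y^b)^i{}_l J^l$ and note that the $Y$'s commute with everything relevant (with each other, and $[X,Y]$ does not enter since no $X$ is present). To left-order, I push $I_k$ and $J^l$ to the far left in the order $I<J$; to right-order, I push them to the far right in the order $I<J$. In both cases the only subtlety is the relative order of the $I$ and the $J$. I would track the single commutator: using $J^l I_k = I_k J^l - \epsilon_1\delta^l_k$ (equivalently $[J^j,I_i]=\epsilon_1\delta^j_i$), commuting $I$ past $J$ picks up $-\epsilon_1\delta^l_k$, and contracting this with the spectator $Y$-factors $(Y^a)^k{}_j(Y^b)^i{}_l$ produces $-\epsilon_1 (Y^a)^l{}_j (Y^b)^i{}_l = -\epsilon_1 (Y^{?})\cdots$ — here one must be careful about which trace forms and which open-index forms appear, and this is where the asymmetry between $(Y^a)^i{}_j(Y^b)$ and $(Y^a)(Y^b)^i{}_j$ comes from. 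The term $\epsilon_1(Y^a)^i{}_j(Y^b)$ will arise from one ordering direction and $-\epsilon_1(Y^a)(Y^b)^i{}_j$ from the other, matching the claimed right-hand side.

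The main obstacle I anticipate is bookkeeping of the free indices $i,j$ and the contracted indices through the chain of $Y$'s — in particular, making sure that when $I$ and $J$ swap, the delta function $\delta$ contracts the correct pair of $Y$-indices so that one genuinely gets a single trace $(Y^b)$ times an open factor $(Y^a)^i{}_j$ in one case and the transposed pattern in the other. It may help to first do the $a=b=0$ case (just $I_j J^i$ versus its reordering) as a sanity check, then the $a=1,b=0$ and $a=0,b=1$ cases, and finally induct or directly read off the general pattern; the structure of Lemma \ref{Lemma_Y^nXY^m} and equation \eqref{Eqn_[X,Y^n]} suggests the general telescoping already, so no genuinely new idea is needed beyond careful index tracking. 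Throughout, one should remember that $(\cdots)$ denotes full contraction, so e.g. $(Y^b)$ means $\mathrm{Tr}(Y^b)$, and the identity is being proved as an equality of elements of $\mathcal A^{(N)}$ (hence valid in the large-$N$ limit $\cA_{\E_1,\E_2}$ as well).
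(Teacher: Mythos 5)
Your approach has a genuine gap, and it sits exactly where you flagged uncertainty. In both target orderings ($I<J<X<Y$ and $Y<X<I<J$) the letter $I$ precedes the letter $J$, so no $I$--$J$ transposition is ever performed: the two rearrangements of $I_k(Y^a)^k_j(Y^b)^i_lJ^l$ are $I_kJ^l(Y^a)^k_j(Y^b)^i_l$ and $(Y^a)^k_j(Y^b)^i_lI_kJ^l$, which differ only in whether the block $I_kJ^l$ sits to the left or to the right of the $Y$'s. Consequently the relation $[J^j,I_i]=\epsilon_1\delta^j_i$, which you identify as the sole source of the $\epsilon_1$ corrections, never enters; and if, as you assume, the $Y$'s were genuine spectators commuting with $I$ and $J$, the two orderings would be \emph{equal} and the proposition would be false. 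A quick test with $a=1$, $b=0$ shows the claimed correction is $\epsilon_1\big(N\,Y^i_j-(\mathrm{Tr}\,Y)\,\delta^i_j\big)\neq 0$; the closed traces $(Y^a)$, $(Y^b)$ and the factor $N=(Y^0)$ cannot be produced by a single $\delta$ coming from an $I$--$J$ swap.

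The missing idea is the F-term relation. The difference of the two orderings is the commutator $[I_lJ^m,(Y^a)^l_j(Y^b)^i_m]$, and the paper evaluates it by substituting $I_lJ^m=\epsilon_2\delta^m_l-\overleftarrow{[X,Y]^m_l}$: the composite $I_lJ^m$, although built from letters one might expect to commute past $Y$, is identified in $\mathcal A^{(N)}$ with an expression containing $X$'s, and it is $[X^i_j,Y^k_l]=\epsilon_1\delta^i_l\delta^k_j$, fed through Lemma \ref{Lemma_Y^nXY^m}, that generates the two $\epsilon_1$ terms --- the contraction of the resulting $\delta$'s is precisely what closes a chain of $Y$'s into the traces $(Y^a)$ and $(Y^b)$. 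Without invoking the F-term relation your computation terminates at zero, so the proposal as written cannot be completed into a proof; the fix is to replace the $I$--$J$ bookkeeping by the F-term substitution and then apply Lemma \ref{Lemma_Y^nXY^m}.
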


\begin{proof}
\begin{align*}
\overleftarrow{(IY^a)_j(Y^bJ)^i}-\overrightarrow{(IY^a)_j(Y^bJ)^i}&=I_lJ^m(Y^a)^l_j(Y^b)^i_m-(Y^a)^l_j(Y^b)^i_mI_lJ^m\\
&=[I_lJ^m,(Y^a)^l_j(Y^b)^i_m]\\
&=[-\overleftarrow{[X,Y]^m_l},(Y^a)^l_j(Y^b)^i_m]\\
&=(Y^a)^l_j(Y^b)^i_m\overrightarrow{[X,Y]^m_l}-\overleftarrow{[X,Y]^m_l}(Y^a)^l_j(Y^b)^i_m\\
&=\overrightarrow{(Y^b[X,Y]Y^a)^i_j}-\overleftarrow{(Y^b[X,Y]Y^a)^i_j}
\end{align*}
where in the third line we used the F-term relation and in the fourth line we used the equation $\overrightarrow{[X,Y]^m_l}=\overleftarrow{[X,Y]^m_l}$ (cf. Lemma \ref{Lemma_[X,Y^n]}). Then the result follows from Lemma \ref{Lemma_Y^nXY^m}. 
\end{proof}

\begin{proposition}
\begin{align}
(Y^c)^i_k\overleftarrow{(IY^a)_j(Y^bJ)^k}&=\overleftarrow{(IY^a)_j(Y^{b+c}J)^i}+\epsilon_1(Y^{a+c})^i_j(Y^b)-\epsilon_1(Y^a)^i_j(Y^{b+c})\\
\overrightarrow{(IY^a)_k(Y^bJ)^i}(Y^c)^k_j&=\overrightarrow{(IY^{a+c})_j(Y^bJ)^i}+\epsilon_1(Y^a)(Y^{b+c})^i_j-\epsilon_1(Y^{a+c})(Y^b)^i_j
\end{align}
\end{proposition}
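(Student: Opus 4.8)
The plan is to reduce both identities to Proposition \ref{Proposition_(IY^a)(Y^bJ)}, applied twice, together with two elementary facts that hold because $[Y^i_j,Y^k_l]=0$: powers of $Y$ commute among themselves, and the traces $(Y^n)=\mathrm{Tr}(Y^n)$ commute with anything built out of $Y$ alone. Throughout I abbreviate $L(a,b)^i_j:=\overleftarrow{(IY^a)_j(Y^bJ)^i}=I_lJ^m(Y^a)^l_j(Y^b)^i_m$ and $R(a,b)^i_j:=\overrightarrow{(IY^a)_j(Y^bJ)^i}=(Y^a)^l_j(Y^b)^i_mI_lJ^m$, so that Proposition \ref{Proposition_(IY^a)(Y^bJ)} reads $L(a,b)^i_j-R(a,b)^i_j=\epsilon_1(Y^a)^i_j(Y^b)-\epsilon_1(Y^a)(Y^b)^i_j$.

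For the first identity I would start from $(Y^c)^i_kL(a,b)^k_j$ and immediately trade $L(a,b)$ for $R(a,b)$ using Proposition \ref{Proposition_(IY^a)(Y^bJ)}; merging the adjacent $Y$-powers in the two resulting correction terms gives $(Y^c)^i_kL(a,b)^k_j=(Y^c)^i_kR(a,b)^k_j+\epsilon_1(Y^{a+c})^i_j(Y^b)-\epsilon_1(Y^a)(Y^{b+c})^i_j$. The reason to pass to $R$ first is that in $R$ the letters $I,J$ already sit to the right of every $Y$, so the external factor $(Y^c)^i_k$ only has to slide past the commuting block $(Y^a)^p_j$ to reach $(Y^b)^k_q$ and fuse with it into $(Y^{b+c})^i_q$; hence $(Y^c)^i_kR(a,b)^k_j=R(a,b+c)^i_j$ with no further correction. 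Finally I would convert $R(a,b+c)$ back to $L(a,b+c)$ by Proposition \ref{Proposition_(IY^a)(Y^bJ)} once more; the term $\epsilon_1(Y^a)(Y^{b+c})^i_j$ then cancels, leaving exactly $L(a,b+c)^i_j+\epsilon_1(Y^{a+c})^i_j(Y^b)-\epsilon_1(Y^a)^i_j(Y^{b+c})$.

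The second identity is the mirror image. Now the external factor $(Y^c)^k_j$ multiplies on the right, so I would first trade $R(a,b)$ for $L(a,b)$, because in $L$ the letters $I,J$ sit to the left and the incoming $(Y^c)^k_j$ meets only the commuting block $(Y^a)^p_k(Y^b)^i_q$, fusing with $(Y^a)^p_k$ into $(Y^{a+c})^p_j$; this yields $R(a,b)^i_k(Y^c)^k_j=L(a+c,b)^i_j-\epsilon_1(Y^{a+c})^i_j(Y^b)+\epsilon_1(Y^a)(Y^{b+c})^i_j$. Converting $L(a+c,b)$ back to $R(a+c,b)$ by Proposition \ref{Proposition_(IY^a)(Y^bJ)} and cancelling the $\epsilon_1(Y^{a+c})^i_j(Y^b)$ terms gives $R(a+c,b)^i_j+\epsilon_1(Y^a)(Y^{b+c})^i_j-\epsilon_1(Y^{a+c})(Y^b)^i_j$, which is the claim.

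I do not expect a genuine obstacle here: the only true non-commutativity — the failure of $I_lJ^m$ to commute with powers of $Y$, which originates from the F-term relation — is already entirely encapsulated in Proposition \ref{Proposition_(IY^a)(Y^bJ)}, so once that is invoked the remainder is bookkeeping with commuting $Y$'s. The one point requiring care is to slide $(Y^c)$ into whichever of $L$ or $R$ keeps it away from the $I,J$ letters (so: $L\to R$ for the left action, $R\to L$ for the right action), and then to track which $Y$-powers merge in the three $\epsilon_1$-corrections so that the intended cancellation actually occurs.
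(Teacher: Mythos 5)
Your proposal is correct and follows essentially the same route as the paper: both proofs use Proposition \ref{Proposition_(IY^a)(Y^bJ)} to switch the ordering arrow so that the external $(Y^c)$ factor only meets mutually commuting $Y$'s and can be fused in, then switch the arrow back, with the $\epsilon_1(Y^a)(Y^{b+c})^i_j$ (respectively $\epsilon_1(Y^{a+c})^i_j(Y^b)$) corrections cancelling exactly as you describe. The paper writes out only the first identity and leaves the second as the mirror image, which is precisely your argument.
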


\begin{proof}
\begin{align*}
(Y^c)^i_k\overleftarrow{(IY^a)_j(Y^bJ)^k}&=(Y^c)^i_k\left(\overrightarrow{(IY^a)_j(Y^bJ)^k}+\epsilon_1(Y^a)^k_j(Y^b)-\epsilon_1(Y^a)(Y^b)^k_j\right)\\
&=\overrightarrow{(IY^a)_j(Y^{b+c}J)^i}+\epsilon_1(Y^{a+c})^k_j(Y^b)-\epsilon_1(Y^a)(Y^{b+c})^k_j\\
&=\overleftarrow{(IY^a)_j(Y^{b+c}J)^i}+\epsilon_1(Y^{a+c})^i_j(Y^b)-\epsilon_1(Y^a)^i_j(Y^{b+c})
\end{align*}
where we used Proposition \ref{Proposition_(IY^a)(Y^bJ)} to move the direction of arrows back and forth. 
\end{proof}

\begin{proposition}
\begin{align}
\frac{(XY^m)}{\epsilon_1}&=t_{1,m}+\frac{1}{m+1}\sum_{k=0}^{m-1}(k+1)(Y^k)(Y^{m-1-k})\\
\frac{(Y^mX)}{\epsilon_1}&=t_{1,m}-\frac{1}{m+1}\sum_{k=0}^{m-1}(k+1)(Y^k)(Y^{m-1-k})
\end{align}
\end{proposition}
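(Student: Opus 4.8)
The plan is to expand the symmetrizer, reduce each summand to a normal‑ordered word by Lemma~\ref{Lemma_Y^nXY^m}, and then resum the resulting $\epsilon_1$‑corrections. Concretely, I would start from
\[
S(XY^m)=\frac{1}{m+1}\sum_{j=0}^{m}Y^jXY^{m-j}
\]
and contract all matrix indices. The first point to record is that, after full contraction, the reordered word $\overleftarrow{(Y^jXY^{m-j})}$ does not depend on $j$: once $X$ has been moved to the far left, the block of $Y$'s can be combined (the generators $Y^i_{\ j}$ commute among themselves), so $\overleftarrow{(Y^jXY^{m-j})}=(XY^m)$ for every $j$, and likewise $\overrightarrow{(Y^jXY^{m-j})}=(Y^mX)$. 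Hence $\overleftarrow{(S(XY^m))}=(XY^m)$ and $\overrightarrow{(S(XY^m))}=(Y^mX)$.

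Next I would apply the first equation of Lemma~\ref{Lemma_Y^nXY^m} with $n\mapsto j$ and $m\mapsto m-j$, contract indices, sum over $0\le j\le m$ (the $j=0$ summand is empty), and divide by $m+1$, obtaining
\[
(S(XY^m))-(XY^m)=-\frac{\epsilon_1}{m+1}\sum_{j=1}^{m}\ \sum_{\substack{a+b=j-1\\ a,b\ge 0}}(Y^{a+m-j})(Y^b).
\]
Every term on the right is of the shape $(Y^p)(Y^q)$ with $p+q=m-1$. Fixing $q=b$, so that $p=a+m-j$ and $a=j-1-q$, the constraints $1\le j\le m$ and $0\le a\le j-1$ allow exactly $j\in\{q+1,\dots,m\}$, i.e.\ $m-q$ admissible pairs $(j,a)$; hence the coefficient of $(Y^{m-1-q})(Y^q)$ is $m-q$, which after setting $k=m-1-q$ is $k+1$ in front of $(Y^k)(Y^{m-1-k})$. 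Dividing by $\epsilon_1$ and using $t_{1,m}=(S(XY^m))/\epsilon_1$ gives the first identity. The second is proved identically from the second equation of Lemma~\ref{Lemma_Y^nXY^m}: the correction appears with the opposite sign, and the same resummation (which is symmetric under $k\mapsto m-1-k$ since any two traces of powers of $Y$ commute) reproduces $-\frac{1}{m+1}\sum_{k=0}^{m-1}(k+1)(Y^k)(Y^{m-1-k})$.

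The only genuinely delicate step is the claim that $\overleftarrow{(Y^jXY^{m-j})}$ is $j$‑independent. One must keep in mind that $X$ and $Y$ do \emph{not} commute, so collapsing the $Y$‑block to $Y^m$ is legitimate only after $X$ has been brought to the extreme left (respectively, right) and the trace over the remaining indices has been taken---which is precisely what the arrows $\overleftarrow{(\cdot)}$ and $\overrightarrow{(\cdot)}$ encode. Everything else is the bookkeeping of the double sum together with the reindexing above, which is routine.
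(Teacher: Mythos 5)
Your argument is correct and is essentially the paper's own proof: the paper likewise writes $(m+1)(XY^m)/\epsilon_1-(m+1)t_{1,m}=\frac{1}{\epsilon_1}\sum_{a+b=m}([X,Y^a]Y^b)$ (which is exactly your sum over the symmetrizer of $(XY^m)-(Y^jXY^{m-j})$, evaluated via Lemma~\ref{Lemma_Y^nXY^m}) and then reindexes the resulting sum $\sum_{r+s+t=m-1}(Y^r)(Y^{s+t})$ into $\sum_{k}(k+1)(Y^k)(Y^{m-1-k})$, which is the same counting you perform. Your observations that $\overleftarrow{(Y^jXY^{m-j})}=(XY^m)$ for all $j$ and that the commutativity of the traces $(Y^p)$ is what makes the two resummations agree are exactly the (implicit) justifications the paper relies on.
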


\begin{proof}
\begin{align*}
(m+1)\frac{(XY^m)}{\epsilon_1}-(m+1)t_{1,m}&=\frac{1}{\epsilon_1}\sum_{a+b=m}([X,Y^a]Y^b)\\
&=\sum_{r+s+t=m-1}(Y^r)(Y^{s+t})
\end{align*}
Similarly for the other one.
\end{proof}

\noindent{\bf{The Key Commutation Relation}}\\

There is a $\mathrm{SL}_2$-symmetry on the algebra $\mathcal A^{(N)}$ under which $(X,Y)$ transforms as a vector. We will use the following particular transform $$\phi_{\alpha}:X\mapsto X\text{ , }Y\mapsto Y+\alpha X$$where $\alpha$ is a formal parameter. Consider 
\begin{align}
A:=\sum_{a+b=n-1}(([Y^a,X]Y^bX)+(XY^a[X,Y^b]))=\frac{\mathrm{d}}{\mathrm{d}\alpha}\phi_{\alpha}\left((XY^n)+(Y^nX)\right)-2n(XY^{n-1}X)
\end{align}
This leads to 
\begin{equation}
\begin{aligned}
3A+2n([X,Y^{n-1}X])+2n([XY^{n-1},X])&=3\frac{\mathrm{d}}{\mathrm{d}\alpha}\phi_{\alpha}\left((XY^n)+(Y^nX)\right)-2\epsilon_1[t_{3,0},t_{0,n}]\\
&=6n\epsilon_1t_{2,n-1}-2\epsilon_1[t_{3,0},t_{0,n}]
\end{aligned}
\end{equation}
It follows that 
\begin{equation}\label{Eqn_Preparation}
\begin{aligned}
[t_{3,0},t_{0,n}]&=3nt_{2,n-1}-\frac{3A}{2\epsilon_1}-\frac{n}{\epsilon_1}([X,Y^{n-1}X])-\frac{n}{\epsilon_1}([XY^{n-1},X])\\
&=3nt_{2,n-1}-\frac{3A}{2\epsilon_1}+n\sum_{a+b=n-2}\left((XY^a)(Y^b)-(Y^a)(Y^bX)\right)\\
&=3nt_{2,n-1}-\frac{3A}{2\epsilon_1}+n\sum_{a+b=n-2}[(XY^a),(Y^b)]+2n\epsilon_1\sum_{u+v+w=n-3}\frac{u+1}{u+v+2}(Y^u)(Y^v)(Y^w)\\
&=3nt_{2,n-1}-\frac{3A}{2\epsilon_1}+\epsilon_1^2\frac{n(n-1)(n-2)}{2}t_{0,n-3}+n\epsilon_1\sum_{u+v+w=n-3}(Y^u)(Y^v)(Y^w)
\end{aligned}
\end{equation}
We have the following assertion which will be proven in the end of this subsection
\begin{lemma}\label{Lemma_Computation of A}
\begin{equation}
\begin{aligned}
A=&\epsilon_2(\epsilon_1+\epsilon_2)\sum_{m=0}^{n-3}(m+1)(n-2+m)(Y^m)(Y^{n-3-m})-\epsilon_2(\epsilon_1+\epsilon_2){n\choose 3}(Y^{n-3})\\
&+\epsilon_1^2{n\choose 3}(Y^{n-3})+\frac{2n\epsilon_1^2}{3}\sum_{u+v+w=n-3}(Y^u)(Y^v)(Y^w)
\end{aligned}
\end{equation}
\end{lemma}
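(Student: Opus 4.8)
The plan is to compute $A$ head-on: expand the two commutators in its definition with the explicit formula \eqref{Eqn_[X,Y^n]}, re-index the resulting double sums into single sums with binomial multiplicities, and then invoke the F-term relation (through Lemma~\ref{Lemma_Basic Commutation Relation}'s F-term corollary, Lemma~\ref{Lemma_Y^nXY^m}, and Proposition~\ref{Proposition_(IY^a)(Y^bJ)}) to remove the last $X$'s. Concretely, write $[Y^a,X]=-[X,Y^a]$ and apply \eqref{Eqn_[X,Y^n]} to both $[X,Y^a]$ and $[X,Y^b]$ inside $A=\sum_{a+b=n-1}\big(([Y^a,X]Y^bX)+(XY^a[X,Y^b])\big)$. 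Each summand then splits into an $\epsilon_2$-piece (a multiple of $(Y^{n-2}X)$ or $(XY^{n-2})$), an $IJ$-piece (multiples of $(IY^pXY^qJ)$), and an $\epsilon_1$-piece (multiples of $(Y^pX)(Y^q)$ or $(XY^p)(Y^q)$).

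After summing over $a+b=n-1$ and collapsing the nested sums: the $\epsilon_2$-pieces combine to $\epsilon_2\binom{n}{2}\big((XY^{n-2})-(Y^{n-2}X)\big)$, which by the cyclicity-defect identity $(XY^m)-(Y^mX)=\epsilon_1\sum_{a+b=m-1}(Y^a)(Y^b)$ becomes $\epsilon_1\epsilon_2\binom{n}{2}\sum_{a+b=n-3}(Y^a)(Y^b)$; the $\epsilon_1$-pieces combine, using the same identity, to $\epsilon_1^2\sum_{d\ge 0}(n-1-d)\sum_{e+f=n-3-d}(Y^e)(Y^f)(Y^d)$, and since the traces $(Y^\bullet)$ mutually commute this symmetrizes to $\tfrac{2n}{3}\epsilon_1^2\sum_{u+v+w=n-3}(Y^u)(Y^v)(Y^w)$, which is already the last term of the asserted formula; the $IJ$-pieces combine to the single weighted sum $\sum_{c=0}^{n-2}(2c+2-n)\,(IY^cXY^{n-2-c}J)$.

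It remains to evaluate this weighted $IJ$-sum. Put $g_c=(IY^cXY^{n-2-c}J)$ and $w_c=2c+2-n$; since $\sum_{c=0}^{n-2}w_c=0$, summation by parts gives $\sum_c w_c g_c=\sum_{c=0}^{n-3}W_c\,(g_c-g_{c+1})$ with partial sums $W_c=(c+1)(c+2-n)$ and $\sum_{c=0}^{n-3}W_c=-\binom{n}{3}$. Commuting a single $Y$ past the $X$ produces the discrete derivative $g_c-g_{c+1}=(IY^c[X,Y]Y^{n-3-c}J)$; in it, replace $[X,Y]^i_j$ via the F-term relation by $\epsilon_2\delta^i_j$ minus an $I_jJ^i$-insertion (keeping track of the normal-ordering identification of $\overleftarrow{[X,Y]}$ with $[X,Y]$). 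The $\delta$-term gives $\epsilon_2(IY^{n-3}J)=\epsilon_2^2(Y^{n-3})$ by the F-term corollary of Lemma~\ref{Lemma_Basic Commutation Relation}, and the double-$IJ$ term is reduced either by Proposition~\ref{Proposition_(IY^a)(Y^bJ)} or by commuting the inner $I$ past the inner $J$ with $[J,I]=\epsilon_1$, yielding $\epsilon_2^2(Y^c)(Y^{n-3-c})$ plus an $\epsilon_1\epsilon_2(Y^{n-3})$ correction. Thus $(IY^c[X,Y]Y^{n-3-c}J)=(\text{coeff})(Y^{n-3})-\epsilon_2^2(Y^c)(Y^{n-3-c})$; feeding this into $\sum_{c=0}^{n-3}W_c(g_c-g_{c+1})$ and using $\sum_c W_c=-\binom{n}{3}$ and $\sum_c W_c (Y^c)(Y^{n-3-c})$ produces the $\binom{n}{3}(Y^{n-3})$-terms and, together with the $\epsilon_1\epsilon_2\binom{n}{2}\sum(Y^a)(Y^b)$-piece from Step~1, the quadratic term $\epsilon_2(\epsilon_1+\epsilon_2)\sum_{m}(m+1)(n-2+m)(Y^m)(Y^{n-3-m})$. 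Along the way one notes that the uniform-weight part of $\sum_c g_c$, equal to a multiple of $t_{1,n-2}$ after symmetrization, is annihilated by the antisymmetry of $w_c$ about $c=(n-2)/2$, so no $t_{1,\bullet}$ survives — consistent with the claimed formula.

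The main obstacle is bookkeeping rather than conceptual: every time an $I$ is moved through a $J$ (when reducing the double-$IJ$ insertions and when applying Proposition~\ref{Proposition_(IY^a)(Y^bJ)}) an $\epsilon_1$-correction is generated, and these must be tracked with care to obtain the coefficient $\epsilon_1+\epsilon_2$ rather than $\epsilon_2$ in front of the quadratic piece; likewise, the collapse of the nested sums $\sum_{a+b=n-1}\sum_{c+d=a-1}$ into single sums with multiplicities $n-1-d$ is where the factors $\binom{n}{2}$ and $\binom{n}{3}$ first appear, and a mis-indexing there would propagate into every term. A final consistency check is to verify the formula at $n=3$, where $A$ collapses to $\mathrm{Tr}([X,Y]^2)$ and can be evaluated independently from the F-term relation.
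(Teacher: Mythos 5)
Your overall strategy mirrors the paper's: expand both commutators in $A$ with \eqref{Eqn_[X,Y^n]}, sort the result into an $\epsilon_2$-piece, an $IJ$-piece and an $\epsilon_1$-piece, and reduce each with the F-term relation; your summation-by-parts evaluation of the weighted $IJ$-sum is a legitimate reorganization of the paper's regrouping into $\sum_{r+s+t=n-2}I_kJ^l\overleftarrow{(Y^{r}[Y^s,X]Y^{t})^k_l}$. However, there is a genuine gap in your treatment of the $\epsilon_1$-piece, and it loses exactly one term of the lemma. The $\epsilon_1$-terms produced by \eqref{Eqn_[X,Y^n]} come in two families, $-\epsilon_1(Y^t)(Y^{s+b}X)$ and $+\epsilon_1(XY^{s})(Y^{t+a})$; in the first family the trace $(Y^t)$ sits to the \emph{left} of the word containing $X$. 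You list both as ``multiples of $(Y^pX)(Y^q)$ or $(XY^p)(Y^q)$'' and then combine them via the cyclicity-defect identity, which tacitly moves $(Y^t)$ to the right of $(Y^{m}X)$. But $(Y^t)$ does not commute with $(Y^{m}X)$: from $[X^i_j,Y^k_l]=\epsilon_1\delta^i_l\delta^k_j$ one gets $[(Y^t),(Y^{m}X)]=-\epsilon_1 t\,(Y^{m+t-1})$, so each swap costs $+\epsilon_1^2\,t\,(Y^{n-3})$, and summing over the index set gives $\sum_{a+b=n-1}\sum_{s+t=a-1}t=\binom{n}{3}$. The dropped contribution is therefore precisely $\epsilon_1^2\binom{n}{3}(Y^{n-3})$, the third term of the asserted formula; in the paper it appears as $\epsilon_1^2\sum_{r+s+t=n-2}[t_{1,r+s},(Y^t)]$.

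You cannot recover this term from the $IJ$-piece, as your accounting suggests when you say the weighted $IJ$-sum ``produces the $\binom{n}{3}(Y^{n-3})$-terms'': every reduction of an $I\cdots J$ insertion terminates with $(IY^kJ)=\epsilon_2(Y^k)$, so all terms emerging from that sum carry at least one factor of $\epsilon_2$, and a pure $\epsilon_1^2$ coefficient cannot arise there. Your own proposed $n=3$ consistency check would catch this: at $n=3$ the discrepancy is $\epsilon_1^2(Y^0)=\epsilon_1^2N$, which is nonzero. The rest of the argument (the $\epsilon_2$-piece giving $\epsilon_1\epsilon_2\binom{n}{2}\sum_{a+b=n-3}(Y^a)(Y^b)$, the symmetrization yielding $\tfrac{2n}{3}\epsilon_1^2\sum(Y^u)(Y^v)(Y^w)$, and the $\epsilon_1$-corrections from moving $I$ past $J$ producing the $\epsilon_1+\epsilon_2$ coefficient) is consistent with the paper's computation.
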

\noindent Plug it into equation \ref{Eqn_Preparation} and we obtain the following
\begin{proposition}[The Key Commutation Relation]\label{Proposition_The Key Commutation Relation}
Let $\sigma_2=\epsilon_1^2+\epsilon_2^2+\epsilon_1\epsilon_2$ and $\sigma_3=-\epsilon_1\epsilon_2(\epsilon_1+\epsilon_2)$, then
\begin{align}
[t_{3,0},t_{0,n}]&=3nt_{2,n-1}+\frac{3\sigma_2}{2}{n\choose 3}t_{0,n-3}+\frac{3\sigma_3}{2}\sum_{m=0}^{n-3}(m+1)(n-2+m)t_{0,m}t_{0,n-3-m}
\end{align}
\end{proposition}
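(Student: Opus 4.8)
The plan is to read the Proposition off the preparatory identity \eqref{Eqn_Preparation} by substituting the closed form for $A$ supplied by Lemma~\ref{Lemma_Computation of A} and collecting terms; once both are in hand the rest is bookkeeping, so the genuine obstacle is Lemma~\ref{Lemma_Computation of A} itself, which I take as given. Throughout I use the definition $t_{0,k}=(Y^k)/\epsilon_1$, i.e.\ $(Y^k)=\epsilon_1 t_{0,k}$, and the fact that the scalars $(Y^k)=\mathrm{Tr}(Y^k)$ commute pairwise (since $[Y,Y]=0$), so products like $t_{0,m}t_{0,n-3-m}$ are unambiguous. The starting point is the last line of \eqref{Eqn_Preparation},
\[
[t_{3,0},t_{0,n}]=3n\,t_{2,n-1}-\frac{3A}{2\epsilon_1}+\frac{\epsilon_1^2\,n(n-1)(n-2)}{2}\,t_{0,n-3}+n\epsilon_1\sum_{u+v+w=n-3}(Y^u)(Y^v)(Y^w).
\]

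First I would note the cancellation of the cubic-in-$Y$ terms: the summand $\tfrac{2n\epsilon_1^2}{3}\sum_{u+v+w=n-3}(Y^u)(Y^v)(Y^w)$ of $A$ contributes to $-\tfrac{3}{2\epsilon_1}A$ with coefficient $-n\epsilon_1$, exactly cancelling the last term above; thus $[t_{3,0},t_{0,n}]$ has no cubic term and no term linear in $X$ other than $3n\,t_{2,n-1}$. Next I would collect the coefficient of $t_{0,n-3}$ from its three sources: the term $\tfrac{\epsilon_1^2 n(n-1)(n-2)}{2}t_{0,n-3}=3\epsilon_1^2{n\choose 3}t_{0,n-3}$, and the two $(Y^{n-3})$-terms of $A$, namely $\epsilon_1^2{n\choose 3}(Y^{n-3})$ and $-\epsilon_2(\epsilon_1+\epsilon_2){n\choose 3}(Y^{n-3})$, each hit by $-\tfrac{3}{2\epsilon_1}$ and rewritten via $(Y^{n-3})=\epsilon_1 t_{0,n-3}$. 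Their sum is $\big(3\epsilon_1^2-\tfrac32\epsilon_1^2+\tfrac32\epsilon_2(\epsilon_1+\epsilon_2)\big){n\choose 3}t_{0,n-3}=\tfrac32(\epsilon_1^2+\epsilon_1\epsilon_2+\epsilon_2^2){n\choose 3}t_{0,n-3}=\tfrac{3\sigma_2}{2}{n\choose 3}t_{0,n-3}$. Finally, the only remaining piece of $A$ is $\epsilon_2(\epsilon_1+\epsilon_2)\sum_{m=0}^{n-3}(m+1)(n-2+m)(Y^m)(Y^{n-3-m})$; multiplying by $-\tfrac{3}{2\epsilon_1}$ and using $(Y^m)(Y^{n-3-m})=\epsilon_1^2 t_{0,m}t_{0,n-3-m}$ yields $-\tfrac{3\epsilon_1\epsilon_2(\epsilon_1+\epsilon_2)}{2}\sum_{m=0}^{n-3}(m+1)(n-2+m)t_{0,m}t_{0,n-3-m}=\tfrac{3\sigma_3}{2}\sum_{m=0}^{n-3}(m+1)(n-2+m)t_{0,m}t_{0,n-3-m}$, since $\sigma_3=-\epsilon_1\epsilon_2(\epsilon_1+\epsilon_2)$. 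Adding the three contributions gives exactly the claimed identity.

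The step that really needs work is Lemma~\ref{Lemma_Computation of A}. To prove it I would expand $A=\sum_{a+b=n-1}\big(([Y^a,X]Y^bX)+(XY^a[X,Y^b])\big)$ with the commutator identity \eqref{Eqn_[X,Y^n]} for $[X,Y^k]$ (which already encodes the F-term relation and the equality $\overleftarrow{[X,Y]}=\overrightarrow{[X,Y]}$), normal-order the resulting $I,J,Y$-words using Proposition~\ref{Proposition_(IY^a)(Y^bJ)} and the propositions that follow it, and convert the single traces $(XY^k)$, $(Y^kX)$ into $t_{1,k}$ plus $Y$-bilinears using the proposition just before the ``Key Commutation Relation'' heading. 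The main difficulty is purely combinatorial: after the substitutions one is left with nested sums over compositions of $n-1$ (and of $n-3$), and the real content is to reorganise them so that the coefficients collapse to $(m+1)(n-2+m)$, ${n\choose 3}$, and $\tfrac23$ as in the Lemma. With that identity established, the Proposition follows from the short computation above.
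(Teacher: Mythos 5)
Your proposal is correct and follows exactly the paper's route: the paper's proof of this Proposition is literally to substitute the expression for $A$ from Lemma~\ref{Lemma_Computation of A} into equation~\eqref{Eqn_Preparation}, and your term-by-term bookkeeping (the cancellation of the cubic $(Y^u)(Y^v)(Y^w)$ sums, the assembly of $\tfrac{3\sigma_2}{2}\binom{n}{3}t_{0,n-3}$, and the $\sigma_3$ coefficient) is accurate. Your closing sketch of how Lemma~\ref{Lemma_Computation of A} would be proved also matches the paper's actual argument, so nothing is missing.
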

\noindent Proposition \ref{Proposition_The Key Commutation Relation} together with Lemma \ref{Lemma_Basic Commutation Relation} actually determine all other commutation relations as following: first of all, we have 
\begin{align*}
[t_{3,0},t_{n,m}]&=\frac{1}{2^n m!{{n+m}\choose m}}\mathrm{ad} _{t_{2,0}}^n([t_{3,0},t_{0,n+m}])\\
&=3mt_{n+2,m-1}+\frac{3\sigma_2}{2}{m\choose 3}t_{n,m-3}\\
&+\frac{3\sigma_3}{2}\sum_{b=0}^{m-3}\sum_{a=0}^{n}(a+1)(n-a+1)\frac{{{a+b+1}\choose {a+1}}{{m+n-a-b-2}\choose{n-a+1}}}{{{n+m}\choose m}}t_{a,b}t_{n-a,m-3-b}
\end{align*}
then for $a+b=3$, $[t_{a,b},t_{n,m}]$ is obtained by applying $\mathrm{ad}_{t_{0,2}}$ to $[t_{3,0},t_{n',m'}]$. Suppose that $[t_{a,b},t_{n,m}]$ is obtained for all $a+b\le k$ and all pairs $(n,m)$, then $[t_{k+1,0},t_{n,m}]$ can be obtained by applying $\mathrm{ad}_{t_{3,0}}$ to $[t_{k-1,1},t_{n,m}]$. Hence, the general $[t_{a,b},t_{n,m}]$ is obtained by induction on $k$.
\begin{proof}[Proof of Lemma \ref{Lemma_Computation of A}]
\begin{equation}
\begin{aligned}
A=&\sum_{a+b=n-1}(([Y^a,X]Y^bX)+(XY^a[X,Y^b]))\\
=&\sum_{a+b=n-1}\left(\sum_{s+t=a-1}\overleftarrow{(IY^{s+b})_i(Y^tJ)^j}X^i_j-a\epsilon_2(Y^{n-2}X)-\epsilon_1\sum_{s+t=a-1}(Y^{t})(Y^{s+b}X)\right)\\
&+\sum_{a+b=n-1}\left(b\epsilon_2(XY^{n-2})+\epsilon_1\sum_{s+t=b-1}(XY^{s})(Y^{t+a})-\sum_{s+t=b-1}X^i_j\overleftarrow{(IY^{s})_i(Y^{t+a}J)^j}\right)\\
=&\sum_{a+b=n-1}\left(\sum_{s+t=a-1}I_kJ^l\overrightarrow{(Y^{s+b}XY^t)^k_l}-a\epsilon_2(Y^{n-2}X)-\epsilon_1\sum_{s+t=a-1}(Y^{t})(Y^{s+b}X)\right)\\
&+\sum_{a+b=n-1}\left(b\epsilon_2(XY^{n-2})+\epsilon_1\sum_{s+t=b-1}(XY^{s+a})(Y^{t})-\sum_{s+t=b-1}I_kJ^l\overleftarrow{(Y^{s}XY^{t+a})^k_l}\right)\\
=&\epsilon_2{n\choose 2}([X,Y^{n-2}])+\epsilon_1\sum_{r+s+t=n-2}\left((XY^{r+s})(Y^{s})-(Y^t)(Y^{r+s}X)\right)\\
&+\sum_{r+s+t=n-2}I_kJ^l\overleftarrow{(Y^{r}[Y^s,X]Y^{t})^k_l}-\epsilon_1\epsilon_2\sum_{r+s+t+u=n-3}(Y^{r+s+t})(Y^u)\\
&-\epsilon_1\epsilon_2\sum_{r+s+t=n-3}(r+t+2)(Y^{r+s})(Y^t)\\
=&\sum_{r+s+t+u=n-3}I_kJ^l\overleftarrow{(IY^{r+s})_l(Y^{t+u}J)^k}-\sum_{r+s+t=n-2}s\epsilon_2 I_kJ^l\overleftarrow{(Y^{r+s+t-1})^k_l}\\
&+\epsilon_1\sum_{r+s+t=n-2}\left((XY^{r+s})(Y^{t})-(Y^t)(Y^{r+s}X)\right)\\
=&\epsilon_2(\epsilon_1+\epsilon_2)\sum_{m=0}^{n-3}(m+1)(n-2+m)(Y^m)(Y^{n-3-m})-\epsilon_2(\epsilon_1+\epsilon_2){n\choose 3}(Y^{n-3})\\
&+\epsilon_1\sum_{r+s+t=n-2}\left((XY^{r+s})(Y^{t})-(Y^t)(Y^{r+s}X)\right)\\
=&\epsilon_2(\epsilon_1+\epsilon_2)\sum_{m=0}^{n-3}(m+1)(n-2+m)(Y^m)(Y^{n-3-m})-\epsilon_2(\epsilon_1+\epsilon_2){n\choose 3}(Y^{n-3})\\
&+\epsilon_1^2\sum_{r+s+t=n-2}[t_{1,r+s},(Y^t)]+2\epsilon_1^2\sum_{u+v+w=n-3}(u+1)(Y^u)(Y^v)(Y^w)\\
=&\epsilon_2(\epsilon_1+\epsilon_2)\sum_{m=0}^{n-3}(m+1)(n-2+m)(Y^m)(Y^{n-3-m})-\epsilon_2(\epsilon_1+\epsilon_2){n\choose 3}(Y^{n-3})\\
&+\epsilon_1^2{n\choose 3}(Y^{n-3})+\frac{2n\epsilon_1^2}{3}\sum_{u+v+w=n-3}(Y^u)(Y^v)(Y^w)
\end{aligned}
\end{equation}

\noindent Some explanation: from the 5th equality to the 6th equality, the essential computation is the following:
\begin{equation}
\begin{aligned}
&\sum_{r+s+t+u=n-3}I_kJ^l\overleftarrow{(IY^{r+s})_l(Y^{t+u}J)^k}=\sum_{r+s+t+u=n-3}I_kJ^lI_iJ^j(Y^{r+s})^i_l(Y^{t+u})^k_j\\
&=\sum_{r+s+t+u=n-3}I_kI_iJ^jJ^l(Y^{r+s})^i_l(Y^{t+u})^k_j+\epsilon_1\sum_{r+s+t+u=n-3}I_kJ^j\delta_i^l(Y^{r+s})^i_l(Y^{t+u})^k_j\\
&=\sum_{r+s+t+u=n-3}I_kJ^jI_iJ^l(Y^{r+s})^i_l(Y^{t+u})^k_j-\epsilon_1\sum_{r+s+t+u=n-3}I_kJ^l\delta_i^j(Y^{r+s})^i_l(Y^{t+u})^k_j\\
&+\epsilon_1\sum_{r+s+t+u=n-3}I_kJ^j\delta_i^l(Y^{r+s})^i_l(Y^{t+u})^k_j\\
&=\epsilon_2(\epsilon_1+\epsilon_2)\sum_{r+s+t+u=n-3}(Y^{r+s})(Y^{t+u})-\epsilon_2\epsilon_1{n\choose 3}(Y^{n-3})
\end{aligned}
\end{equation}
Then we define $m=r+s$, then there are $m+1$ ways of decomposing $m$ as $r+s$, similarly there are $n-3-m+1$ ways of decomposing $n-3-m$ as $t+u$, hence the result can be simplified to 
\begin{equation}
\begin{aligned}
\epsilon_2(\epsilon_1+\epsilon_2)\sum_{m=0}^{n-3}(m+1)(n-2+m)(Y^m)(Y^{n-3-m})-\epsilon_2\epsilon_1{n\choose 3}(Y^{n-3})
\end{aligned}
\end{equation}
\end{proof}

\subsection{Bi-module}\label{app:bimod}
Simplest algebra, bi-module commutator that has $\E_1$ correction in the RHS is
\ie
\left[T[2,1],b[z]c[1]\right]=&\bigg(-\frac{5}{3}\E_2T[0,1]+\E_2^2b[1]c[1]\bigg)\\
&+\E_1\bigg({-\E_2b[1]c[1]T[0,0]}+{\frac{4}{3}\E_2b[1]c[1]}\bigg)\\
&+\E_1^2\bigg(-\frac{4}{3}b[1]c[1]T[0,0]\bigg)\\
&+\E_1^3\bigg(-\frac{1}{3}b[1]c[1]b[1]c[1]\bigg)
\fe
We will prove it in this section. 

Let us expand the LHS.
\ie\label{nnl}
\left[S(X^2Y),(IY\tilde\varphi)(\varphi J)\right]=&\frac{1}{3}(XXY+XYX+YXX)\cdot(IY\tilde\varphi)(\varphi J)\\
&-\frac{1}{3}(IY\tilde\varphi)(\varphi J)\cdot(XXY+XYX+YXX)
\fe
Compute the first term:
\ie
&(XXY)\cdot(IY\tilde\varphi)(\varphi J)=X^0_1X^1_2\rvert\tilde\varphi^b\varphi_c\rvert I_aY^a_bJ^cY^2_0+ X^0_1X^1_2\rvert\tilde\varphi^b\varphi_c\tilde\varphi^2\varphi_0\rvert I_aY^a_bJ^c\\
=&\rvert\tilde\varphi^b\varphi_c\rvert I_aX^0_1(\E_1\D^1_b\D^a_2+Y^a_bX^1_2)J^cY^2_0+\E_1X^0_1\rvert\tilde\varphi^b(\D^1_c\varphi_0+\D^1_0\varphi_c)\rvert I_aY^a_bJ^c\\
=&\E_1\rvert\tilde\varphi^b\varphi_c\rvert I_2X^0_bJ^cY^2_0+\E_1\rvert\tilde\varphi^b\varphi_c\rvert I_a(\E_1\D^0_b\D^a_1+Y^a_bX^0_1)X^1_2J^cY^2_0+\E_1\rvert\tilde\varphi^b\varphi_0\rvert I_aX^0_cY^a_bJ^c\\
&+\E_1\rvert\tilde\varphi^b\varphi_c\rvert I_a(X)Y^a_bJ^c\\
=&\E_1(-\E_1)(IYJ)+\E_1\rvert\tilde\varphi^0\varphi_c\rvert I_1J^cX^1_2Y^2_0+\underline{(IY\tilde\varphi)(\varphi J)(X^2Y)}+(-\E_1)\E_1(IYJ)\\
&+\E_1\rvert\tilde\varphi^b\varphi_c\rvert I_a(\E_1\D^a_b+Y^a_b(X))J^c\\
=&-\E_1^2\E_2(Y)+\E_1(IXY\tilde\varphi)(\varphi J)+\underline{(IY\tilde\varphi)(\varphi J)\cdot(XXY)}-\E_1^2\E_2(Y)\\
&+\E_1^2(I\tilde\varphi)(\varphi J)+\E_1(IY\tilde\varphi)(\varphi J)(X)\\
=&-2\E_1^2\E_2(Y)+\E_1(IXY\tilde\varphi)(\varphi J)+\underline{(IY\tilde\varphi)(\varphi J)\cdot(XXY)}+\E_1^2(I\tilde\varphi)(\varphi J)\\
&+\E_1(IY\tilde\varphi)(\varphi J)(X)
\fe
Therefore,
\ie
\left[(XXY),(IY\tilde\varphi)(\varphi J)\right]=&-2\E_1^2\E_2(Y)+\E_1(IXY\tilde\varphi)(\varphi J)+\E_1^2(I\tilde\varphi)(\varphi J)\\
&+\E_1(IY\tilde\varphi)(\varphi J)(X)
\fe
Next,
\ie
&(XYX)\cdot(IY\tilde\varphi)(\varphi J)=X^0_1Y^1_2\rvert\tilde\varphi^b\varphi_c\rvert I_a(\E_1\D^2_b\D^a_0+Y^a_bX^2_0)J^c\\
=&\E_1\rvert\tilde\varphi^2\varphi_c\rvert I_0X^0_1Y^1_2J^c+\E_1\rvert\tilde\varphi^2\varphi_c\tilde\varphi^1\varphi_2\rvert I_0X^0_1J^c+\rvert\tilde\varphi^b\varphi_c\rvert I_aX^0_1Y^1_2Y^a_bX^2_0J^c\\
&+\rvert\tilde\varphi^b\varphi_c\tilde\varphi^1\varphi_2\rvert I_aX^0_1Y^a_bX^2_0J^c\\
=&\E_1(IXY\tilde\varphi)(\varphi J)+\E_1(-\E_1)((\tilde\varphi\varphi)(IJ)+(I\tilde\varphi)(\varphi J))\\
&+\rvert\tilde\varphi^b\varphi_c\rvert I_a(\E_1\D^0_b\D^a_1+Y^a_bX^0_1)J^cY^1_2X^2_0+(-\E_1)(\rvert\tilde\varphi^b\varphi_2\rvert I_aY^a_bX^2_0J^0+\rvert\tilde\varphi^b\varphi_c\rvert I_aY^a_bJ^c(X))\\
=&\E_1(IXY\tilde\varphi)(\varphi J)-\E_1^2(\tilde\varphi\varphi)(IJ)-\E_1^2(I\tilde\varphi)(\varphi J)+\E_1\rvert\tilde\varphi^0\varphi_c\rvert I_1J^cY^1_2X^2_0\\
&+\underline{(IY\tilde\varphi)(\varphi J)(XYX)}-\E_1\rvert\tilde\varphi^b\varphi_2\rvert I_a(-\E_1\D^a_0\D^2_b+X^2_0Y^a_b)J^0-\E_1(IY\tilde\varphi)(\varphi J)(X)\\
=&\E_1(IXY\tilde\varphi)(\varphi J)-\E_1^2(\tilde\varphi\varphi)(IJ)-\E_1^2(I\tilde\varphi)(\varphi J)+\E_1\rvert\tilde\varphi^0\varphi_c\rvert I_1J^c(-\E_1N\D^1_0+X^2_0Y^1_2)\\
&+\underline{(IY\tilde\varphi)(\varphi J)(XYX)}+\E_1^2(\tilde\varphi\varphi)(IJ)-\E_1(-\E_1)(IYJ)\\
=&\E_1(IXY\tilde\varphi)(\varphi J)-\E_1^2(I\tilde\varphi)(\varphi J)-\E_1^2N(I\tilde\varphi)(\varphi J)-\E_1^2(IYJ)+\E_1^2(IYJ)\\
&+\underline{(IY\tilde\varphi)(\varphi J)(XYX)}\\
=&\E_1(IXY\tilde\varphi)(\varphi J)-\E_1^2(I\tilde\varphi)(\varphi J)-\E_1^2N(I\tilde\varphi)(\varphi J)+\underline{(IY\tilde\varphi)(\varphi J)(XYX)}
\fe
Therefore,
\ie
\left[(XYX),(IY\tilde\varphi)(\varphi J)\right]=\E_1(IXY\tilde\varphi)(\varphi J)-\E_1^2(I\tilde\varphi)(\varphi J)-\E_1^2N(I\tilde\varphi)(\varphi J)
\fe
Next,
\ie
&(YXX)\cdot(IY\tilde\varphi)(\varphi J)=Y^0_1\rvert\tilde\varphi^b\varphi_c\rvert I_aX^1_2(\E_1\D^2_b\D^a_0+Y^a_bX^2_0)J^c\\
=&\E_1Y^0_1\rvert\tilde\varphi^2\varphi_c\rvert I_0X^1_2J^c+Y^0_1\rvert\tilde\varphi^b\varphi_c\rvert I_a(\E_1\D^1_b\D^a_2+Y^a_bX^1_2)X^2_0J^c\\
=&\E_1(-\E_1)(IYJ)+\E_1Y^0_1\rvert\tilde\varphi^1\varphi_c\rvert I_aX^a_0J^c+\rvert\tilde\varphi^b\varphi_c\tilde\varphi^0\varphi_1\rvert I_aY^a_bX^1_2X^2_0J^c+\underline{(IY\tilde\varphi)(\varphi J)(YXX)}\\
=&-\E_1^2\E_2(Y)+\E_1(IXY\tilde\varphi)(\varphi J)+\E_1(-N\E_1)(I\tilde\varphi)(\varphi J)\\
&+\E_1\rvert\tilde\varphi^1\varphi_c\varphi^0\varphi_1\rvert I_aX^a_0J^c+\rvert\tilde\varphi^b\varphi_c\varphi^0\varphi_1\rvert I_a(-\E_1\D^a_2\D^1_b+X^1_2Y^a_b)X^2_0J^c+\underline{(IY\tilde\varphi)(\varphi J)(YXX)} \\
=&-\E_1^2\E_2(Y)+\E_1(IXY\tilde\varphi)(\varphi J)-N\E_1^2(I\tilde\varphi)(\varphi J)+\E_1(-\E_1)(\tilde\varphi\varphi)(I\tilde\varphi)(\varphi J)\\
&+\E_1(-\E_1)(\tilde\varphi\varphi)(IJ)-\E_1\rvert\tilde\varphi^1\varphi_c\tilde\varphi^0\varphi_1\rvert I_2X^2_0J^c\\
&+(-\E_1)(\rvert\tilde\varphi^b\varphi_c\rvert I_aY^a_bJ^c(X)+\rvert\tilde\varphi^0\varphi_c\rvert I_aY^a_2X^2_0J^c)+\underline{(IY\tilde\varphi)(\varphi J)(YXX)}\\
=&-\E_1^2\E_2(Y)+\E_1(IXY\tilde\varphi)(\varphi J)-N\E_1^2(I\tilde\varphi)(\varphi J)-\E_1^2(\tilde\varphi\varphi)(I\tilde\varphi)(\varphi J)-\E_1^2(\tilde\varphi\varphi)(IJ)\\
&-\E_1(-\E_1)(I\tilde\varphi)(\varphi J)-\E_1(-\E_1)(\tilde\varphi\varphi)(IJ)-\E_1(IY\tilde\varphi)(\varphi J)(X)\\
&-\E_1(-\E_1N)(I\tilde\varphi)(\varphi J)-\E_1(-\E_1)(IYJ)+\underline{(IY\tilde\varphi)(\varphi J)(YXX)}\\
=&\E_1(IXY\tilde\varphi)(\varphi J)-\E_1(IY\tilde\varphi)(\varphi J)(X)+\E_1^2(I\tilde\varphi)(\varphi J)-\E_1^2(\tilde\varphi\varphi)(I\tilde\varphi)(\varphi J)\\
&+\underline{(IY\tilde\varphi)(\varphi J)(YXX)}
\fe
Therefore,
\ie
\left[(YXX),(IY\tilde\varphi)(\varphi J)\right]=&\E_1(IXY\tilde\varphi)(\varphi J)-\E_1(IY\tilde\varphi)(\varphi J)(X)+\E_1^2(I\tilde\varphi)(\varphi J)\\
&-\E_1^2(\tilde\varphi\varphi)(I\tilde\varphi)(\varphi J)
\fe
Collecting the above, we have
\ie\label{finalfi1}
\bigg[S(X^2Y),&(IY\tilde\varphi)(\varphi J)\bigg]=\frac{1}{3}\bigg(-2\E_1^2\E_2(Y)+\E_1(IXY\tilde\varphi)(\varphi J)+\E_1^2(I\tilde\varphi)(\varphi J)\\
&+\E_1(IY\tilde\varphi)(\varphi J)(X)+\E_1(IXY\tilde\varphi)(\varphi J)-\E_1^2(I\tilde\varphi)(\varphi J)-\E_1^2N(I\tilde\varphi)(\varphi J)\\
&+\E_1(IXY\tilde\varphi)(\varphi J)-\E_1(IY\tilde\varphi)(\varphi J)(X)+\E_1^2(I\tilde\varphi)(\varphi J)-\E_1^2(\tilde\varphi\varphi)(I\tilde\varphi)(\varphi J)\bigg)\\
&=\E_1(IXY\tilde\varphi)(\varphi J)-\frac{2}{3}\E_1^2\E_2(Y)-\frac{1}{3}\E_1^2N(I\tilde\varphi)(\varphi J)-\frac{1}{3}\E_1^2(\tilde\varphi\varphi)(I\tilde\varphi)(\varphi J)\\
&+\frac{1}{3}\E_1^2(I\tilde\varphi)(\varphi J)
\fe
We are not done yet, since $(IXY\tilde\varphi)(\varphi J)$ is reducible by the F-term relation.
\ie
\E_1\rvert\tilde\varphi^0\varphi_c\rvert I_1J^cX^1_2Y^2_0=&\E_1\rvert\tilde\varphi^0\varphi_c\rvert I_1J^c(X^2_0Y^1_2-(I_0J^1-\E_2\D^1_0))\\
=&\E_1(-\E_1)(IYJ)-\E_1\rvert\tilde\varphi^0\varphi_c\rvert(J^cI_1-\E_1\D_1^c)I_0J^1+\E_1\E_2(I\tilde\varphi)(\varphi J)\\
=&-\E_1^2(IYJ)-\E_1\rvert\tilde\varphi^0\varphi_c\rvert (I_0J^c+\E_1\D^c_0)I_1J^1+\E_1^2(I\tilde\varphi)(\varphi J)\\
&+\E_1\E_2(I\tilde\varphi)(\varphi J)\\
=&-\E_1^2(IYJ)-\E_1(I\tilde\varphi)(\varphi J)(IJ)-\E_1^2(\tilde\varphi\varphi)(IJ)+\E_1^2(I\tilde\varphi)(\varphi J)\\
&+\E_1\E_2(I\tilde\varphi)(\varphi J)
\fe
Plugging this into \eqref{finalfi1}, we get
\ie
\left[S(X^2Y),(IY\tilde\varphi)(\varphi J)\right]&=(-\E_1^2(IYJ)-\E_1(I\tilde\varphi)(\varphi J)(IJ)-\E_1^2(\tilde\varphi\varphi)(IJ)+\E_1^2(I\tilde\varphi)(\varphi J)\\
&+\E_1\E_2(I\tilde\varphi)(\varphi J))-\frac{2}{3}\E_1^2\E_2(Y)-\frac{1}{3}\E_1^2(\tilde\varphi\varphi)(I\tilde\varphi)(\varphi J)\\
&-\frac{1}{3}\E_1^2N(I\tilde\varphi)(\varphi J)+\frac{1}{3}\E_1^2(I\tilde\varphi)(\varphi J)
\fe
After normalization, by multiplying $\frac{\E_2}{\E_1^3}$ both sides, and using the identity \footnote{The identity can be derived using the F-term relation:
\ie
\tilde\varphi^i\bigg([X,Y]^j_i+I_iJ^j-\E_2\D^j_i\bigg)\varphi_j&=0\\
(Y)-(Y)+(I\tilde\varphi)(\varphi J)-\E_2(\tilde\varphi\varphi)&=0\\
(I\tilde\varphi)(\varphi J)&=\E_2(\tilde\varphi\varphi)
\fe}
\ie
(\tilde\varphi\varphi)\E_2=(I\tilde\varphi)(\varphi J)
\fe
we have
\ie
\left[T[2,1],b[z]c[1]\right]=&\bigg(-\frac{5}{3}\E_2T[0,1]+\E_2^2b[1]c[1]\bigg)\\
&+\E_1\bigg({-\E_2b[1]c[1]T[0,0]}+{\frac{4}{3}\E_2b[1]c[1]}\bigg)\\
&+\E_1^2\bigg(-\frac{4}{3}b[1]c[1]T[0,0]\bigg)\\
&+\E_1^3\bigg(-\frac{1}{3}b[1]c[1]b[1]c[1]\bigg)
\fe

\section{Intermediate steps in Feynman diagram computation}\label{app:2}
\subsection{Intermediate steps in section 4.2}\label{app:2-1}
{\bf{Lemma 1.}}
\newline
We will compute the following integral.
\ie
\E_1\E_2^2\int_{v_1} dw_1\wedge dz_1\wedge\pa_{z_1}P_1(v_0,v_1)\wedge\pa_{z_2}\pa_{w_1} P_2(v_1,v_2)(z_1^2w_1\pa_{z_1}^2\pa_{w_1}A)
\fe
Computing the partial derivatives, we can re-write it as
\ie
\E_1\E_2^2\left(\frac{\bar z_1}{d_{01}^{2}}\frac{\bar{w}_1}{d^4_{12}}(w_1z_1\bar z_2)\right)\left[P(v_0,v_1)\wedge dw_1\wedge z_1dz_1\wedge P(v_1,v_2)\right]
\fe
{Note that we ignore all constant factors here.} We see that
\ie
P(v_0,v_1)\wedge P(v_1,v_2)=\frac{d\bar z_1d\bar w_1 dt_1}{d_{01}^5d_{12}^5}&(\bar z_{01}\bar w_{12}dt_2-\bar z_{01}t_{12}d\bar w_2+\bar w_{01}t_{12}d\bar z_2\\
&-\bar w_{01}\bar z_{12}dt_2+t_{01}\bar z_{12}d\bar w_2-t_{01}\bar w_{12}d\bar z_{2})
\fe
Including $\wedge dw_1\wedge(z_1dz_1)\wedge$, we can simplify it:
\ie
&P(v_0,v_1)\wedge P(v_1,v_2)\wedge(w_1dw_1)\wedge(z_1dz_1)=d\bar z_1dz_1dw_1d\bar w_1dt_1\left(\rvert z_1\rvert^2\rvert w_1\rvert^2\bar z_2\right)\times\\
&\bigg[\pa_{\bar z_0}\bigg(\frac{\bar z_{01}\bar w_{12}dt_2-\bar z_{01}t_{12}d\bar w_2+\bar w_{01}t_{12}d\bar z_2-\bar w_{01}\bar z_{12}dt_2+t_{01}\bar z_{12}d\bar w_2-t_{01}\bar w_{12}d\bar z_{12}}{d_{01}^{5}d_{12}^{9}}\bigg)\\
&-\frac{\pa_{\bar z_0}(\bar z_{01}\bar w_{12}dt_2-\bar z_{01}t_{12}d\bar w_2+\bar w_{01}t_{12}d\bar z_2-\bar w_{01}\bar z_{12}dt_2+t_{01}\bar z_{12}d\bar w_2-t_{01}\bar w_{12}d\bar z_{12})}{d_{01}^{5}d_{12}^{9}}\bigg]
\fe
By integration by parts, the the integral over $t_1$, $z_1$, $\bar z_1$, $w_1$, $\bar w_1$ of all terms in the first two lines vanishes.

Therefore, we are left with
\ie\label{crucialsttepappp}
-\int_{v_1}dt_1dz_1d\bar z_1dw_1d\bar w_1\frac{\rvert z_1\rvert^2\rvert w_1\rvert^2\bar z_2(\bar w_{12}dt_2-t_{12}d\bar w_2)}{d_{01}^5d_{12}^9}
\fe
\newline
{\bf{Lemma 2.}}\\
\newline
We can use Feynman integral technique to convert \eqref{crucialsttepappp} to the following:
\ie
&\int_{v_1} \int^1_0 dx\frac{\Gamma(7)}{\Gamma(5/2)\Gamma(9/2)}\frac{\sqrt{x^3(1-x)^7}\rvert z_1\rvert^2\rvert w_1\rvert^2\bar z_2(\bar w_{12}dt_2-t_{12}d\bar w_2)}{((1-x)(\rvert z_1\rvert^2+\rvert w_1\rvert^2+t_1^2)+x(\rvert z_{12}\rvert^2+\rvert w_{12}\rvert^2+t^2_{12}))^7}\\
&=\int_{v_1}\int^1_0 dx\frac{(\Gamma \text{ factors})\sqrt{x^3(1-x)^7}\rvert z_1\rvert^2\rvert w_1\rvert^2\bar z_2(\bar w_{12}dt_2-t_{12}d\bar w_2)}{(\rvert z_1-xz_2\rvert^2+\rvert w_1-xw_2\rvert^2+(t_1-xt_2)^2+x(1-x)(\rvert z_2\rvert^2+\rvert w_2\rvert^2+t^2_2))^7}
\fe
Shift the integral variables as
\ie
z_1\ria z_1+xz_2,\quad w_1\ria w_1+xw_2,\quad t_1\ria t_1+xt_2
\fe
Then the above becomes
\ie
\int_{v_1}\int^1_0dx\frac{\Gamma(7)}{\Gamma(5/2)\Gamma(9/2)}&\frac{\sqrt{x^3(1-x)^7}\rvert z_1+xz_2\rvert^2\rvert w_1+xw_2\rvert^2\bar z_2}{(\rvert z_1\rvert^2+\rvert w_1\rvert^2+t_1^2+x(1-x)(\rvert z_2\rvert^2+\rvert w_2\rvert^2+t^2_2))^7}\\
&\times((\bar w_1+(x-1)\bar w_2)dt_2-(t_1+(x-1)t_2)d\bar w_2)
\fe
Drop terms with odd number of $t_1$ and terms that has holomorphic or anti-holomorphic dependence on $z_1$ or $w_1$:
\ie
\int_{v_1}\int^1_0dx\frac{\Gamma(7)}{\Gamma(5/2)\Gamma(9/2)}\frac{\sqrt{x^3(1-x)^9}(\rvert z_1\rvert^2+x^2\rvert z_2\rvert^2)(\rvert w_1\rvert^2+x^2\rvert w_2\rvert^2)\bar z_2(\bar w_2dt_2-t_2d\bar w_2)}{(\rvert z_1\rvert^2+\rvert w_1\rvert^2+t_1^2+x(1-x)(\rvert z_2\rvert^2+\rvert w_2\rvert^2+t^2_2))^7}
\fe
After doing the $v_1$ integral using Mathematica with the integral measure \\
$dt_1dz_1d\bar z_1dz_2d\bar z_2$, we get
\ie\label{appequ}
\bar z_2(\bar w_2dt_2-t_2d\bar w_2)\left(\frac{c_1}{d_{02}^5}+\frac{c_2w_2^2}{d_{02}^7}+\frac{c_3z_2^2}{d_{02}^7}+\frac{c_4z_2^2w_2^2}{d_{02}^9}\right)
\fe
\newline
{\bf{Lemma 3.}}
\newline
We will compute the integral over the second vertex.
\ie
&\int_{v_2}\cP\wedge\pa_{w_2}P_{3}(v_2,v_3)\wedge dz_2\wedge dw_2 (z_2w_2^2\pa_{z_2}\pa_{w_2}^2A)\\
=&\int_{v_2}\cP\wedge\frac{\bar w_2(\bar z_{23}d\bar w_{2}dt_{2}-\bar w_{23}d\bar z_{2}dt_{2}+t_{23}d\bar z_{2}d\bar w_{2})}{d_{23}^7}\wedge dw_2\wedge dz_2
\fe

Now, compute the integrand:
\ie\label{crucialsttep2app}
&\frac{\bar z_2(\bar w_2dt_2-t_2d\bar w_2)\bar w_2(\bar z_{23}d\bar w_{2}dt_{2}-\bar w_{23}d\bar z_{2}dt_{2}+t_{23}d\bar z_{2}d\bar w_{2})}{d^5_{02}d^7_{23}}\wedge dw_2\wedge dz_2\\
=&\frac{\rvert z_2\rvert^2\rvert w_2\rvert^4(t_2-t_3-t_2)}{d^5_{02}d^7_{23}}dt_2d\bar z_2d\bar w_2 dw_2 dz_2\\
=&-\frac{\rvert z_2\rvert^2\rvert w_2\rvert^4t_3}{d^5_{02}d^7_{23}}dt_2d\bar z_2d\bar w_2 dw_2 dz_2\quad\text{substitute }t_3=\E,~\text{then,}\\
=&-\frac{\rvert z_2\rvert^2\rvert w_2\rvert^4\E}{d^5_{02}d^7_{23}}dt_2d\bar z_2d\bar w_2 dw_2 dz_2
\fe
We can rescale $\E\ria1$, without loss of generality, then it becomes
\ie
-\frac{\rvert z_2\rvert^2\rvert w_2\rvert^4}{d^5_{02}d^7_{23}}dt_2d\bar z_2d\bar w_2 dw_2 dz_2
\fe
\newline
{\bf{Lemma 4.}}
\newline
Now, it remains to evaluate the delta function at the third vertex. In other words, substitute
\ie
w_3\ria0,\quad z_3\ria0,\quad t_3\ria\E=1
\fe
and use Feynman technique to convert the above integral into
\ie
&-\frac{\Gamma(6)}{\Gamma(5/2)\Gamma(7/2)}\int^1_0dx\int_{v_2}\frac{\sqrt{x^3(1-x)^5}\rvert z_2\rvert^2\rvert w_2\rvert^4}{(x(z_{2}^2+w_{2}^2+(t_2-1)^2)+(1-x)(z_2^2+w_2^2+t_2^2))^6}\\
=&-\frac{\Gamma(6)}{\Gamma(5/2)\Gamma(7/2)}\int^1_0dx\int_{v_2}\frac{\sqrt{x^3(1-x)^5}\rvert z_2\rvert^2\rvert w_2\rvert^4}{(z_2^2+w_2^2+(t_2-x)^2+x(1-x))^6}\\
=&-\frac{\Gamma(6)}{\Gamma(5/2)\Gamma(7/2)}\int^1_0dx\int_{v_2}\frac{\sqrt{x^3(1-x)^5}\rvert z_2\rvert^2\rvert w_2\rvert^4}{(z_2^2+w_2^2+t_2^2+x(1-x))^6}
\fe
In the second equality, we shift $t_2$ to $t_2+x$.

After doing $v_2$ integral, it reduces to
\ie
\frac{\Gamma(6)}{\Gamma(5/2)\Gamma(7/2)}\frac{\pi}{2880}\int^1_0dxx(1-x)^2=\frac{\Gamma(6)}{\Gamma(5/2)\Gamma(7/2)}\frac{\pi}{2880}
\fe
Finally, re-introduce all omitted constants:
\ie
(First Term)=\frac{\Gamma(6)}{\Gamma(5/2)\Gamma(7/2)}\frac{\Gamma(7)}{\Gamma(5/2)\Gamma(9/2)}(2\pi)^2(2\pi)^2\frac{\pi}{2880}
\fe
Similarly, we can compute all the others without any divergence.
\ie\label{secthirfour}
\text{(Second Term)}&=\frac{\Gamma(6)}{\Gamma(5/2)\Gamma(7/2)}\frac{\Gamma(7)}{\Gamma(5/2)\Gamma(9/2)}(2\pi)^2(2\pi)^2\frac{\pi}{5760}\\
\text{(Third Term)}&=\frac{\Gamma(6)}{\Gamma(5/2)\Gamma(7/2)}\frac{\Gamma(7)}{\Gamma(5/2)\Gamma(9/2)}(2\pi)^2(2\pi)^2\frac{\pi}{8640}\\
\text{(Fourth Term)}&=\frac{\Gamma(6)}{\Gamma(5/2)\Gamma(7/2)}\frac{\Gamma(7)}{\Gamma(5/2)\Gamma(9/2)}(2\pi)^2(2\pi)^2\frac{\pi}{20160}
\fe
Hence, every terms in \eqref{appequ} are integrated into finite terms.
\subsection{Intermediate steps in section 5.2}\label{app:2-2}
{\bf{Lemma 5.}}
\newline
We want to evaluate the following integral.
\ie\label{v1integralabp}
\int_{v_1}\pa_{z_1}P_1(v_0,v_1)\wedge(w_1dw_1)\wedge(z^2_1dz_1)\wedge\pa_{w_1} P_2(v_1,v_2)
\fe

Substituting the expressions for the propagators, we get
\ie
\int_{v_1}\frac{\rvert z_1\rvert^2z_1w_1(\bar w_1-\bar w_2)}{d^7_{01}d^7_{12}}&(\bar z_{01}\bar w_{12}dt_2-\bar z_{01}t_{12}d\bar w_2+\bar w_{01}t_{12}d\bar z_2-\bar w_{01}\bar z_{12}dt_2\\
&+t_{01}\bar z_{12}d\bar w_2-t_{01}\bar w_{12}d\bar z_{2})d\bar z_1d\bar w_1dt_1dz_1dw_1
\fe
We already know that the terms proportional to $\bar{w}_2$ will vanish in the second vertex integral, so drop them. Evaluating the delta function at $v_0$, the above simplifies to
\ie
\int_{v_1}\frac{\rvert z_1\rvert^2z_1\rvert w_1\rvert^2}{d^7_{01}d^7_{12}}(&-\bar z_{1}\bar w_{12}dt_2+\bar z_{1}t_{12}d\bar w_2-\bar w_{1}t_{12}d\bar z_2+\bar w_{1}\bar z_{12}dt_2\\
&-t_1\bar z_{12}d\bar w_2+t_{1}\bar w_{12}d\bar z_{2})d\bar z_1d\bar w_1dt_1dz_1dw_1
\fe
Note that the integrand with the odd number of $t_1$ vanishes, so 
\ie
\int_{v_1}\frac{\rvert z_1\rvert^2z_1\rvert w_1\rvert^2}{d^7_{01}d^7_{12}}(-\bar z_{1}\bar w_{12}dt_2-\bar z_{1}t_{2}d\bar w_2+\bar w_{1}t_{2}d\bar z_2+\bar w_{1}\bar z_{12}dt_2)d\bar z_1d\bar w_1dt_1dz_1dw_1
\fe
Now, apply Feynman technique and omit the Gamma functions, to be recovered at the end.
\ie
&\int^1_0dx\sqrt{x(1-x)}^7\int_{v_1}\frac{\rvert z_1\rvert^2\rvert w_1\rvert^2z_1(-\bar z_{1}\bar w_{12}dt_2-\bar z_{1}t_{2}d\bar w_2+\bar w_{1}t_{2}d\bar z_2+\bar w_{1}\bar z_{12}dt_2)}{(x(\rvert z_1\rvert^2+\rvert w_1\rvert^2+\rvert t_1\rvert^2)+(1-x)(\rvert z_{12}\rvert^2+\rvert w_{12}\rvert^2+\rvert t_{12}\rvert^2))^7}\\
=&\int^1_0dx\sqrt{x(1-x)}^7\int_{v_1}\frac{\rvert z_1\rvert^2\rvert w_1\rvert^2z_1(-\bar z_{1}\bar w_{12}dt_2-\bar z_{1}t_{2}d\bar w_2+\bar w_{1}t_{2}d\bar z_2+\bar w_{1}\bar z_{12}dt_2)}{(\rvert z_1-xz_2\rvert^2+\rvert w_1-xw_2\rvert^2+(t_1-xt_2)^2+x(1-x)(\rvert z_2\rvert^2+\rvert w_2\rvert^2+t^2_2))^7}
\fe
Shift the integral variables as
\ie
z_1\ria z_1+xz_2,\quad w_1\ria w_1+xw_2,\quad t_1\ria t_1+xt_2
\fe
Then the above becomes
\ie
\int^1_0dx\sqrt{x(1-x)}^7&\int_{v_1}dz_1d\bar z_1dw_1d\bar w_1dt_1(\rvert z_1\rvert^2+x^2\rvert z_2\rvert^2)(\rvert w_1\rvert^2+x^2\rvert w_2\rvert^2)(z_1+x z_2)\\
&\bigg(\frac{-(\bar z_{1}+x\bar z_2)(\bar w_{1}+(x-1)\bar w_2)dt_2-(\bar z_{1}+x\bar z_2)t_{2}d\bar w_2}{(\rvert z_1\rvert^2+\rvert w_1\rvert^2+t_1^2+x(1-x)(\rvert z_2\rvert^2+\rvert w_2\rvert^2+t^2_2))^7}\\
&+\frac{(\bar w_{1}+x\bar w_2)t_{2}d\bar z_2+(\bar w_{1}+x\bar w_2)(\bar z_{1}+(x-1)\bar z_2)dt_2}{(\rvert z_1\rvert^2+\rvert w_1\rvert^2+t_1^2+x(1-x)(\rvert z_2\rvert^2+\rvert w_2\rvert^2+t^2_2))^7}\bigg)
\fe
The terms with (anti)holomorphic dependence on complex coordinates drop:
\ie
\int^1_0dx\sqrt{x(1-x)}^7&\int_{v_1}dz_1d\bar z_1dw_1d\bar w_1dt_1(\rvert z_1\rvert^2+x^2\rvert z_2\rvert^2)(\rvert w_1\rvert^2+x^2\rvert w_2\rvert^2)\\
&\bigg(\frac{-\rvert z_1\rvert^2t_2d\bar w_2+x\rvert z_1\rvert^2\bar w_2dt_2-x^2\rvert z_2\rvert^2(x-1)\bar w_2dt_2}{(\rvert z_1\rvert^2+\rvert w_1\rvert^2+t_1^2+x(1-x)(\rvert z_2\rvert^2+\rvert w_2\rvert^2+t^2_2))^7}\\
&+\frac{-x^2\rvert z_2\rvert^2t_2d\bar w_2+x^2z_2\bar w_2t_2d\bar z_2+x^2\rvert z_2\rvert^2\bar w_2(x-1)dt_2}{(\rvert z_1\rvert^2+\rvert w_1\rvert^2+t_1^2+x(1-x)(\rvert z_2\rvert^2+\rvert w_2\rvert^2+t^2_2))^7}\bigg)
\fe
We can be prescient again; using the fact that the second vertex is tagged with a delta function $\D(z_2=0,t_2=\E)\propto dz_2d\bar z_2dt_2$, we can drop most of the terms.
\ie\label{crucialsttep3}
& -\int^1_0dx\sqrt{x(1-x)}^7\int_{v_1}[dV_1]\frac{(\rvert z_1\rvert^2+x^2\rvert z_2\rvert^2)(\rvert w_1\rvert^2+x^2\rvert w_2\rvert^2)(-\rvert z_1\rvert^2-x^2\rvert z_2\rvert^2)t_2d\bar w_2}{(\rvert z_1\rvert^2+\rvert w_1\rvert^2+t_1^2+x(1-x)(\rvert z_2\rvert^2+\rvert w_2\rvert^2+t^2_2))^7}\\
&=-\int^1_0dx\sqrt{x(1-x)}^7\int_{v_1}[dV_1]\frac{(\rvert z_1\rvert^2+x^2\rvert z_2\rvert^2)^2(\rvert w_1\rvert^2+x^2\rvert w_2\rvert^2)t_2d\bar w_2}{(\rvert z_1\rvert^2+\rvert w_1\rvert^2+t_1^2+x(1-x)(\rvert z_2\rvert^2+\rvert w_2\rvert^2+t^2_2))^7}
\fe
where $[dV_1]$ is an integral measure for $v_1$ integral. 
\subsection{Intermediate steps in section 5.3}\label{app:2-3}
{\bf{Lemma 6.}}
\newline
We will evaluate the following integral.
\ie\label{v1integral-1app}
\int_{v_1}\frac{1}{w_1}(w_1dw_1)\D(t_1=0,z_1=0)\wedge\pa_{z_2} P_{12}(v_1,v_2)
\fe

Substituting the expressions for propagators, we get
\ie
&\int_{v_1}\frac{\bar z_1-\bar z_2}{d^7_{12}}(\bar z_{12}d\bar w_{12}dt_{12}-\bar w_{12}d\bar z_{12}dt_{12}+t_{12}d\bar z_{12}d\bar w_{12})dw_1\D(t_1=z_1=0)\\
=&\int_{v_1}\frac{\bar z_1-\bar z_2}{d^7_{12}}(\bar z_2d\bar w_1dt_2+t_{2}d\bar z_{2}d\bar w_1)dw_1\D(t_1=z_1=0)\\
=&~(t_2d\bar z_2+\bar z_2dt_2)\int_{v_1}\frac{\bar z_1-\bar z_2}{\sqrt{t_{12}^2+\rvert z_{12}\rvert^2+\rvert w_{12}\rvert ^2}^7}d\bar w_1dw_1\D(t_1=z_1=0)\\
=&~(t_2d\bar z_2+\bar z_2dt_2)\int dw_1d\bar w_1\frac{-\bar z_2}{\sqrt{t_2^2+\rvert z_2\rvert^2+\rvert w_1-w_2\rvert^2}^7}\\
=&~-(t_2d\bar z_2+\bar z_2dt_2)\int rdrd\theta\frac{\bar z_2}{\sqrt{t_2^2+\rvert z_2\rvert^2+r^2}^7}=-\frac{2\pi (t_2d\bar z_2+\bar z_2dt_2)\bar z_2}{5\sqrt{t_2^2+\rvert z_2\rvert^2}^5}
\fe
where the first equality comes from the fact that $\D(t_1=z_1=0)\propto dt_1dz_1d\bar z_1$.
\newline
{\bf{Lemma 7.}}
\newline
We will evaluate the following integral.
\ie\label{v3integral-1app}
\int_{v_3}\frac{1}{w_3}(dw_3)\D(t_3=0,z_3=0)\wedge\pa_{w_2} P(v_2,v_3)
\fe
Substituting the expressions for propagators, we get
\ie
&\int_{v_3}\frac{\bar w_2-\bar w_3}{w_3d^7_{23}}(\bar z_{23}d\bar w_{23}dt_{23}-\bar w_{23}d\bar z_{23}dt_{23}+t_{23}d\bar z_{23}d\bar w_{23})dw_3\D(t_3=z_3=0)\\
=&\int_{v_3}\frac{\bar w_2-\bar w_3}{w_3d^7_{23}}(-\bar z_{2}d\bar w_3dt_2+t_{2}d\bar z_{2}d\bar w_3)dw_3\D(t_3=z_3=0)\\
=&(t_2d\bar z_2-\bar z_2dt_2)\int_{v_3}\frac{\bar w_2-\bar w_3}{w_3\sqrt{t_{23}^2+\rvert z_{23}\rvert^2+\rvert w_{23}\rvert ^2}^7}d\bar w_3dw_3\D(t_3=z_3=0)\\
=&(t_2d\bar z_2-\bar z_2dt_2)\int dw_3d\bar w_3\frac{(\bar w_2-\bar w_3)/w_3}{\sqrt{t_2^2+\rvert z_2\rvert^2+\rvert w_2-w_3\rvert^2}^7}\\
=&(t_2d\bar z_2-\bar z_2dt_2)\int dw_3d\bar w_3\frac{-\bar w_3/(w_3+w_2)}{\sqrt{t_2^2+\rvert z_2\rvert^2+\rvert w_3\rvert^2}^7}\\
=&(t_2d\bar z_2-\bar z_2dt_2)\int _{|w_3|\le |w_2|} dw_3d\bar w_3\frac{-\bar w_3\left(1-\frac{w_3}{w_2}+\frac{1}{2!}\frac{w_3^2}{w_2^2}-\ldots\right)}{w_2\sqrt{t_2^2+\rvert z_2\rvert^2+\rvert w_3\rvert^2}^7}\\
&+(t_2d\bar z_2-\bar z_2dt_2)\int _{|w_3|\ge |w_2|} dw_3d\bar w_3\frac{-\bar w_3\left(1-\frac{w_2}{w_3}+\frac{1}{2!}\frac{w_2^2}{w_3^2}-\ldots\right)}{w_3\sqrt{t_2^2+\rvert z_2\rvert^2+\rvert w_3\rvert^2}^7}\\
=&(t_2d\bar z_2-\bar z_2dt_2)\int  _{|w_3|\le |w_2|}dw_3d\bar w_3\left(0+\frac{-\rvert w_3\rvert^2}{w^2_2\sqrt{t_2^2+\rvert z_2\rvert^2+\rvert w_3\rvert^2}^7}+0+0+\ldots\right)\\
=&(t_2d\bar z_2-\bar z_2dt_2)\int _{0}^{|w_2|} rdrd\theta\frac{-r^2}{w^2_2\sqrt{t_2^2+\rvert z_2\rvert^2+r^2}^7}\\
=&-(t_2d\bar z_2-\bar z_2dt_2)\frac{2\pi}{15w^2_2}\left(\frac{2}{\sqrt{t_2^2+\rvert z_2\rvert^2}^3}-\frac{5\rvert w_2\rvert^2+2t_2^2+2\rvert z_2\rvert^2}{\sqrt{t_2^2+\rvert z_2\rvert^2+\rvert w_2\rvert^2}^5}\right)
\fe
\newline
{\bf{Lemma 8.}}
\newline
 We will evaluate 
\ie\label{secondvertexap}
\int_{v_2}dw_2\wedge dz_2\wedge d\bar z_2\wedge dt_2\frac{4\pi^2t_2 \rvert z_2\rvert^2}{75w_2\sqrt{t_2^2+\rvert z_2\rvert^2}^5}\left(\frac{2}{\sqrt{t_2^2+\rvert z_2\rvert^2}^3}-\frac{5\rvert w_2\rvert^2+2t_2^2+2\rvert z_2\rvert^2}{\sqrt{t_2^2+\rvert z_2\rvert^2+\rvert w_2\rvert^2}^5}\right).
\fe
Assuming the $w_2$ integral domain is a contour surrounding the origin of $w_2$ plane or a path that can be deformed into the contour, we may use the residue theorem for the first term of \eqref{secondvertexap}. After doing $w_2$ integral we have
\ie
\int_{\E}^\infty dt_2\int_{\bC_{z_2}}d^2z_2\frac{4\pi^2t_2 \rvert z_2\rvert^2}{75\sqrt{t_2^2+\rvert z_2\rvert^2}^5}\frac{2}{\sqrt{t_2^2+\rvert z_2\rvert^2}^3}=\frac{2\pi^3}{225\E^2}
\fe
Combining with the other diagram with the second vertex in the $t\in[-\infty,-\E
]$, we get
\ie\label{appboundd}
\frac{2\pi^3}{225\E^2}-\bigg(-\frac{2\pi^3}{225\E^2}\bigg)=\frac{4\pi^3}{225\E^2}
\fe
Re-scaling $\E\ria1$, this is finite.

For the second term of \eqref{secondvertexap}, let us choose the contour to be a constant radius circle so that $r(\theta)=R$. We need to use an unconventional version of the residue theorem, as the integrand is not a holomorphic function, depending on $\rvert w_2\rvert^2$. Let $w_2=Re^{i\theta}$, then for a given integrand $f(w_2,\bar w_2)$, we have
\ie
I=\int^{2\pi}_0d(Re^{i\theta})f(Re^{i\theta},Re^{-i\theta})
\fe
Then, $w_2$ integral is evaluated as
\ie
-\int^{2\pi}_0 \frac{d(Re^{i\theta})}{Re^{i\theta}}\frac{4\pi^2t_2 \rvert z_2\rvert^2}{75\sqrt{t_2^2+\rvert z_2\rvert^2}^5}\frac{5R^2+2t_2^2+2\rvert z_2\rvert^2}{\sqrt{t_2^2+\rvert z_2\rvert^2+R^2}^5} =-\frac{8\pi^3it_2 \rvert z_2\rvert^2}{75\sqrt{t_2^2+\rvert z_2\rvert^2}^5}\frac{5R^2+2t_2^2+2\rvert z_2\rvert^2}{\sqrt{t_2^2+\rvert z_2\rvert^2+R^2}^5}
\fe
Before evaluating $z_2$ integral, it is better to work without $R$. using the following inequality is useful to facilitate an easier integral:
\ie
0<\frac{8\pi^3it_2 \rvert z_2\rvert^2}{75\sqrt{t_2^2+\rvert z_2\rvert^2}^5}\left(\frac{5R^2+2t_2^2+2\rvert z_2\rvert^2}{\sqrt{t_2^2+\rvert z_2\rvert^2+R^2}^5}\right)<\frac{(8\pi^3it_2 \rvert z_2\rvert^2)(2t_2^2+2\rvert z_2\rvert^2)}{75(t_2^2+\rvert z_2\rvert^2)^5}
\fe
Here we used $R\in Real^+$. The left bound is obtained by $R\ria\infty$, and the right bound is obtained by $R\ria0$. We only care the convergence of the integral. Therefore, let us proceed with the inequalities.
\ie
-\frac{4\pi}{192}\frac{8\pi^3i}{75}\frac{1}{\E^3}<-\int^\infty_\E dt_2\int_{\bC_{z_2}}d^2z_2\frac{8\pi^3it_2 \rvert z_2\rvert^2}{75\sqrt{t_2^2+\rvert z_2\rvert^2}^5}\left(\frac{5R^2+2t_2^2+2\rvert z_2\rvert^2}{\sqrt{t_2^2+\rvert z_2\rvert^2+R^2}^5}\right)<0
\fe
After rescaling $\E\ria1$, we have a finite answer. Combining with the other diagram with the second vertex in the $t\in[-\infty,-\E
]$, we get the left bound as
\ie
-\frac{4\pi}{192}\frac{8\pi^3i}{75}-\bigg(\frac{4\pi}{192}\frac{8\pi^3i}{75}\bigg)=-\frac{\pi^4i}{225\E^3}
\fe
After rescaling $\E_1\ria1$, this is also finite.

Hence, combining with \eqref{appboundd} we get the bound
\ie
\frac{4\pi^3}{225\E^2}-\frac{\pi^4i}{225\E^3}< \eqref{secondvertexap}<\frac{4\pi^3}{225\E^2}
\fe
\end{appendix}

\bibliography{SciPost_Example_BiBTeX_File.bib}

\nolinenumbers

\end{document}